\documentclass[a4paper,reqno, 11pt]{amsart}
\usepackage{graphicx,color}
\usepackage{hyperref}
\usepackage{footmisc}
\usepackage{amsmath}
\usepackage{tikzsymbols}
\usepackage{amsfonts, dsfont}
\usepackage{amssymb}
\usepackage{bbm}
\usepackage{epstopdf}
\usepackage{color}
\numberwithin{equation}{section}


\let\a=\alpha \let\b=\beta  \let\g=\gamma  \let\d=\delta \let\e=\varepsilon
\let\z=\zeta  \let\h=\eta   \let\th=\theta \let\k=\kappa \let\l=\lambda
\let\m=\mu    \let\n=\nu             \let\r=\rho
\let\s=\sigma \let\t=\tau   \let\f=\varphi 
   \let\o=\omega
 \let\D=\Delta  \let\L=\Lambda

\newcommand{\VV}{{\mathcal V}}

\newcommand{\xx}{{\bf x}}

\newcommand{\bt}{{\boldsymbol{\theta}}}

\def\PP{{\mathcal P}}
\newcommand{\WW}{{\mathcal W}}
\newcommand{\LL}{{\mathcal L}}
\newcommand{\RR}{{\mathcal R}}

\newtheorem{Theorem}{Theorem}
\newtheorem*{MT}{ Main Theorem}

\newtheorem{Remark}{Remark}

\newtheorem{Assumption}{Assumption}

\newtheorem{Proposition}{Proposition}

\def\nn{\nonumber}


\def\\{\hfill\break}
\def\={:=}
\let\io=\infty
\let\0=\noindent

\def\media#1{{\langle#1\rangle}}

\let\dpr=\partial
\def\sign{{\rm sign}}

\def\tende#1{\,\vtop{\ialign{##\crcr\rightarrowfill\crcr\noalign{\kern-1pt
    \nointerlineskip} \hskip3.pt${\scriptstyle #1}$\hskip3.pt\crcr}}\,}
\def\otto{\,{\kern-1.truept\leftarrow\kern-5.truept\to\kern-1.truept}\,}

\def\to{\rightarrow}

\def\qed{\hfill\raise1pt\hbox{\vrule height5pt width5pt depth0pt}}

\def\ul#1{{\underline#1}}

\def\V#1{{\bf#1}}
\def\nn{\nonumber}

\def\be{\begin{equation}}
\def\ee{\end{equation}}
\def\bea{\begin{eqnarray}}
\def\eea{\end{eqnarray}}


\begin{document}
\title[Universality for non-integrable dimers]{Non-integrable dimers: Universal fluctuations of tilted height profiles}
\author{Alessandro Giuliani}
\address{Dipartimento di Matematica e Fisica Universit\`a di Roma Tre \\ \small{
\small{L.go S. L. Murialdo 1, 00146 Roma, Italy}}}
\email{giuliani@mat.uniroma3.it}
\author{Vieri Mastropietro}
\address{Dipartimento di Matematica, Universit\`a di Milano \\
  \small{Via Saldini, 50, I-20133 Milano, Italy }}
\email{vieri.mastropietro@unimi.it}
\author{Fabio Lucio Toninelli}
\address{Univ Lyon, CNRS,  Universit\'e Claude Bernard Lyon 1\\
\small{UMR 5208, Institut Camille Jordan,
69622 Villeurbanne cedex, France}}
\email{toninelli@math.univ-lyon1.fr}

\begin{abstract} We study a class of close-packed dimer models on the square lattice, in the 
presence of small but extensive perturbations that make them non-determinantal. Examples include 
the 6-vertex model close to the free-fermion point, and the dimer model with plaquette interaction
previously analyzed in \cite{A,AL,GMT17a,GMT17b}. By tuning the edge weights, we can impose a non-zero average tilt for the height function, 
so that the considered models are in general not symmetric under discrete rotations and reflections.
In the determinantal case, height fluctuations in the massless (or `liquid') phase scale to a Gaussian log-correlated field and their
amplitude is a universal constant, independent of the tilt. When the
perturbation strength $\lambda$ is sufficiently small we prove, by fermionic constructive
Renormalization Group methods, that log-correlations survive, with amplitude $A$ that, generically, depends non-trivially
and non-universally on $\lambda$ and on the tilt. On the other hand, $A$ satisfies a universal scaling relation 
(`Haldane' or `Kadanoff' relation), saying that it equals the anomalous exponent of the 
dimer-dimer correlation. 
\end{abstract}

\maketitle

\footnotetext{\copyright\, 2019 by the authors. This paper may be reproduced, in its
entirety, for non-commercial purposes. }

\section{Introduction}
The question of {\it universality}, that is the independence of the critical properties of macroscopic systems 
from the microscopic details of the underlying model Hamiltonian, is a central issue in
statistical physics, whose mathematical understanding is largely incomplete. 
A convenient framework where it can be studied is that of \emph{planar dimer models}, which exhibit a rich critical behavior: algebraic decay of
correlations, conformal invariance, {and so on}.  The dimer model on a
bipartite planar lattice is integrable and, more precisely, determinantal (also called `free fermionic'): its correlation functions are given by
suitable minors of the so-called inverse Kasteleyn matrix \cite{Ka}.
The model is parametrized by edge weights $\underline t$ and has a
non-trivial phase diagram. By varying $\underline t$, one can impose
an average non-zero tilt $\rho$ for the height field. A central object
of the dimer model is the so-called characteristic polynomial
$P(z,w)$, where $z,w$ are complex variables. For instance, the infinite-volume free energy is given by an
integral of $\log |P(z,w)|$ over the torus $\mathbb
T=\{|z|=|w|=1\}$. Also, the large-distance decay of correlations is
dictated by the so-called spectral curve, i.e. the algebraic curve
$\mathcal C(P)=\{(z,w)\in \mathbb C^2:P(z,w)=0\}$. When the edge
weights are such that the spectral curve intersects $\mathbb T$
transversally one is in the ``liquid'' or ``massless'' phase, where
the two-point dimer-dimer correlation of the model decays like the
inverse distance squared. Correspondingly the height field scales to a
Gaussian Free Field (GFF) and the variance grows like the logarithm of
the distance times $1/\pi^2$. Remarkably, this pre-factor is independent of
the weights $\underline t$ and of the specific choice of the bipartite periodic planar
lattice.  This is related \cite{KOS} to the fact that $\mathcal C(P)$
is a so-called Harnack curve. Summarizing, in the massless phase the
scaling limit of height fluctuations of the dimer model is universal, in a very strong sense: 
the limit is always Gaussian, with logarithmic growth of the variance; moreover, the pre-factor in front of the logarithm in the variance is independent of the
details of the underlying microscopic structure (edge weights and   lattice).

\medskip

The previous results heavily rely on the determinantal structure of
the model, but universality is believed to hold much more
generally. Motivated by this, we consider weak, translation-invariant,
perturbations of the dimer model (for simplicity, we restrict to the
square lattice).  Generically, as soon as we switch on the
perturbation, the determinantal structure provided by Kasteleyn's
theory breaks down. Two particular examples of perturbed,
non-determinantal, models that we consider are: the 6-vertex model
with general weights $a_1,\dots, a_6$, in the disordered phase, close
to, but not exactly at, the free-fermion point; and the dimer model
with plaquette interaction, originally introduced in \cite{HP} and recently reconsidered in \cite{A,AL,PLF} in the context of quantum dimer models. There
is a basic difference between  these two cases: the 6-vertex model, even
if non-determinantal, is still solvable via Bethe Ansatz (BA), see
\cite{Ba} and reference therein (note that the BA solution is not as
explicit as the Kasteleyn solution of standard dimers: only a few
thermodynamic functions can be explicitly computed).  On the other
hand, dimers with plaquette interaction are believed not to be
solvable, i.e., not even the basic thermodynamic functions admit an
explicit representation.  From the exact solution, one finds that some
of the critical exponents of the 6-vertex model depend continuously on
the vertex weights\footnote{More precisely, the limit of the critical
  exponents of the 8-vertex model as the additional vertex weights
  $a_7=a_8$ tend to zero have a non-trivial continuous dependence on
  the remaining vertex weights $a_1,\ldots,a_6$, see
  \cite[Eqs.10.12.23 and 10.12.27]{Ba}.}; they differ, in general,
from those of the standard dimer model.  On the other hand, the
existence of non-trivial critical exponents in the dimer model with
plaquette interaction, as well as in other planar models in the same
`universality class' (such as coupled Ising models, Ashkin-Teller and 8-vertex models) can be
proved by constructive Renormalization Group (RG) methods 
\cite{BFM1,BFMprl,GM,M}, which allow one to express
them as convergent power series in the interaction strength.

In this setting, it is natural to ask whether the height fluctuations
are still described by a GFF at large scales and, in case, whether the
pre-factor in front of the logarithm still displays some universal
features. The very fact that the critical exponents depend
non-trivially on the interaction strength suggest that universality
cannot then be true in the naive, strong, sense that `large-scale
properties are independent of the microscopic details of the model':
in fact, in this context, a weaker form of universality is expected,
in the form of a number of {\it scaling relations}, originally
proposed by Kadanoff \cite{K}, which allow one to determine all the
critical exponents of the critical theory in terms of just one of
them; this form of universality is often referred to as `weak
universality', see e.g. \cite[Section 10.12]{Ba}.  Support for the Kadanoff scaling relations comes from the so-called
bosonization picture, see e.g. \cite{GMT17b} for a basic
introduction. Only some  of these universality relation have been rigorously proven
\cite{BFM1,BFMprl}; an example is the identity $X_c X_e=1$
\cite[Eq.(13b)]{K}, relating the ``crossover exponent'' $X_c$
and ``energy exponent'' $X_e$, see 
\cite[Eq.(1.10)]{BFM1}. The proof in \cite{BFM1} covers both solvable and non-solvable models, but
only works for scaling relations involving the critical exponents
of the ``local observables'', i.e., those that admit a representation in
terms of a local fermionic operator. Other scaling relations, involving the
critical exponents of non-local observables (e.g. monomer-monomer correlations in
dimer models, or spin-spin correlations in the Ashkin-Teller model)
remained elusive for many years.  In particular, the relation
$X_p=X_e/4$ \cite[Eq.(13a)]{K}, relating the energy exponent $X_e$ to the
``polarization exponent'' $X_p$ in the AT model, remain{s} unproven at a
rigorous level.

\medskip

In this paper, we prove the stability of the Gaussian nature of the
height fluctuations for non-integrable perturbations of the dimer
model, with logarithmic growth of the variance in the whole liquid
region. The pre-factor $A$ in front of the logarithm
 depends, in general, non-trivially on the strength of the perturbation (see Remark \ref{rem:frattaglie} below)  and on the
 dimer weights, so it is not universal in a naive, strong, sense.
The non-trivial dependence of $A$  on the interface tilt has been also verified numerically for the 6-vertex model
\cite{Inhomo}.
Nevertheless, $A$ satisfies a scaling relation, that connects it with
the critical exponent of the dimer-dimer correlations.

\begin{MT} In a weakly perturbed dimer model with perturbation of strength $\l$, the
variance of the height difference between two faraway points grows
like the logarithm of the distance, with a pre-factor $A/\pi^2$,
where $A=1+O(\l)$ is an analytic function of $\l$ and of the dimer weights. Moreover, the {prefactor} satisfies the 
scaling relation 
\be A=\nu\label{1},\ee
where $2\nu$ is the anomalous decay exponent of the dimer-dimer correlation. Higher  cumulants of the height difference between two points are bounded uniformly in their distance, 
that is, the fluctuations of the height difference are asymptotically Gaussian. 
\end{MT}

For a more precise statement, see Theorem \ref{th:2} and the remarks and comments 
that follow it.  Note that in the {unperturbed} case, $\l=0$, the
dimer-dimer correlation decays at large distances like $(dist.)^{-2}$
in the whole liquid phase, i.e., its decay exponent is equal to $2$
(so that $\nu=1$), irrespective of the specific choice of the dimer
weights. In this case, of course, our result reduces to the one of
\cite{KOS}, $A=1$. Note also that our result covers both integrable models, such as 6-vertex, and non-integrable ones, 
in the spirit of the universality picture.

Scaling relations involving exponents and amplitudes were conjectured
by Haldane \cite{Ha} and proved by Benfatto and Mastropietro
\cite{BMdrude,BMun} in the context of quantum one-dimensional models.
Even if formulated in different notations, the scaling relation
\eqref{1} is strictly related to one of those proposed by Kadanoff, in
particular to the above-mentioned, elusive, identity $X_p=X_e/4$
\cite[Eq.(13a)]{K}.  In fact, there is a duality (called `discrete
bosonization' in \cite{Du}) between the 6-vertex model, which is part
of the class of perturbed dimer models considered in this paper, and
the AT model; the duality implies non-trivial identities between the
correlations of 6-vertex model and those of AT, see \cite[Section
2.6]{Du}. In particular, the two-point correlation of the polarization
operator in AT equals the `electric correlator'
$\media{e^{i\pi(h_x-h_y)}}_{6V}$ of the 6-vertex model, see
\cite[Section 2.6]{Du}\footnote{Here $h_x$ is the height function of
  the 6-vertex model at face $x$ and $\media{\cdot}_{6V}$ is the
  corresponding statistical average; the factor $\pi$ at the exponent
  depends on our definition of height function, which differs by a
  multiplicative factor $2\pi$ from that of \cite{Du}.}, while the
energy critical exponent of AT equals the anomalous decay exponent of
the arrow-arrow correlations of 6-vertex\footnote{In the dimer
  formulation of 6-vertex, the arrow-arrow correlations translate into
  the dimer-dimer correlations.}. Given these identities, \eqref{1}
implies that $X_p=X_E/4$ \cite[Eq.(13a)]{K}, provided that
\be\media{e^{i\pi(h_x-h_y)}}_{6V}\sim
e^{-\frac{\pi^2}{2}\media{(h_x-h_y)^2}_{6V}}\sim
e^{-\frac{A}2\log|x-y|}\label{6Vel}\ee at large distances, as
suggested by the asymptotic Gaussian behavior of the height
difference\footnote{As discussed in \cite[Remark 2]{GMT17a}, our
  method allows us to compute the average of $\exp\{i\pi(h_x-h_y)\}$
  only after coarse-graining the height difference {in the} exponent against
  a smooth test function.}.  \medskip

To prove our results, we start by periodizing the non-integrable dimer
model on the toroidal graph of size $L$. Then we map it into a system
of interacting two-dimensional lattice fermions, by rewriting its moment generating function as an integral over Grassmann variables, with non-quadratic action. At this point, we apply tools
from the so-called constructive fermionic RG to control the $L\to\infty$ limit of the correlation functions. In
particular, we need a very sharp asymptotic description of the
large-distance behavior of the dimer-dimer correlation function
(cf. Theorem \ref{th:1}). The large-scale logarithmic behavior of
height correlations, as well as the validity of the `Haldane' scaling relation \eqref{1},
rely on non-trivial identities (cf. \eqref{eq:32xl}) between the
coefficients appearing in the large-distance asymptotics of the
dimer-dimer correlation function. In turn, \eqref{eq:32xl} is the
result of so-called Ward identities, i.e. exact relations between the correlation
functions of the interacting lattice fermionic model, which the dimer model maps into. 

The analogs of Theorems \ref{th:1} and \ref{th:2} have been proven in
our previous works \cite{GMT17a,GMT17b} for the specific case of
plaquette interaction and uniform edge weights $\underline t\equiv
1$. In this case, the average tilt of the height field is just
$\rho=0$ and the model has all the discrete symmetries of the lattice
$\mathbb Z^2$. The extension to the general case, achieved
here, is non-trivial: the loss of discrete rotation and reflection
symmetries results, in the RG language, in the emergence of four new
running coupling constants (two ``Fermi velocities'' and two ``Fermi
points''), whose flow, along the multi-scale integration procedure,
has to be controlled via the choice of suitable
counter-terms. Another consequence of the loss of rotation and reflection symmetry is 
that the cancellation at the basis of the logarithmic growth of the variance 
does not follow simply from the basic symmetries of the model, as it
was the case in \cite{GMT17a,GMT17b}: the proof of the key identity,
\eqref{eq:32xl}, now requires the use of a lattice Ward Identity for
the dimer model, in combination with an emergent Ward Identity for an
effective continuum model, which plays the role of `infrared fixed
point' of the RG flow.  Quite surprisingly, the loss of rotation and
reflection symmetry plays a role also in the technical control of the
thermodynamic limit of correlations: in \cite{GMT17a,GMT17b}, in order
to simplify the analysis of the finite-size corrections to the
critical correlation functions, we first studied a modified, slightly
massive model of mass $m>0$ (the modification consisted in adding a
modulation of size $m$ on the horizontal dimer weights; in the
tilt-less case, this was enough to guarantee that the modified
correlations decayed exponentially with rate $m$), and then we took
the massless limit $m\to0$ after the thermodynamic limit. However,
this strategy fails for general dimer weights: in this case, neither a
modulation of the dimer weights nor other simple modifications of the
model produce a mass; therefore, in the present paper, we directly
derive quantitative estimates on the corrections to the thermodynamic
limit of the massless correlations, by a careful control of the
finite-size effects in the multi-scale procedure.

\subsection{Related works}
Let us conclude this introduction by mentioning some recent related
works. While most literature on dimer models focuses on the
determinantal case, there have been recently various attempts to go
beyond the exactly solvable situation \cite{Okou}.  As far as ``limit
shape phenomena'' (i.e. laws of large numbers for the height profile)
for non-solvable random interface models are concerned, let us mention
for instance \cite{5vertex,MTassy,CoSpo}. Closer in spirit to our
results is \cite{CoSpe}, which provides a central limit theorem for
height fluctuations of $\nabla\phi$-interface models with continuous
heights and strictly convex potential. This work uses the
Helffer-Sj\"ostrand formula, that is not available for discrete-height
model like the dimer model.  Let us mention also \cite{Ahn}, which
obtains convergence to the GFF (with interaction-dependent amplitude)
for a dimer model with a special non-local interaction that makes it integrable,
although not determinantal. Finally, a very interesting recent
development is \cite{BGL}: while in this work the convergence to the
GFF is proven only for the non-interacting dimer model, the method of
proof, that goes through Temperley's bijection and Wilson's algorithm
rather than via Kasteleyn's theory, might prove robust enough to allow
for extensions to some non-determinantal situations.

\subsection{Organization of the article}
The rest of this work is organized as follows.  The dimer model is
defined in Section \ref{sec:mamr}. There, we recall the large-scale
behavior of the integrable model and we state our results for the
non-integrable one.  In Section \ref{sec:grass} we give the Grassmann
representation of the interacting dimer model and its lattice Ward
identities. In Section \ref{sec:IR} we recall the continuum reference
model that plays the role of infrared fixed point of interacting
dimers. Theorems \ref{th:1}-\ref{th:2} are proven in Section
\ref{sec:proveth}, conditionally on technical results, based on the
multi-scale expansion, whose proofs are postponed to Section \ref{sec:RG}.

\section{Model and main results}
\label{sec:mamr}

\subsection{Dimers and height function}

A dimer covering, or perfect matching, of a graph $\Gamma$ is a subset
of edges that covers every vertex exactly once.  The set of dimer
coverings of $\Gamma$ is denoted $\Omega_\Gamma$. We color the
vertices of the bipartite graph $\mathbb Z^2$ black and white so that
neighboring vertices have different colors.  A white vertex is
assigned the same coordinates $x=(x_1,x_2)$ as the black vertex just at
its left.  The choice of coordinates is such that the vector
$\vec e_1$ is the one of length $\sqrt 2$ and angle $-\pi/4$ w.r.t the horizontal axis, while
$\vec e_2$ is the one of length $\sqrt 2$ and angle $+\pi/4$. The
finite graph $\mathbb T_L$ denotes $\mathbb Z^2$ periodized (with
period $L$) in both directions $\vec e_1,\vec e_2$. See Fig. \ref{fig:1}.

\begin{figure}
	\begin{center}
		\includegraphics[height=7cm]{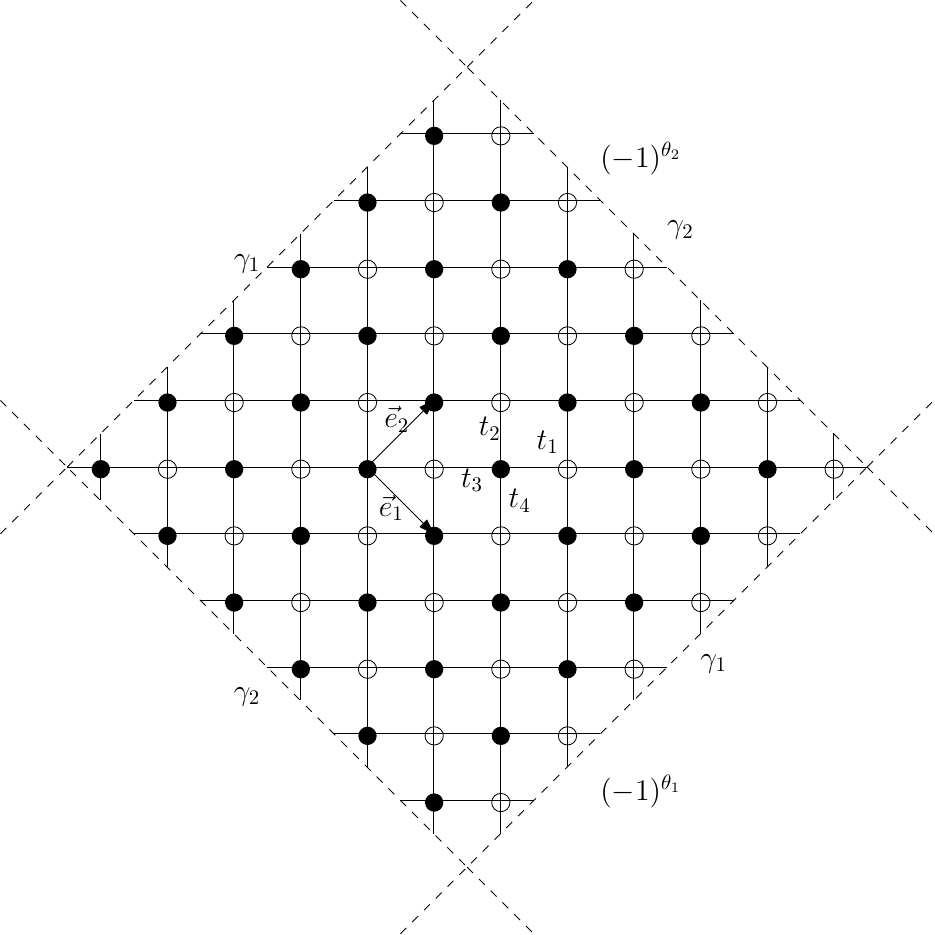}
		\caption{The graph $\mathbb T_L$ for $L=4$. The coordinate axes $\vec e_1,\vec e_2$, as well as the corresponding coordinates of some black/white vertices, are explicitly indicated. In the right drawing, the weights $t_1,\dots,t_4$ and the
                corresponding edges of types $1,\dots,4$.}
		\label{fig:1}
	\end{center}
\end{figure}

For simplicity we assume that $L$ is even. Black/white sites
are therefore indexed by coordinates
$x\in\Lambda=\{(x_1,x_2), 1\le x_i\le L\}$. An edge $e=(b,w)$ of
$\mathbb T_L$ is said to be of type $r\in\{1,2,3,4\}$ if its white
endpoint $w$ is to the right, above, to the left or below the black
endpoint $b$.  If $e=(b,w)$ is an edge of type $r$ and $x(b)$ is the
coordinate of $b$ then $x(w)=x+v_r$, with
\begin{equation}
v_1=(0,0)\quad v_2=(-1,0)\quad v_3=(-1,-1)\quad v_4=(0,-1).\label{vii}
\end{equation}
If $\Gamma$ is planar and bipartite, the height function allows us to
interpret a dimer covering as a two-dimensional discrete surface. Let us recall
the standard definition of height function for the infinite lattice
$\mathbb Z^2$.  Given $M\in\Omega_{\mathbb Z^2}$, the height function
$h(\cdot):=h_M(\cdot) $ is defined on the dual lattice
$(\mathbb Z^*)^2$, i.e. on the faces $\eta$ of $\mathbb Z^2$. We set
$h(\eta_0):=0$ at a given reference face $\eta_0$, and we let its
gradients be given by
\begin{gather}
  \label{eq:30}
  h(\eta')-h(\eta)=\sum_{e\in C_{\eta\to \eta'}}\sigma_e (\mathds 1_{e}-1/4)
\end{gather}
where $\eta,\eta'$ are any two faces, $\mathds 1_{e}$ denotes the
dimer occupancy, i.e., the indicator function that $e$ is occupied by
a dimer in $M$, while $C_{\eta\to \eta'}$ is any nearest-neighbor path
on the dual lattice $(\mathbb Z^*)^2$ from $\eta$ to $\eta'$ (the
right side of \eqref{eq:30} is independent of the choice of
$C_{\eta\to \eta'}$).  The sum runs over the edges crossed by the path
and $\sigma_e=+1/-1$ depending on whether the oriented path
$C_{\eta\to \eta'}$ crosses $e$ with the white site on the right/left.

\subsection{Definition of the   model}
We define here both the non-interacting dimer model \cite{Ka} and the interacting one. Both are probability measures on $\Omega_L:=\Omega_{\mathbb T_L}$,
denoted $\mathbb P_{L,\underline t}$ and $\mathbb P_{L,\lambda, \underline t}$ respectively, where $\lambda\in \mathbb R$ is the interaction strength and $\underline t$ are the edge weights. For lightness of notation, the index $\underline t$ will be dropped. 

\subsubsection{The non-interacting dimer model}

We assign a positive weight to each edge. More precisely, an
edge of type $r\in\{1,2,3,4\}$ is given a weight $t_r> 0$. Then, the
weight of a configuration $M\in\Omega_L$ is
\begin{eqnarray}
\label{eq:Z}
\mathbb P_L(M) =\frac{ t_1^{N_1(M)}t_2^{N_2(M)}t_3^{N_3(M)}t_4^{N_4(M)}}{Z^0_L},\\ Z^0_L=\sum_{M'\in\Omega_L} t_1^{N_1(M')}t_2^{N_2(M')}t_3^{N_3(M')}t_4^{N_4(M')}
\end{eqnarray}
with $N_i(M)$ the number of dimers on edges of type $i$ in
configuration $M$.  Since the total number of dimers is constant, we
can rescale all weights by a common factor and we will set
$t_4\equiv 1$ from now on. 
It is known that the free energy per site
has a limit  as $L\to\infty$ (the
infinite volume free energy):
\begin{eqnarray}
  \label{eq:Ft}
  F(\underline t)=\lim_{L\to\infty}\frac1{L^2} \log Z^0_L=\frac1{(2\pi)^2}\int_{[-\pi,\pi]^2}d k \log \mu(k),\\
  \mu(k)=t_1+ it_2 e^{i k_1}-t_3e^{i k_1+i k_2}-i e^{i k_2}.
\end{eqnarray}
Note that
\begin{eqnarray}
  \label{eq:simmetria1}
\m(k)=\m^*((\pi,\pi)-k).
\end{eqnarray}
The ``characteristic polynomial'' mentioned in the introduction is  $P(z,w):=\mu(-i \log z, -i \log
  w)$.

Also, the measure $\mathbb P_L$ itself has a limit $\mathbb P$ as
$L\to\infty$, in the sense that the probability of any local event
converges.  The non-interacting model is integrable, and both the
measure $\mathbb P_L$ and its limit $\mathbb P$ admit a determinantal
representation, recalled in Section \ref{sec:Ktheory}. 

In the special case where $t_1=t_3=: t$ and $t_2=1$,
i.e. assigning weight $t$ to horizontal edges and $1$ to vertical
ones, one recovers the model originally solved by Kasteleyn
\cite{Ka}. For general weights $t_1,t_2,t_3$, the
model is equivalent to Kasteleyn's model with different weights for horizontal and vertical edges, and a non-zero average slope
$\rho=\rho(t_1,t_2,t_3)\in \mathbb R^2$ for the height function, i.e.,
\begin{eqnarray}
  \label{eq:sloppa}
  \mathbb E(h(\eta+\vec e_i)-h(\eta))=\rho_i,\qquad  i=1,2,
\end{eqnarray}
where $\mathbb E$ denotes the average with respect to $\mathbb P$. 
In fact, the weights $t_i$ are  chemical potentials by which one can fix the densities of the four types of edges. Then, the slope $\rho$ is obtained as a function of the four densities using the definition \eqref{eq:30} of height function.

Another special case is obtained letting e.g. $t_3\to0$: then,  the model reduces to the closed-packed dimer
model on the hexagonal graph with weights $1,t_1,t_2$ for the three
types of edges.

Note that the condition $\mu(k)=0$ gives
\begin{eqnarray}
  e^{i k_2}=\frac{t_1+i t_2 e^{i k_1}}{i+ t_3 e^{i k_1}}
\end{eqnarray}
that determines the intersections of two circles in the complex plane. We will make the following important assumption:
\begin{Assumption}
\label{ass:liquid}
The parameters $\underline t$ are such that $\mu(\cdot)$ has two
distinct simple zeros, that we call $p^+$ and $p^-$, on $[-\pi,\pi]^2$
(i.e. the two circles intersect transversally). In view of
\eqref{eq:simmetria1}, one has $p^++p^-=(\pi,\pi)$.
\end{Assumption}
\begin{Remark}
  \label{rem:liquid}
  Note that, under Assumption \ref{ass:liquid}, none of the weights
$t_1,t_2,t_3,1$ exceeds the sum of the other three, otherwise $\mu(k)$ would vanish nowhere on $[-\pi,\pi]^2$. Note also that $p^\o,\o=\pm$ cannot coincide with any of the four values $k=(\epsilon_1\pi/2,\epsilon_2\pi/2), \epsilon_1=\pm1, \epsilon_2=\pm1$, otherwise one would have $p^+=p^-$ (modulo $(2\pi,2\pi)$).
\end{Remark}
Under  Assumption \ref{ass:liquid}, it is known \cite{KOS} that the infinite-volume
measure has power-law decaying correlations (in the language of
\cite{KOS}, the dimer model is said to be in a ``liquid phase'').
With the nomenclature of condensed matter theory, the zeros $p^\pm$ are called ``Fermi points''. 

\subsection{The interacting dimer model, and relation to the 6-vertex  model}

In order to study the effect of the breaking of integrability we
introduce interacting dimer measures of the following form:
\begin{eqnarray}
  \label{eq:BGibbs}
  \mathbb P_{L,\l}(M)= \frac{p_{L,\l}(M)}{Z_{L}}
\end{eqnarray}
where
\begin{equation}\label{tWL}
  \begin{aligned}
&  p_{L,\l}(M)=t_1^{N_1(M)}t_2^{N_2(M)}t_3^{N_3(M)} 
\, e^{\lambda W_L(M)},\\
&  Z_{L}=\sum_{M\in\Omega_L}p_{L,\l}(M)
  \end{aligned}\end{equation}
and the interaction potential $W_L$ is given as
\begin{eqnarray}
  \label{eq:WL}
  W_L(M)=\sum_{x\in\Lambda}f(\tau_x M),
\end{eqnarray}
where $f$ is some fixed local function of the dimer configuration and
$\tau_x M$ denotes the configuration $M$ translated by $x_1 \vec e_1+x_2\vec e_2$.  We \emph{do not}
require $f(\cdot)$ to be symmetric under reflections or rotation by
$\pi/4$.

Let us mention two interesting particular examples of
  interaction $W_L(M)$. The first one is the plaquette interaction
  that was considered in our works \cite{GMT17a,GMT17b} and previously in the
  theoretical physics literature \cite{A} in the context of quantum dimer models. Namely,
  \begin{eqnarray}
    \label{eq:placchetta}
    W_L(M)=\sum_{\eta\in\mathbb T^*_L}{\bf 1}_\eta(M)
  \end{eqnarray}
  where the sum runs over all faces of $\mathbb T_L$ and
  ${\bf 1}_\eta(M)$ is the indicator function that two of the four
  edges surrounding $\eta$ are occupied by dimers. In this case the
  function $f$ in \eqref{eq:WL} is
  \begin{eqnarray}
    \label{eq:plaq}
f_P(M)=\mathds 1_{e_1}\mathds 1_{e_2}+\mathds 1_{e_3}\mathds 1_{e_4}+\mathds 1_{e_1}\mathds 1_{e_5}+\mathds 1_{e_6}\mathds 1_{e_7}
  \end{eqnarray}
with $e_1,\dots,e_7$ as in Fig. \ref{fig:esempi}.
\begin{figure}
	\begin{center}
		\includegraphics[height=3cm]{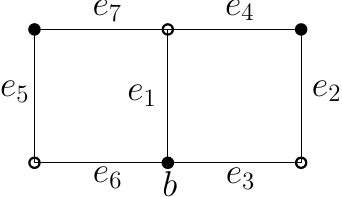}
		\caption{The edges appearing in \eqref{eq:plaq}.}
		\label{fig:esempi}
	\end{center}
\end{figure}

Another important example is
\begin{eqnarray}
  \label{eq:6v}
  f_{6v}(M):=\mathds 1_{e_1}\mathds 1_{e_2}+\mathds 1_{e_3}\mathds 1_{e_4}.
\end{eqnarray}
In this case, the interaction $W_L(M)$ in \eqref{eq:placchetta} is
modified in that the sum runs only over one of the two sub-lattices of
$\mathbb T_L^*$ (the subset of faces with black top-right
vertex). Then, it is known that this interacting dimer model is
equivalent to the 6-vertex model \cite{Ba1, EKLP, F}. Recall that configurations
of the 6-vertex model are assignments of orientations (arrows) to the
edges of $\mathbb Z^2$ such that at each vertex there are two incoming
and two outgoing arrows. There are 6 possible arrow configurations at
any vertex, each being assigned a positive weight $a_1,\dots,a_6$ (see
Fig. \ref{fig:6v1}) and the weight of a configuration is the product
of the weights over all vertices.
\begin{figure}
	\begin{center}
		\includegraphics[height=1.7cm]{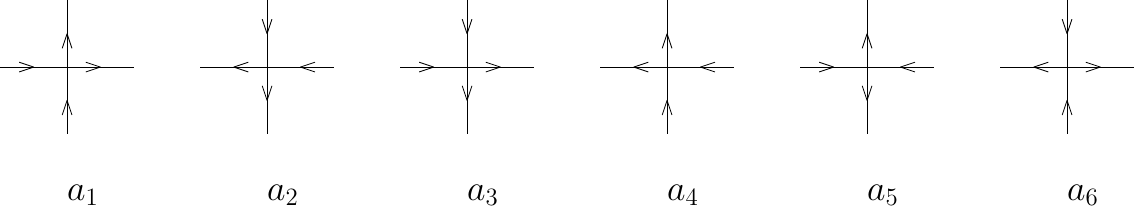}
		\caption{The six possible vertex configurations of the 6-vertex model and the associated weights.}
		\label{fig:6v1}
	\end{center}\end{figure}
      By multiplying all weights by a common factor, one can reduce
      e.g.  to $a_3=1$. Moreover, on the torus, the
      number of vertices of type $5$ equals the number of vertices of
      type $6$, so one can set without loss of generality $a_5=1$. One
      is left with four positive weights $a_1,a_2,a_4,a_6$ and the
      model can be mapped to the interacting dimer model with weights $t_1,t_2,t_3$, 
      interaction \eqref{eq:6v} and interaction parameter $\l$ such
      that
      \begin{eqnarray}
        \label{eq:6vpesi}
t_1=a_1,\;t_2=a_4,\;t_3=a_2, \;(t_1 t_3+t_2)e^\l=a_6.        
      \end{eqnarray}
More precisely, as in Fig. \ref{fig:6v3}, the dimer model lives on a
square grid rotated by $45$ degrees w.r.t. the lattice of the 6-vertex model.
      \begin{figure}
	\begin{center}
		\includegraphics[height=6cm]{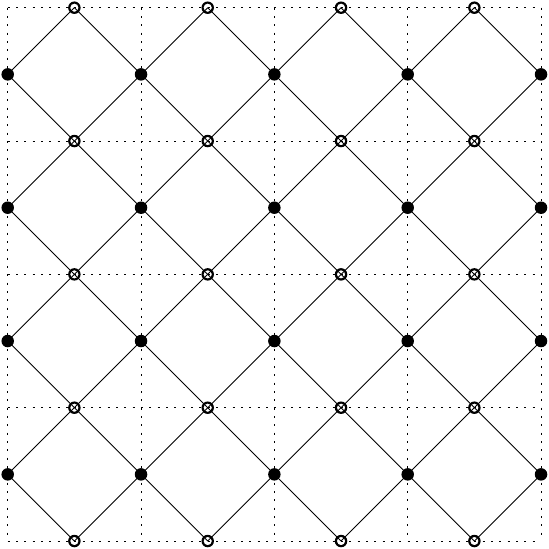}
		\caption{The 6-vertex model lives on the square grid
                  $\mathcal G_{6v}$ with dotted edges, while the dimer
                  model lives on the square grid $\mathcal G_{d}$ with full
                  edges. Faces of $\mathcal G_{d}$ containing a vertex
                  of $\mathcal G_{6v}$ are called ``even faces''
                  and the others ``odd faces''. }
		\label{fig:6v3}
	\end{center}\end{figure}
      The mapping is obtained by associating to the arrow
      configuration at a vertex $x$ of $\mathcal G_{6v}$ a dimer
      configuration at the even face of $\mathcal G_{d}$ containing
      $x$, as in Fig. \ref{fig:6v2}. 
      \begin{figure}
	\begin{center}
		\includegraphics[height=3.7cm]{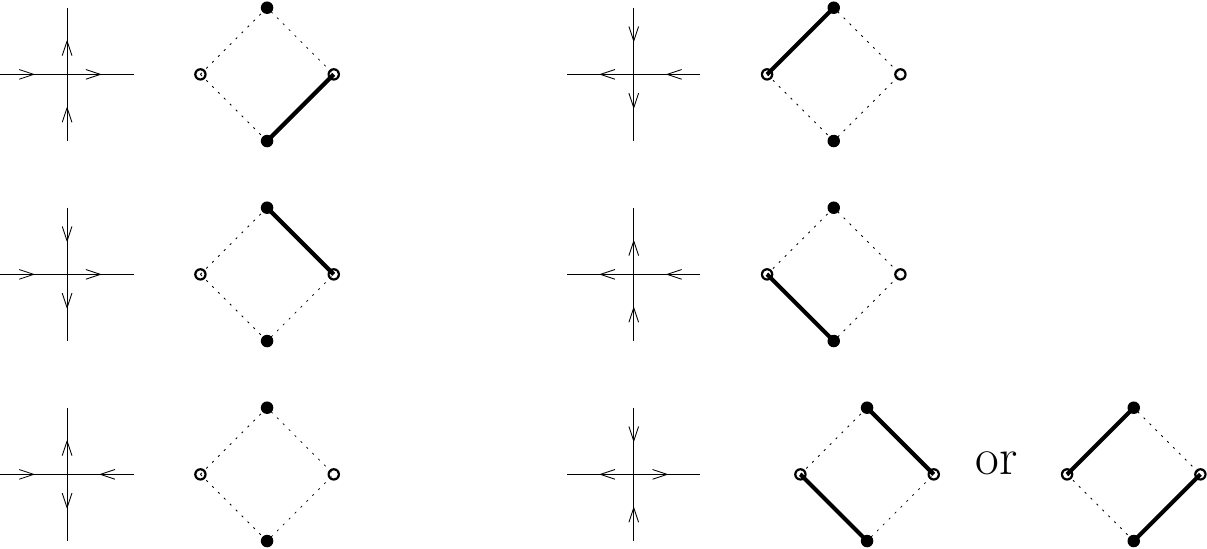}
		\caption{The local arrow-to-dimer mapping.}
		\label{fig:6v2}
	\end{center}\end{figure}
      The map is one-to-many because arrow configurations of type $6$
      are mapped to two possible dimer configurations. However, it is
      easily checked that the partition functions of the two models
      are equal provided the parameters are identified as in
      \eqref{eq:6vpesi}. Moreover, the height function of the dimer
      model, restricted to odd faces of $\mathcal G_d$, equals (up to
      a global prefactor) the canonical height function of the
      6-vertex model \cite{vB}.  The 6-vertex model is known to be
      free-fermionic (i.e. determinantal) if and only if
      \begin{eqnarray}
        \label{eq:Delta6v}
        \Delta:=\frac{a_1 a_2+a_3 a_4-a_5a_6}{2\sqrt{a_1 a_2 a_3 a_4}}=0.
      \end{eqnarray}
      It is immediately checked that this condition is equivalent to $\lambda=0$ for the interacting dimer model.

\subsection{Non-interacting model: dimer-dimer correlations and logarithmic height fluctuations}
\label{sec:recall}
It is known \cite{KOS} that, under the infinite-volume measure
$\mathbb P$ of the non-interacting model, dimer-dimer correlations
decay like the inverse distance squared and the height field behaves
on large scales like a massless Gaussian field. We briefly recall the basic facts here, since they serve to motivate our main result for the interacting dimer model.
For $\o=\pm$, we let
\begin{align}
   \label{eq:alphabeta}
   \alpha_\o&=\partial_{k_1}\mu(p^\o)=-t_2e^{i p^\o_1}-i t_3 e^{i(p^\o_1+p^\o_2)}=-i t_1- e^{i
   p^\o_2}, \\
   \beta_\o&=\partial_{k_2}\mu(p^\o)=-i t_3 e^{i (p^\o_1+p^\o_2)}+e^{i p^\o_2}=-i t_1+t_2e^{i p^\o_1},
 \end{align}
 where $p^\pm$ are the two zeros of $\mu(\cdot)$, as in Assumption \ref{ass:liquid}. (The complex numbers $\alpha_\o,\beta_\o$ are called ``Fermi velocities'' in the jargon of condensed matter.)
Define also 
\begin{eqnarray}
  \label{eq:Phi}
  \phi_\o:x\in \mathbb R^2\mapsto \phi_\o(x):=\o(\beta_\o x_1-\alpha_\o x_2)\in \mathbb C.
\end{eqnarray}

\begin{Remark}
  \label{rem:platonico}
  Under Assumption \ref{ass:liquid} on the weights $\underline t$, the complex numbers $\a_\o$ and $\b_\o$ are not colinear,
as elements of the complex plane \cite{KOS}, i.e. $\alpha_\o/\beta_\o$ is not real. Therefore, $\phi_\o$ is a bijection from $\mathbb R^2$ to the complex plane.
More precisely, one has that
\begin{eqnarray}
  \label{eq:sign}
  {\rm Im}(\b_+/\a_+)>0.
\end{eqnarray}
In fact, parametrize the weights $t_1,t_2,t_3$ as
\[
  t_1=t e^{-B_1},\quad t_2=e^{-B_1-B_2},\quad t_3=t e^{-B_2},\quad B_1,B_2\in\mathbb R.
\]
For $B_1,B_2=0$ it is immediately checked that
$p^+=(0,0),p^-=(\pi,\pi)$ and that \eqref{eq:sign} holds. On the other
hand, once $t$ is fixed, it is known \cite{KOS} that the set
$\mathcal B_t$ of values of $B=(B_1,B_2)$ for which Assumption
\ref{ass:liquid} holds is a connected subset of $\mathbb R^2$ on which
${\rm Im}(\b_+/\a_+)$ vanishes nowhere, and it is therefore everywhere
positive.
\end{Remark}
Because of the symmetry \eqref{eq:simmetria1}, one has  $\a_\o=-\a^*_{-\o}$, $\b_{\o}=-\b^{*}_{-\o}$ and $\phi_\o^*(\cdot)=\phi_{-\o}(\cdot)$. 

The relation between the massless Gaussian field and the height
function is given by the following results. Let $n$ be an integer and
$\eta_j,j\le 2n$ be faces of $\mathbb Z^2$.  With some abuse of
notation, we identify a face $\eta$ with its mid-point. Then,
\begin{eqnarray}
\label{eq:35}
  &&\mathbb E\left[(h(\eta_1)-h(\eta_2));(h(\eta_3)-h(\eta_4))
     \right]\\
  &&\quad =\frac1{2\pi^2}
     \Re\log \left(\frac{(\phi_+(\eta_4)-\phi_+(\eta_1))(\phi_+(\eta_3) - \phi_+(\eta_2))}{(\phi_+(\eta_4)-\phi_+(\eta_2))(\phi_+(\eta_3)-\phi_+(\eta_1))}
     \right)\nonumber\\
  &&\quad +O\left(\frac1{\min_{i\ne j\le 4}|\eta_i-\eta_j|+1}\right)\nonumber
\end{eqnarray}
where $\phi_+(\eta_i)-\phi_+(\eta_j)$ should be read as $1$ in
case $\eta_i=\eta_j$. Also, for $n>2$
\begin{multline}
  \label{eq:cum}
  \mathbb E\left[(h(\eta_1)-h(\eta_2));\dots; (h(\eta_{2n-1})-h(\eta_{2n}))
  \right]\\=O\left(\frac1{\min_{i\ne j\le 2n}|\eta_i-\eta_j|+1}\right)
\end{multline}
where $\mathbb E(X_1;\dots;X_k)$ denotes the joint cumulant of the random variables $X_1,\dots,X_k$.
In particular, as ${|\eta_1-\eta_2|\to\infty}$,
\begin{eqnarray}
  \label{eq:varni}
   {\rm Var}_{\mathbb P}(h(\eta_1)-h(\eta_2))=\frac{1}{\pi^2}\Re\log (\phi_+(\eta_1)-\phi_+(\eta_2))+O(1)
\end{eqnarray}
while the cumulants of order $n\ge3$ of $(h(\eta_1)-h(\eta_2))$ are bounded from above, uniformly in $\eta_1,\eta_2$. It is well known that
\eqref{eq:35} and \eqref{eq:cum} imply that the height field tends, in
the scaling limit, to a GFF with covariance
\be -\frac1{2\pi^2}\Re\log(\phi_+({ x})-\phi_+({y})),\ee
{with $x,y\in\mathbb R^2$.}
For \eqref{eq:35} see
\cite{KOS} and for \eqref{eq:cum} see e.g. \cite[Th. 5]{GMT17a} (in
\cite{GMT17a} the weights $t_i$ are all $1$ and
$\eta_1=\eta_3=\dots \eta_{2n-1}, \eta_2=\eta_4=\dots=\eta_{2n}$; the
proof of \eqref{eq:cum} in the general case works the same
way).
\begin{Remark}
  Note that the prefactor $1 /\pi^2$ is independent of the
  weights $\underline t$.  In \cite{KOS}, such universality is related
  to the fact that the spectral curve, i.e. the algebraic curve
  defined by the zeros on $\mathbb C^2$ of the polynomial
  $P(z,w):=\mu(-i \log z,-i\log w)$, is a so-called Harnack curve.
\end{Remark}

It is useful to recall the key points of the proof of
\eqref{eq:35}  in order to understand the main new
features posed by the presence of the interaction.
From the definition of height function,
\begin{multline}
  \label{eq:sumdd}
\mathbb E\left[(h(\eta_1)-h(\eta_2));(h(\eta_3)-h(\eta_4))
\right]\\= \sum_{e\in C_{\eta_1\to \eta_2}} \sum_{e'\in C_{\eta_3\to \eta_4}}\sigma_e
  \sigma_{e'} \mathbb E(\mathds 1_e;\mathds 1_{e'})
\end{multline}
where $\mathbb E(\mathds 1_e;\mathds 1_{e'})$ is the dimer-dimer correlation
function
\[ \mathbb E(\mathds 1_e;\mathds 1_{e'}):= \mathbb E(\mathds 1_e\mathds 1_{e'})- \mathbb
  E(\mathds 1_e)\mathbb E(\mathds 1_{e'}).
\]
This correlation function has an exact expression involving the
inverse Kasteleyn matrix of the infinite lattice; at large distances,
it can be expressed as
  \begin{eqnarray}
    \label{eq:dd}
&&    \mathbb E(\mathds 1_e;\mathds 1_{e'})= A_{r,r'}  (x,x')+B_{r,r'}(x,x')+R_{r,r'}(x,x'),\\
&&A_{r,r'}(x,x')=\frac1{4\pi^2}\sum_{\o=\pm}
\frac{K_{\o, r} K_{\o, r'}}{(\phi_\o(x-x'))^2}\nonumber\\
&&B_{r,r'}(x,x')=\frac1{4\pi^2}
\sum_{\o=\pm} \frac{K_{-\o, r} K_{\o,r'}
}{|\phi_\o(x-x')|^2} e^{i (p^\o-p^{-\o})\cdot(x-x')}
   \nonumber
  \end{eqnarray}
  where:
\begin{itemize}
\item the edge $e$ (resp. $e'$) is of type $r=r(e)$ (resp. $r'=r(e')$) and the coordinate of its black endpoint is $x=x(e)$ (resp. $x'=x(e')$);
\item   $K_{\o,r}= K_r e^{-i p^\o\cdot v_r}$ (see \eqref{vii} for the definition of $v_i$) with
\begin{eqnarray}
\label{eq:Kr} K_1=t_1,\quad K_2=i t_2, \quad K_3=-t_3, \quad K_4=-i;
\end{eqnarray}
note that $K_{-\o,r}=K^*_{\o,r}$.
\item  $R_{r,r'}(x,x')$ is a remainder, decaying like $|x-x'|^{-3}$ at large distance.
\end{itemize}
{By plugging \eqref{eq:dd} into \eqref{eq:sumdd}, one obtains, letting $d:=\min_{i\neq j}|\eta_i-\eta_j|$, 
\begin{eqnarray}\label{5:13}
\eqref{eq:sumdd}&=&
\frac1{4\pi^2} \sum_{\o=\pm} \sum_{\substack{e\in C_{\eta_1\to \eta_2}\\ e'\in C_{\eta_3\to \eta_4}}}\sigma_e
  \sigma_{e'} \Big[\frac{K_{\o, r} K_{\o, r'}}{(\phi_\o(x-x'))^2}\\
  &&+\frac{K_{-\o, r} K_{\o,r'}
}{|\phi_\o(x-x')|^2} e^{i (p^\o-p^{-\o})\cdot(x-x')}\Big]+O\Big(\frac{1}{d+1}\Big),\nonumber
\end{eqnarray}
where the $O(\frac1{d+1})$ in the second line comes from the remainder $R_{r,r'}(x,x')$ (in order to prove this error bound, one 
needs to choose the paths $C_{\eta_1\to \eta_2}$, $C_{\eta_3\to \eta_4}$ to be as `well separated' as possible, cf. with \cite[Fig.3]{GMT17a}
and the discussion in  \cite[Sect.3.2]{GMT17a}). }

{In order to conveniently rewrite the double sum over $e,e'$ in \eqref{5:13},
we assume for simplicity that the paths
$C_{\eta_1\to \eta_2},C_{\eta_3\to \eta_4}$ are a concatenation of
elementary steps in direction $\pm \vec e_1$ and $\pm \vec e_2$,
connecting faces of the same parity: e.g., assume that an elementary
step $s(x,1)$ in direction $+\vec e_1$ `centered at $x$' consists in
crossing the two bonds $(\bullet,\circ)=(x,x+v_3)$ and
$(\bullet,\circ)=(x,x+v_4)$ with the white vertex on the right, while
an elementary step $s(x,2)$ in direction $+\vec e_2$ centered at $x$
consists in crossing the two bonds $(\bullet,\circ)=(x,x)$ and
$(\bullet,\circ)=(x-v_4,x)$ with the white vertex on the right. A
simple but crucial observation is that
\begin{eqnarray}\label{eq:32x}
&&\hskip-1.truecm\sum_{e\in s(x,1)} \s_e K_{\o,r(e)}
= K_{3}e^{-i p^\o v_3}  +K_{4}e^{-i p^\o v_4}
=-i \b_\o=
-i \o\D_1\phi_\o\\
&&\hskip-1.truecm\sum_{e\in s(x,2)} \s_e K_{\o,r(e)}
= K_1e^{-ip^\o v_1}+K_4e^{-ip^\o v_4}
=i\a_\o=
-i \o\D_2\phi_\o\label{eq:32x.1}
\end{eqnarray}
where $\Delta_j\phi_\o$ denotes the discrete gradient in direction
$\vec e_j$ of the linear function $\phi_\o$ defined in \eqref{eq:Phi}. Therefore, by using these identities, we can rewrite the right side of \eqref{5:13}, up to the error $O(\frac1{d+1})$, as
\begin{eqnarray}\label{5:14}
&&-\frac1{4\pi^2} \sum_{\o=\pm} \sum_{\substack{s(x,j)\in C_{\eta_1\to \eta_2}\\ s(x',j')\in C_{\eta_3\to \eta_4}}}\frac{\Delta_j\phi_\o(x)\Delta_{j'}\phi_\o(x')}{(\phi_\o(x-x'))^2}\\
&&+\frac1{4\pi^2} \sum_{\o=\pm} \sum_{\substack{s(x,j)\in C_{\eta_1\to \eta_2}\\ s(x',j')\in C_{\eta_3\to \eta_4}}}\frac{\Delta_j\phi_{-\o}(x)\Delta_{j'}\phi_\o(x')}{|\phi_\o(x-x')|^2} e^{i (p^\o-p^{-\o})\cdot(x-x')}.\nonumber
\end{eqnarray}
Note that, since $p^\pm $ are distinct by assumption, the summand in the second line is genuinely oscillating. By using the oscillations, one finds 
that the contribution from the second line can be bounded by $O(\frac1{d+1})$, like the one from the sum over $R_{r,r'}(x,x')$; cf. with the discussion 
in \cite[Sect.3.2]{GMT17a}. We are left with the term in the 
first line, which is the Riemann sum approximation of the integral in the complex plane
\begin{equation}
\label{eq:35bis}
 - \frac1{2\pi^2}\Re\int_{\phi_+(\eta_1)}^{\phi_+(\eta_2)}dz\int_{\phi_+(\eta_3)}^{\phi_+(\eta_4)}dz' \frac {1
    }{(z-z')^2}
\end{equation}
whose explicit evaluation gives the main term in the r.h.s. of \eqref{eq:35}.}

\subsection{The interacting case: main results}

In the presence of the interaction, $\l\not=0$, Kasteleyn theory is not valid anymore, 
so that one cannot rely on an explicit computation of the dimer correlations to check
the validity of the asymptotic Gaussian behavior of the height function. 
However, dimer correlations can be written as a renormalized expansion
based on multiscale analysis.
From now on, we will assume that the interaction is small:
\begin{eqnarray}
  \label{eq:ll0}
  |\lambda|\le \e
\end{eqnarray}
and all claims above hold if $\e$ is small enough (uniformly in $L$).

Our first result is:
\begin{Theorem}\label{th:1}  Given a local function $g$ of the dimer configuration, the limit
  \begin{eqnarray}
    \label{eq:exlim}
    \mathbb E_\l(g):=\lim_{L\to\infty} \mathbb E_{L,\l}(g)
  \end{eqnarray}
  exists. The infinite-volume dimer-dimer correlations are given by
  \begin{eqnarray} &&\label{eq:32xx}
  \mathbb E_{\l}(\mathds 1_{e};\mathds 1_{e'})=  \bar A_{r,r'}(x,x')+\bar B_{r,r'}(x,x')+\bar R_{r,r'}(x,x')\label{eq:32a}\\
  &&\bar A_{r,r'}(x,x')=\frac1{4\pi^2} \sum_{\o=\pm} \frac{\bar K_{\o, r}
    \bar K_{\o, r'}
  }{\bar \phi_\o(x-x')^2}\label{eq:AAl}   \\
  &&\bar B_{r,r'}(x,x')=\frac1{4\pi^2} \sum_\o \frac{\bar H_{-\o, r}
    \bar H_{\o, r'} }{|\bar \phi_\o(x-x')|^{2\n} } e^{i
    (\bar p^\o-\bar p^{-\o})\cdot(x-x')} \label{eq:BBl} \end{eqnarray}
where:
\begin{itemize}
\item $r=r(e)$ is the type of the edge $e$, $x=x(e)$ is the coordinate of the black site of $e$, and similarly for $r',x'$;
  
\item $\bar \phi_\o(x)=\o(\bar \beta_\o x_1-\bar \alpha_\o x_2)$;
  
\item one has
  \begin{eqnarray}
\label{eq:analitiche}
&&\n=1+O(\l)\in \mathbb R,\label{nu1+O}\\ && \bar K_{\o, r}=K_{\o, r}+O(\l)\in \mathbb C,\quad \bar H_{\o, r}= K_{\o, r}+O(\l)\in\mathbb C\label{KorHor1}\\
&&\label{eq:alphabar}
   \bar \a_\o=\alpha_\o+O(\l)\in \mathbb C,\quad \bar\beta_\o=\beta_\o+O(\l)\in \mathbb C,\\
&&\label{eq:pbaro}\bar p^\o= p^\o+O(\l)\in [-\pi,\pi]^2;
\end{eqnarray}
these are all analytic functions of $\lambda$ and satisfy the symmetries
\begin{eqnarray}
  \label{eq:symmetries}
  &&\bar  \alpha_\o^*=-\bar\alpha_{-\o},\quad \bar\beta_\o^*=-\bar\beta_{-\o},\\
  &&\bar K_{\o, r}^*=\bar K_{-\o,r},\quad \bar H_{\o, r}^*=\bar H_{-\o,r}\label{KorHor2}
 \\ && \bar p^++\bar p^-=(\pi,\pi)\label{eq:pbaro2}.
\end{eqnarray}
\end{itemize}
Finally, $\bar R_{r,r'}(x,x')=O(|x-x'|^{-5/2})$ (the exponent $5/2$ could be replaced by any $\d<3$ provided $\l$ is small enough). 
\end{Theorem}
[A warning on notation: given a quantity (such as $\alpha_\o,\phi_\o$)
referring to the non-interacting model, the corresponding
$\lambda$-dependent quantity for the interacting model will be
distinguished by a bar, such as $\bar\alpha_\o,$ etc. On the other
hand, we denote by $z^*$ the complex conjugate of a number $z$.]

Note that the interaction modifies the decay rate of the correlation,
producing a non-trivial (`anomalous') critical exponent $\nu$.  The
analytic functions appearing in \eqref{eq:analitiche} are expressed as
convergent power series but, due to the complexity of the expansion,
the coefficients can be explicitly evaluated only at the lowest
orders. This makes impossible to verify directly the validity of
relations like \eqref{eq:32x}, which were essential for the proof of
large-scale Gaussian behavior of the height field in the
non-interacting case.  However, we can prove non-perturbatively that
the parameters appearing in \eqref{eq:32xx} are not independent, but
related by exact relations, which are the central result of the
present work:
\begin{Theorem} \label{th:2} One has
\begin{equation}\label{eq:32xl}
\sum_{e\in s(x,j)}
 \sigma_e \bar K_{\o,r(e)}=-i\o\sqrt{\n}\D_j\bar \phi_\o,
\end{equation}
where  $\nu=\nu(\l)$ is the same as the critical exponent in Theorem
\ref{th:1}.
Here, $s(x,j)$ is the elementary step in direction $+\vec e_{j}$
centered at $x$, thought of as a collection of two bonds, as defined
before \eqref{eq:32x}. 
As a consequence,
\begin{multline}
\label{eq:35l}
\mathbb E_\l\left[(h(\eta_1)-h(\eta_2));(h(\eta_3)-h(\eta_4))
\right]\\=\frac \nu{2\pi^2}
    \Re\log \left(\frac{(\bar \phi_+(\eta_4)-\bar \phi_+(\eta_1))(\bar \phi_+(\eta_3) - \bar \phi_+(\eta_2))}{(\bar \phi_+(\eta_4)-\bar \phi_+(\eta_2))(\bar \phi_+(\eta_3)-\bar \phi_+(\eta_1))}
    \right)\\
+O\left(\frac1{\min_{i\ne j\le 4}|\eta_i-\eta_j|^{1/2}+1}\right)
\end{multline}
(the exponent $1/2$ could be replaced by any $\d<1$ provided $\l$ is
small enough; as in \eqref{eq:35}, when $\eta_i=\eta_j$,
$\bar \phi_+(\eta_i)-\bar \phi_+(\eta_j)$ has to be read as $1$).
\end{Theorem}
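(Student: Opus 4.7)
My plan is to deduce \eqref{eq:35l} from Theorem \ref{th:1} and the identity \eqref{eq:32xl}, essentially following the strategy leading from \eqref{eq:sumdd} to \eqref{eq:35} in the non-interacting case, and to prove \eqref{eq:32xl} separately by combining a lattice Ward identity for the dimer model with an emergent continuum Ward identity for the infrared fixed point model of Section \ref{sec:IR}.

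For the first (essentially deterministic) half I start from
\[
\mathbb E_\l[(h(\eta_1)-h(\eta_2));(h(\eta_3)-h(\eta_4))]=\sum_{e\in C_{\eta_1\to\eta_2}}\sum_{e'\in C_{\eta_3\to\eta_4}}\sigma_e\sigma_{e'}\,\mathbb E_\l(\mathds 1_e;\mathds 1_{e'}),
\]
choose both paths as concatenations of the two-edge elementary steps $s(x,j)$ defined before \eqref{eq:32x}, and insert the decomposition \eqref{eq:32a}. The remainder $\bar R$ decays as $|x-x'|^{-5/2}$ and yields a contribution $O(\min_{i\ne j}|\eta_i-\eta_j|^{-1/2})$ after summation along both paths; the piece $\bar B$ carries the phase $e^{i(\bar p^\o-\bar p^{-\o})\cdot(x-x')}$, which is genuinely oscillating by Assumption \ref{ass:liquid}, Remark \ref{rem:liquid} and \eqref{eq:pbaro2}, so discrete summation by parts in the step variables makes its contribution subdominant as well. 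The $\bar A$-part is the leading one, and here \eqref{eq:32xl} enters: the inner sum over the two edges of each step $s(x,j)$ is replaced by $-i\o\sqrt\nu\,\Delta_j\bar\phi_\o$, so the double sum becomes a Riemann sum for
\[
-\frac\nu{4\pi^2}\sum_{\o=\pm}\Re\int_{\bar\phi_\o(\eta_1)}^{\bar\phi_\o(\eta_2)}dz\int_{\bar\phi_\o(\eta_3)}^{\bar\phi_\o(\eta_4)}dz'\,\frac{1}{(z-z')^2},
\]
with discretization error of the stated order. Since $\bar\phi_-=\bar\phi_+^*$, the two $\o$-terms are complex conjugate of each other, the sum on $\o$ collapses into a single $\Re$-term, and the explicit evaluation via the antiderivative $\log(z-z')$ reproduces the cross-ratio in \eqref{eq:35l}.

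The main obstacle is the identity \eqref{eq:32xl} itself. Unlike in \eqref{eq:32x}--\eqref{eq:32x.1}, the quantities $\bar K_{\o,r},\bar\a_\o,\bar\b_\o,\nu$ are only available as convergent series in $\l$ whose coefficients are not explicitly computable beyond the first few orders, so direct verification is hopeless. My strategy is to read \eqref{eq:32xl} as an identity between the amplitudes that govern the large-distance asymptotics of current-current correlators, and to pin those amplitudes down through two complementary Ward identities. First, the dimer model admits an exact lattice Ward identity: the two-edge combination appearing in \eqref{eq:32xl} is, up to a constant, a discrete conserved current, so that telescoping along closed paths yields exact identities for its correlators; matching the leading singularity of these identities with Theorem \ref{th:1} forces $\sum_{e\in s(x,j)}\sigma_e\bar K_{\o,r(e)}=-i\o\,c(\l)\,\Delta_j\bar\phi_\o$ for a scalar $c(\l)$ with $c(0)=1$. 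Second, the infrared fixed point is a continuum Dirac-type fermionic model (Section \ref{sec:IR}) that enjoys an exact chiral/vertex Ward identity pinning down the relation between its renormalized Fermi velocities, vertex amplitudes and anomalous exponent, thereby forcing $c(\l)^2=\nu$, and hence $c(\l)=\sqrt\nu$ by continuity in $\l$. The delicate technical step is matching the lattice and continuum Ward identities without spurious renormalization of the coefficient $\sqrt\nu$: this is the non-renormalization of the chiral anomaly, which in the constructive RG framework must be tracked scale by scale through the multi-scale integration via a careful control of the irrelevant corrections.
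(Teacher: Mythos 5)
Your overall strategy matches the paper's: combine a lattice Ward identity with the emergent chiral Ward identity of the infrared reference model to pin down the amplitude in \eqref{eq:32xl}, and then deduce \eqref{eq:35l} by a Riemann-sum argument exactly parallel to the non-interacting case. The second half of your plan (paths of elementary steps, the $\bar B_{r,r'}$ term killed by oscillation, the $\bar R_{r,r'}$ term subdominant, and the $\bar A_{r,r'}$ term approximating \eqref{eq:35tetris}) is essentially what the paper does. However, the way you organize the proof of the key identity \eqref{eq:32xl} has a genuine gap and a misstatement of the mechanism.

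First, your two-step split — \emph{lattice} WI alone gives $\sum_{e\in s(x,j)}\sigma_e\bar K_{\o,r(e)}=-i\o\,c(\l)\,\Delta_j\bar\phi_\o$ for one scalar $c(\l)$, then \emph{continuum} WI alone gives $c(\l)^2=\nu$ — does not hold as stated. Establishing that the same proportionality constant works for both $j=1$ and $j=2$ is precisely what requires comparing the lattice identity \eqref{eq:WIF} (transported to the continuum amplitudes through the asymptotic identification \eqref{h10ab}--\eqref{h10a}) with the chiral WI \eqref{h11} \emph{simultaneously}, and then using the relation \eqref{G21oo} between the two chirality components to turn the comparison into a solvable linear system for the combinations $-i\sum_{r=2}^4\hat K_{\o,r}(v_r)_j$. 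Neither WI alone determines the proportionality. Moreover, $\sqrt\nu$ does not appear by taking the square root of a continuum constraint $c(\l)^2=\nu$: it arises linearly once you pass from the vertex-function amplitudes $\hat K_{\o,r}$ (fixed by the WI comparison to satisfy $-i\sum_{r=2}^4\hat K_{\o,r}(v_r)_j = Z(1-\tau)\bar\alpha_\o$, resp. $Z(1-\tau)\bar\beta_\o$) to the dimer-dimer amplitudes via $\bar K_{\o,r}=\hat K_{\o,r}/(Z\sqrt{1-\tau^2})$, cf. \eqref{klbark}, combined with $\nu=(1-\tau)/(1+\tau)$.

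Second, and more concretely, you do not address a necessary ingredient: the Ward-identity comparison only constrains $\bar K_{\o,r}$ for $r=2,3,4$ (because those are the $r$ appearing in \eqref{eq:WIF}), giving $\bar K_{\o,3}+\bar K_{\o,4}=-i\sqrt\nu\,\bar\beta_\o$ directly, which is \eqref{eq:32xl} for $j=1$; but $s(x,2)$ involves edges of types $1$ and $4$, so you need $\bar K_{\o,1}+\bar K_{\o,4}=i\sqrt\nu\,\bar\alpha_\o$. To get this you must establish $\sum_{r=1}^4\bar K_{\o,r}=0$, which is a separate nontrivial identity. The paper proves it from the other lattice WI \eqref{3.37} combined with \eqref{h10ab}--\eqref{h10a}, \eqref{G21oo}, the behavior of the continuum correlators as the momenta vanish (cf. \eqref{eq:Ghat2R}--\eqref{eq:Ghat21R}), and, crucially, the fact that $\alpha_\o/\beta_\o\notin\mathbb R$ (so that limits of $\bar D_\o(p)/\bar D_{-\o}(p)$ along different directions disagree, forcing the coefficient of $\tau$ times the limit to be multiplied by a vanishing prefactor). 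Without this step, your argument yields only the $j=1$ component of \eqref{eq:32xl}.
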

Note that the result contains two non-trivial pieces of information: first, the
sum of $\sigma_e \bar K_{\o,r(e)}$ along a step in direction
$\vec e_i$ is proportional to the discrete gradient of $\bar \phi_\o$
in the same direction; second, the coefficient of proportionality is
related in an elementary way to the critical exponent $\nu$ that
appears in \eqref{eq:BBl}.  The latter relation immediately implies
the identity (cf. \eqref{eq:35l}) between height fluctuation amplitude
and critical exponent $\nu$ and is a form of universality. 

\begin{Remark}
  \label{rem:frattaglie}
  Recall that for the non-interacting model $\nu=1$, in particular it
  is independent of the weights $t_i$. This is not true {any more} for
  the interacting model. Indeed, an explicit calculation of $\nu$ at
  first order in $\lambda$ for the model with plaquette interaction shows a non-trivial dependence both on $\l$
  and on the weights. 
\end{Remark}

Theorem \ref{th:2} follows from a combination of exact 
relations among correlation functions of the interacting dimer model (``lattice Ward identities'') together with chiral gauge symmetry
emerging in the continuum scaling limit; it is remarkable that such a
symmetry, valid only in the continuum limit, implies
nevertheless exact relations for the coefficients of the lattice
theory.

\begin{Remark}
  The analog of Theorem \ref{th:1} has been proven in \cite{GMT17a},
  \cite{GMT17b} in the special case $t_1=t_2=t_3=1$ and with plaquette
  interaction as in \eqref{eq:plaq}, which has the same discrete
  symmetries as the lattice. In that case, for symmetry reasons one
  obtains automatically that the ratios
  $ \frac{\bar K_{\o,r}}{K_{\o,r}} $ are independent of $r,\o$ and
  that
  $ \frac{\bar \alpha_\o}{\alpha_\o}=\frac{\bar \beta_\o}{\beta_\o}, $
  the ratios being again $\o$-independent. Thus, the analog of Theorem
  \ref{th:2} is trivial in that case.

      Let us add also that, in the works \cite{GMT17a,GMT17b}, the
      existence of the $L\to\infty$ limit of the measure
      $\mathbb P_{L,\l}$ itself was not proven: instead, we modified
      the measure $\mathbb P_{L,\l}$ by an infra-red cut-off $m>0$
      (mass) and then we took the limit where first $L\to\infty$ and
      then $m\to0$.  We explain in Section \ref{sec:RG} how the need
      of the cut-off $m$ can be bypassed.
\end{Remark}

To upgrade Theorem \ref{th:2} into a statement of convergence of the
height field to a Gaussian Free Field with
covariance $$-\frac\nu{2\pi^2}\Re\log (\bar\phi_+(x)-\bar\phi_+(y)),$$
one needs to complement \eqref{eq:35l} with the statement that higher
cumulants are negligible, i.e. that, for $n>2$ and some $\theta>0$,
\begin{eqnarray}
\nonumber
  \mathbb E_\l\left[(h(\eta_1)-h(\eta_2));\ldots;(h(\eta_{2n-1})-h(\eta_{2n}))
\right]=O((\min_{i\ne j}|\eta_i-\eta_j|+1)^{-\theta}).
\end{eqnarray}
In turn, this requires an analog of \eqref{eq:32xx} for multi-dimer
correlation functions. This can be done following the ideas of
Sections \ref{sec:proveth} and \ref{sec:RG} below but, in order to
keep this work within reasonable length, we decided not to develop
this point. The interested reader may look at \cite[Theorem 3 and
Sec. 7]{GMT17a}, where the precise statements on multi-dimer
correlations and on the convergence to the GFF are given in detail for
the model with edge weights $\underline t\equiv 1$ and interaction
\eqref{eq:placchetta}.

\section{Grassmann integral representation}
\label{sec:grass}
\subsection{Kasteleyn theory}
\label{sec:Ktheory}
For the statements of this section and more details on Kasteleyn theory, we
refer the reader for instance to \cite{Kenyonnotes,KOS}.

The partition function and the correlations of the non-interacting
model \eqref{eq:Z} can be explicitly computed in determinantal form,
via the so-called Kasteleyn matrix $K$. This is a square matrix of
size $L^2\times L^2$ with rows/columns indexed by black/white vertices
$b/w$ of $\mathbb T_L$, as follows. If $b,w$ are not neighbors, then
$K(b,w)=0$. Otherwise, if $(b,w)$ is an edge of type $r$ one sets
$K(b,w)=K_r$, cf. \eqref{eq:Kr}. We actually need four Kasteleyn
matrices $K_{\bt}$, $\bt=(\theta_1,\theta_2)\in\{0,1\}^2$, where the two
indices label periodic/anti-periodic boundary conditions (depending on
whether the index is $0/1$) in the directions $\vec e_i$. To obtain
$K_{\bt}$ from $K$, one multiplies by $(-1)^{\theta_1}$
(resp. by $(-1)^{\theta_2}$) the matrix elements corresponding to edges
$(b,w)$ where $w$ has first coordinate equal $L$ and $b$ has first
coordinate equal $1$ (resp. $w$ has second coordinate equal $L$ and $b$ has
second coordinate equal $1$).  See Fig. \ref{fig:1}. Of course, $K_{00}=K$.
We have then \cite{Ka,Kenyonnotes}   that
\begin{equation}
  \label{eq:2}
  Z^0_L=\frac12\sum_{\bt\in \{0,1\}^2}c_{\bt} \det(K_{\bt})
\end{equation}
where $c_\bt\in\{-1,+1\}$ and, moreover, three of the $c_\bt$ have the same sign and the fourth one has the opposite sign.
More precisely, for the square grid, with our choice of Kasteleyn matrix, one finds
\begin{eqnarray}
  \label{eq:ctt}
 c_{\bt}=\left\{
  \begin{array}{lll}
    +1&\text{if}& \bt=(0,1) \text{ or } \bt=(1,0)\\
    (-1)^{{\bf 1}_{L=0\!\!\!\!\mod 4}} &\text{if}& \bt=(0,0) \\
    (-1)^{{\bf 1}_{L=0\!\!\!\!\mod 2}} &\text{if}& \bt=(1,1)
  \end{array}
                                                   \right.
\end{eqnarray}
(recall that we are assuming that $L$ is even).
The matrices $K_{\bt}$ are diagonalized in the Fourier basis and
\begin{eqnarray}
  \label{eq:detKtt}
  \det(K_{\bt})=\prod_{k\in  \mathcal P(\bt)}\mu(k),
\end{eqnarray}
where $\mu(\cdot)$ is as in \eqref{eq:Ft} and 
\begin{eqnarray}
  \label{eq:Dtt}
  \mathcal P(\bt)=\{k=(k_1,k_2),k_i=\frac{2\pi}L\left(n_i+\theta_i/2\right),n_i=0,\dots,L-1\}.
\end{eqnarray}

The matrices $K_{\bt}$ are not necessarily invertible (e.g.,
if $t_i\equiv 1$ then $K_{00}$ is not because $\mu(0)=0$) and 
this question will play a role in Section \ref{sec:RG}.  When the four matrices $K_{\bt}$ are invertible, the
correlation functions of the non-interacting measure can be written as
\begin{multline}
  \label{eq:corrni}
  \mathbb P_L(e_1,\dots,e_k\in M)=\frac{1}{2Z^0_L}\\
  \times\sum_{\bt\in\{0,1\}^2}c_{\bt}\det(K_{\bt}) \left[\prod_{j=1}^k K_{\bt}(b_j,w_j)\right]\det\{K^{-1}_{\bt}(w_n,b_m)\}_{1\le n,m\le k}
\end{multline}
where the edge $e_j$ has black/white vertex $b_j/w_j$.
The  inverse of the matrix $K_{\bt}$  can be computed explicitly as 
\begin{gather}
  \label{eq:8}
  K^{-1}_{\bt}(w_x,b_y)=\frac1{L^2}\sum_{k\in \mathcal P(\bt)}\frac{e^{-i k  (x-y)}}{\mu(k)}=:g_{L}^\bt(x,y),
\end{gather}
where $w_x$ (resp. $b_y$) is the white (resp. black) site with coordinate $x$ (resp. $y$).
Provided that
\begin{eqnarray}
  \label{eq:distanti}
  |k-p^\pm|\gg L^{-2}, \qquad \forall k\in \mathcal P(\bt),
\end{eqnarray}
it is easy to see that
$K^{-1}_{\bt}(w_x,b_y)=g(x,y)+o(1)$ as $L\to\infty$, where
\begin{eqnarray}
  \label{eq:Kinf}
  g(x,y):=\int_{[-\pi,\pi]^2}\frac{dk}{(2\pi)^2}\frac{e^{-i k(x-y)}}{\mu(k)}.
\end{eqnarray}
Condition
\eqref{eq:distanti} can fail for some values of $L$ and of
$\bt$. For this reason,  in Section \ref{sec:RG} the values $k^\pm_\bt\in \mathcal P(\bt)$ that are closest to the zeros of $\mu$ will be treated separately,
see in particular Sections \ref{sec:preliminari} and \ref{sec:ultimiquattro}.

Due to the {fact that $\mu$ has two simple zeros}, the matrix element $g(x,y)$ decays only
as the inverse distance between $w_x$ and $b_y$. More precisely
\begin{eqnarray}
  \label{eq:12}
  g(x,y) = \frac 1{2\pi}\sum_{\o=\pm} \frac{e^{-ip^\o(x-y)}}{\phi_\o(x-y)}+r(x,y)
\end{eqnarray}
where $r(x,y)=O(1/|x-y|^2)$ and
$\phi_\o$ was defined in \eqref{eq:Phi}.

\subsection{Grassmann representation of the generating functions}
We refer for instance to \cite{GMreview} for an introduction to
Grassmann variables and Grassmann integration; here we just recall a
few basic facts.  It is well known that determinants can be
represented as Gaussian Grassmann integrals. For our purposes, we
associate a Grassmann variable $\psi^+_x$ (resp. $\psi^-_x$) with the
black (resp. white) site indexed $x$. We denote by
$\int D\psi f(\psi)$ the Grassmann integral of a function $f$ and
since the variables $\psi^\pm_x$ anti-commute among themselves and
there is a finite number of them, we need to define the integral only
for polynomials $f$. The Grassmann integration is a linear operation
that is fully defined by the following conventions:
\begin{eqnarray}
  \label{eq:Dpsi}
\int D\psi\,
\prod_{x\in\Lambda}\psi^-_x\psi^+_x=1  ,
\end{eqnarray}
the sign of the integral changes whenever the positions of two
variables are interchanged (in particular, the integral of a monomial
where a variable appears twice is zero) and the integral is zero if
any of the $2|\Lambda|$ variables is missing. We also consider
Grassmann intergrals of functions of the type $f(\psi)=\exp(Q(\psi))$,
with $Q$ a sum of monomials of even degree. By this, we simply mean
that one replaces the exponential by its finite Taylor series
containing only the terms where no Grassmann variable is repeated.

It is well known that the definition of Grassmann integration allows one
to write the determinant of a matrix as the integral of the
exponential of the associated Grassmann quadratic form (such integral
will be called a ``Gaussian Grassmann integral'', for the obvious
formal analog with usual Gaussian integrals). In particular, {recalling the definition of $K_{\bt}$ given before \eqref{eq:2},}
\begin{eqnarray}
  \label{eq:3}
  \det (K_{\bt})=\int_{(\bt)} D\psi\, e^{S(\psi)},
\end{eqnarray}
where
\begin{eqnarray}\label{eq:S} S(\psi)=-\sum_{x,y\in \Lambda}
K_{00}(b_x,w_y)\psi^+_x\psi^-_{y}
\end{eqnarray}
and the index $(\bt)$ below the integral means that one has to identify 
\begin{eqnarray}
  \label{eq:identif}
\psi^\pm_{(L+1,x_2)}:=(-1)^{\theta_1}\psi^\pm_{(1,x_2)}, \quad \psi^\pm_{(x_1,L+1)}:=(-1)^{\theta_2}\psi^\pm_{(x_1,1)}.  
\end{eqnarray}
More compactly we write \[S(\psi)=-\sum_e E_e\]
where the sum runs over edges of $\mathbb T_L$ and, if
$e$ is an edge $(b,w)$,
\begin{eqnarray}
  \label{eq:Ee}
  E_e=K_{00}(b,w)\psi^+_{x(b)}\psi^-_{x(w)}.
\end{eqnarray}

Our goal here is to express, via a Grassmann integral, the partition
function of the interacting dimer model, and more generally the
generating function $\mathcal W_\Lambda(A)$ defined by
\begin{eqnarray}
  \label{eq:Wa}
  e^{\mathcal W_\Lambda(A)}:=\sum_{M\in \Omega_{L}}p_{L,\lambda}(M)\prod_{e}e^{A_e \mathds 1_{e}{(M)}} 
\end{eqnarray}
where the product runs over the edges of $\mathbb T_L$ and
$A_e\in \mathbb R$.  Note that $e^{\mathcal W_L(0)}$ is the partition
function and that any multi-dimer truncated correlation function of
the type
$\mathbb E_{L,\lambda} ( \mathds 1_{e_1};\dots;\mathds 1_{e_k})$ can be
obtained by differentiating $\mathcal W_\Lambda(A)$ with respect to
$A_{e_1},\dots,A_{e_k}$ and setting $A\equiv 0$.

Recall that the perturbed probability weight $p_{L,\lambda}$ depends on the local `energy function' $f$ via \eqref{tWL}-\eqref{eq:WL}. 
Without loss of generality, we can assume that \eqref{eq:WL} holds with 
\begin{eqnarray}
  \label{eq:wlog}
f(M)=\sum_{s=1}^n c_s\mathds 1_{P_s}(M)  
\end{eqnarray}
where $c_s$ are real constants, $n$ is an integer, $P_s$ are finite
collections of edges such that no space translation of $P_s$ coincides
with a $P_{s'}, s\ne s'$ and
$\mathds 1_{P_s}=\prod_{e\in P_s}\mathds 1_e$ is the indicator that all
edges in $P_s$ belong to $M$. Again without loss of generality we
assume that each $P_s$ contains at least $2$ edges (if $P_s$ consists
in just one edge, its effect is just to modify the weights
$\underline t$). Under these assumptions, the following representation holds. 

\begin{Proposition}\label{prop:clusterexp} Let $\lambda$ be small enough. Then, 
 one has
  \begin{eqnarray}
    \label{eq:eacho}
    e^{\mathcal W_L(A)}=
    \frac12\sum_{\bt\in\{0,1\}^2}c_{\bt}
    \int_{(\bt)} D\psi\,e^{S(\psi)+V(\psi,A)}
  \end{eqnarray}
  where
  \begin{eqnarray}
    \label{eq:clusterexp}
    V(\psi,A)=-\sum_e (e^{A_e}-1)E_e+\sum_{\gamma\subset\Lambda}c(\gamma)\prod_{b\in\gamma}E_b e^{A_b}.
  \end{eqnarray}
  The first sum runs over all edges of $\mathbb T_L$ and $E_e$
  is as in \eqref{eq:Ee}. In the second sum, $\gamma$ are finite subsets
   of disjoint edges of
  $\mathbb T_L$ such that $|\gamma|\ge2$, and $c(\gamma)$ is a real constant satisfying translation
  invariance ($c(\gamma)=c(\tau_x\gamma)$) and the bound
\begin{eqnarray}
  \label{eq:ub}
|c(\gamma)|\le (a |\lambda|)^{\max\{1,b \d(\g)\}},  
\end{eqnarray}
for some constants $a,b>0$, independent of $L$, and $\d(\gamma)$ the tree distance of $\gamma$, that is, the length of the shortest tree graph on $\L$ containing $\gamma$ 
(the precise definition of $c(\gamma)$ is given below).\end{Proposition}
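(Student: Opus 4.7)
The strategy is to combine a polymer-gas expansion for $e^{\lambda W_L}$ with the standard Kasteleyn--Grassmann representation of dimer partition functions with forced edges. Using $\mathds 1_{P_s}^2=\mathds 1_{P_s}$, I first expand
\[
e^{\lambda W_L(M)}=\prod_{y=(s,x)}\bigl(1+\zeta_y\,\mathds 1_{\tau_x P_s\subset M}\bigr)=\sum_{Y}\prod_{y\in Y}\zeta_y\,\mathds 1_{B(Y)\subset M},
\]
with $\zeta_y:=e^{\lambda c_s}-1=O(\lambda)$ and $B(Y):=\bigcup_{y\in Y}\tau_x P_s$ regarded as an edge set; the inner indicator vanishes unless $B(Y)$ is a collection of vertex-disjoint edges. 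Plugging this into the definition of $e^{\mathcal W_L(A)}$ and similarly separating the source factor $\prod_e e^{A_e\mathds 1_e}$ as a reweighting $t_{r(e)}\to t_{r(e)}e^{A_e}$ of edge weights, the $Y$-th summand becomes $\prod_{y\in Y}\zeta_y$ times the non-interacting partition function with those modified weights and with the edges of $B(Y)$ forced to be occupied.

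\textbf{From dimers to Grassmann.} Each such conditioned, reweighted partition function can be rewritten as a Grassmann integral by the standard Kasteleyn--Pfaffian identity, extended to correlation functions as recalled in \cite{Kenyonnotes,KOS}. Setting $\tilde S(\psi):=S(\psi)-\sum_e(e^{A_e}-1)E_e=-\sum_e e^{A_e}E_e$, one has, for any collection $B$ of vertex-disjoint edges,
\[
\sum_{M\in\Omega_L}p_{L,0}(M)\,e^{\sum_e A_e\mathds 1_e}\,\mathds 1_{B\subset M}=\frac12\sum_{\bt}c_\bt\int_{(\bt)}D\psi\,\Bigl(\prod_{e\in B}e^{A_e}E_e\Bigr)\,e^{\tilde S(\psi)},
\]
the factors $e^{A_e}$ for $e\in B$ being carried by the forced edges. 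Interchanging the finite sum over $Y$ with the Grassmann integral yields
\[
e^{\mathcal W_L(A)}=\frac12\sum_{\bt}c_\bt\int_{(\bt)}D\psi\,e^{\tilde S(\psi)}\prod_y\bigl(1+\zeta_y X_y\bigr),\qquad X_y:=\prod_{e\in P_y}e^{A_e}E_e.
\]

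\textbf{Collapse to the exponential form.} The $X_y$'s are even Grassmann monomials, hence commute pairwise; moreover, since $|P_y|\geq 1$ and $E_e^2=(\psi^+_{x(b)}\psi^-_{x(w)})^2=0$, the square $X_y^2$ contains a repeated $E_e^2$ and vanishes. Consequently $\log(1+\zeta_y X_y)=\zeta_y X_y$ and, by commutativity, $\prod_y(1+\zeta_y X_y)=\exp(\sum_y\zeta_y X_y)$. This identifies $V(\psi,A)$ with the expression in \eqref{eq:clusterexp}, where $c(\gamma):=\sum_{(s,x):\,\tau_x P_s=\gamma}\zeta_{s,x}$; the hypothesis that no two $P_s$ are translates of each other reduces this sum to a single term (or to zero), making $c(\gamma)$ explicit and manifestly translation invariant. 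Since $|P_s|\geq 2$, only $\gamma$ with $|\gamma|\geq 2$ contribute, as required.

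\textbf{Bound on $c(\gamma)$ and main obstacle.} The bound \eqref{eq:ub} is then essentially immediate: $|\zeta_{s,x}|\leq C|\lambda|$ for $|\lambda|\leq\epsilon$, with $C=C(\max_s|c_s|)$, and $c(\gamma)$ is supported on the finite family of translates of the $P_s$, so $\delta(\gamma)\leq D:=\max_s\mathrm{diam}(P_s)$ on that support; choosing $b\leq 1/D$, the exponent $\max\{1,b\delta(\gamma)\}$ equals $1$ wherever $c(\gamma)\neq 0$, and the claimed bound reduces to $|c(\gamma)|\leq a|\lambda|$, which holds for $a\geq C$. The only genuinely non-trivial ingredient is therefore the Kasteleyn--Grassmann identity with forced edges and the correct boundary-condition signs $c_\bt$; once this is in hand, the nilpotency $X_y^2=0$ truncates the would-be cluster expansion at linear order and bypasses the Brydges--Battle--Federbush or Kotecký--Preiss convergence machinery that one normally invokes in polymer expansions.
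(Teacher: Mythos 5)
The step that fails is the passage from the forced-edge Kasteleyn identity to the Grassmann product $\prod_y(1+\zeta_y X_y)$. With $\tilde S=-\sum_e e^{A_e}E_e$, interchanging the sum over $Y$ with the Grassmann integral produces the integrand $\sum_Y\prod_{y\in Y}\zeta_y\,(-1)^{|B(Y)|}\prod_{e\in B(Y)}e^{A_e}E_e$, where $B(Y)=\cup_{y\in Y}B_y$ is a \emph{set} of edges, each $E_e$ appearing exactly once, and the sign comes from $\partial_{A_e}e^{\tilde S}=-e^{A_e}E_e\,e^{\tilde S}$. Your product, on the other hand, expands to $\sum_Y\prod_{y\in Y}\zeta_y\,\prod_{y\in Y}\prod_{e\in B_y}e^{A_e}E_e$, where $E_e$ is raised to the number of polymers of $Y$ containing it. The two disagree exactly on those $Y$'s containing two or more overlapping polymers: for such $Y$, $E_e^2=0$ annihilates your term, while the dimer indicator $\prod_{y\in Y}\mathds 1_{B_y\subset M}=\mathds 1_{B(Y)\subset M}$ and the single-power monomial $\prod_{e\in B(Y)}E_e$ do not vanish as long as $B(Y)$ is a partial matching. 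The nilpotency $X_y^2=0$ that you invoke to collapse the expansion into an exponential thus also silently deletes every contribution from overlapping polymer families, and the claimed identity $e^{\mathcal W_L(A)}=\frac12\sum_\bt c_\bt\int_{(\bt)}D\psi\,e^{\tilde S}\prod_y(1+\zeta_y X_y)$ is false.

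These deleted contributions are precisely what the Mayer rearrangement in the paper restores: $c(\gamma)$ is supported on \emph{every} connected union $\gamma=\cup_i B_i$ of translated $P_s$'s, not only on single translates, which is why $\delta(\gamma)$ is unbounded on the support and \eqref{eq:ub} is a genuine decay estimate. Your own observation that under your reading $\delta(\gamma)\le\max_s\mathrm{diam}(P_s)$ and the bound trivializes to a single power of $\lambda$ should have been a warning that terms are missing. A concrete cross-check: for the plaquette interaction the paper records $c(\gamma)=(-1)^{|\gamma|}(e^\lambda-1)^{|\gamma|-1}$ for every chain of $|\gamma|\ge 2$ adjacent parallel edges — these are exactly the overlap contributions your argument sets to zero for $|\gamma|\ge 3$. (The observation $X_y^2=0$ is correct, and the paper uses it as well to re-exponentiate $\sum_\gamma c(\gamma)\prod_{b\in\gamma}E_b e^{A_b}$ into a sum over pairwise-disjoint clusters, but it cannot substitute for the cluster combinatorics that produce $c(\gamma)$. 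The factor $(-1)^{|\gamma|}$ in $c(\gamma)=(-1)^{|\gamma|}\tilde c(\gamma)$ is, moreover, the very sign that is absent from your forced-edge identity.)
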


\begin{Remark}
  \label{rem:coniugato}
Both $S(\psi)$ and $V(\psi,A)$ are invariant under the following symmetry transformation of the Grassmann fields:
\begin{equation} \psi^\pm_x\to (-1)^x\psi^\pm_x,\qquad c\to
  c^*,\end{equation} where $c\to c^*$ indicates that all the constants
appearing in $S(\psi)$ and $V(\psi,A)$ are mapped to their complex
conjugates. Also, we used the notation $(-1)^x:=(-1)^{x_1+x_2}$. It is
straightforward to check that, under this transformation,
$E_e\to E_e$, for all the edges $e$, which clearly shows that the
considered transformation is in fact a symmetry of the Grassmann
action. This symmetry will play a role in Section \ref{sec:RG}, in
reducing the number of independent running coupling constants arising
in the multiscale computation of the Grassmann generating function.
\end{Remark}

\begin{proof}[Proof of Proposition \ref{prop:clusterexp}]
The  proposition has been proven in \cite{GMT17a} in the case of
constant weights $t_i\equiv 1$ and plaquette interaction as in
\eqref{eq:plaq};  the extension to the present situation is
rather straightforward, so we will be concise.

Let
\[
S=\{\tau_x P_s,s=1,\dots,n, x\in \Lambda\}
\]
and remark that by assumption all elements of $S$ are distinct and
contain at least two edges.  If $B\in S$ is a space translation of
$P_s$, set
\begin{eqnarray}
  \label{eq:us}
  u(B)=\exp(\lambda c_s)-1.
\end{eqnarray}
We start by writing 
\begin{multline}
  \label{oag}
  e^{\mathcal W_L(A)}=\sum_{M\in \Omega_L} w^{(A)}(M)\prod_{x\in \Lambda}\prod_{s=1}^n(1+(e^{\lambda c_s}-1) \mathds 1_{\tau_x P_s}(M))\\=Z_L^{0,(A)}
  \sum_{\sigma\subset S}\mathbb E_{L}^{(A)}\left(\prod_{B\in \sigma}u(B)\mathds 1_B(M)\right)
\end{multline}
with
\[w^{(A)}(M)=t_1^{N_1(M)}t_2^{N_2(M)}t_3^{N_3(M)}e^{\sum_{b\in M}A_b}, \qquad Z_L^{0,(A)}=\sum_{M\in \Omega_L} w^{(A)}(M) 
\]
and $\mathbb P_L^{(A)}$ the probability measure  with density $w^{(A)}(M)/Z_L^{0,(A)}$.
By manipulating the sum in the r.h.s. of \eqref{oag}, one can rewrite it as
\begin{eqnarray}
  \label{dm}
  \sum_{n\ge0}\sum^*_{\gamma_1,\dots,\gamma_n} Z_L^{0,(A)}\mathbb E_{L}^{(A)}\left(\prod_{i=1}^n \tilde c(\gamma_i)\mathds 1_{\gamma_i}(M)\right)
\end{eqnarray}
where the term $n=0$ has to be interpreted as equal to $1$ and the sum
$\sum^*$ is over non-empty, mutually disjoint subsets $\gamma_i$ of
edges of $\mathbb T_L$.  The constant $\tilde c(\gamma)$ is given as
follows.  Let $\Sigma_\gamma$ be the set of all collections of the
type $Y=\{B_1,\dots,B_{|Y|}\}$ where: $B_i\in S$, $B_i\ne B_j$ for
$i\ne j$, $\cup_i B_i=\gamma$ and such that $Y$ cannot be divided into
two non-empty sub-collections $\{B_{i_1},\dots,B_{i_k}\}$ and
$\{B_{i_{k+1}},\dots,B_{i_{|Y|}}\}$ with
$(\cup_{j\le k}B_{i_j})\cap(\cup_{j>k} B_{i_j})=\emptyset$.  Then
\begin{eqnarray}
  \tilde c(\gamma)=\sum_{Y\in\Sigma_\gamma}\prod_{B\in Y} u(B).
\end{eqnarray}
Now we rewrite \eqref{dm} as
\begin{eqnarray}
  \label{eq:dm1}
  \sum_{n\ge0}\sum^*_{\gamma_1,\dots,\gamma_n} \prod_{j=1}^n \tilde c(\gamma_j)
 \left[ \prod_{b\in \gamma_j}\partial_{A_b}\right]Z_L^{0,(A)}.
\end{eqnarray}
The partition function $ Z_L^{(A)}$ corresponds to a non-interacting dimer model, with edge-dependent weights $t_e e^{A_e}$. Then, as in \eqref{eq:2} and \eqref{eq:3} we have
\begin{eqnarray}
  \label{eq:dm2}
  Z_L^{0,(A)}=\frac12\sum_{\bt\in \{0,1\}^2}c_{\bt} \int_{(\bt)}D\psi\, e^{S(\psi)-\sum_e (e^{A_e}-1)E_e}.
\end{eqnarray}
Using expression  \eqref{eq:dm2} in \eqref{eq:dm1} one readily concludes, as in \cite{GMT17a}, that \eqref{eq:clusterexp} holds with 
\begin{eqnarray}
  \label{eq:dm3}
  c(\gamma)=(-1)^{|\gamma|}\tilde c(\gamma).
\end{eqnarray}
If $\lambda$ is small enough, it is easy to see that the bound
\eqref{eq:ub} holds.
\end{proof}

For the 6-vertex model with interaction \eqref{eq:6v}, the potential
$V$ is exactly quartic in the fields $\psi$: indeed, $c(\gamma)\ne0$
only if $\gamma$ is the pair of edges $\gamma=\{e_1,e_{2}\}$ or
$\gamma=\{e_{3},e_{4}\}$ as in Fig. \ref{fig:esempi} or a translation
thereof.  For the plaquette model with interaction \eqref{eq:plaq},
instead, $c(\gamma)$ is non-zero only if $\gamma$ is a collection of
$|\gamma|\ge 2$ adjacent parallel edges, in which case
$c(\gamma)=(-1)^{|\gamma|}(e^\lambda-1)^{|\gamma|-1}$.

In the following (in the comparison between the discrete lattice model
and the continuum reference model) we will also need the generating
function for mixed dimer and fermionic correlations. Namely, let
$\{\phi^+_x,\phi^-_x\}_{x\in\Lambda}$ be Grassmann variables that
anti-commute among themselves and with the $\psi^\pm$ variables. Then,
we let
\begin{eqnarray}
  \label{eq:Wap}
  e^{\mathcal W_L^{(\bt)}(A,\phi)}:=\int_{(\bt)} D\psi\, e^{S(\psi)+V(\psi,A)+(\psi,\phi)}
\end{eqnarray}
and
\begin{equation}
  \label{eq:BP}
    e^{\mathcal W_L(A,\phi)}:=
    \frac12\sum_{\bt\in\{0,1\}^2}c_{\bt}
    e^{\mathcal W_L^{(\bt)}(A,\phi)}.
  \end{equation}
Here, $V(\psi,A)$ is as in Proposition \ref{prop:clusterexp}, while
\[
(\psi,\phi):=\sum_{x\in\Lambda} (\psi^+_x\phi^-_x+\phi^+_x\psi^-_x).
  \]

  We define $g_{L}(e_1,\dots,e_k;x_1,\dots,x_n;y_1,\dots,y_n)$ as the
  truncated correlations associated with the generating
  function\footnote{See e.g. \cite[Remark 5]{GMT17b} for the
    conventions in the definition of derivatives with respect  to
    Grassmann variables} $\mathcal W_L(A,\phi)$:
  \begin{multline}
    \label{eq:Gaux}
g_{L}(e_1,\dots,e_k;x_1,\dots,x_n;y_1,\dots,y_n)\\:=\left.    \partial_{A_{e_1}}\dots \partial_{A_{e_k}}\partial_{\phi^-_{y_1}}\dots\partial_{\phi^-_{y_n}}\partial_{\phi^+_{x_1}}\dots\partial_{\phi^+_{x_n}}\mathcal W_{L}(A,\phi)\right|_{A\equiv 0,\phi\equiv0}.
\end{multline}
Two cases that will play a central role in the following are $k=0$, $n=1$ (the interacting propagator), and $k=n=1$ (the interacting vertex function), which deserve 
a distinguished notation.\\
{\it Interacting propagator}: 
\begin{equation} 
g_{L}(\emptyset;x;y)=\frac1{2Z_{L}}\sum_{\bt}c_{\bt}\int_{(\bt)}D\psi\, e^{S(\psi)+V(\psi,0)}\psi^-_{x}\psi^+_{y}=:G^{(2)}_L(x,y);\end{equation}
that is, $G^{(2)}_L(x,y)=\media{\psi^-_{x}\psi^+_{y}}_L$, where $\media{f}_L$ indicate the Grassmann ``average''
$\frac1{2Z_{L}}\sum_{\bt}c_{\bt}\int_{(\bt)}D\psi\, e^{S(\psi)+V(\psi,0)}f(\psi)$.

\medskip

\noindent{\it Interacting vertex function}:\\
if $\mathcal I_{e}=\partial_{A_e}V(\psi,A)\big|_{A=0}$ is the Grassmann counterpart of the dimer observable at $e$, and $e$ is an edge of type $r$ with black site labelled $z$, then
\begin{equation} g_{L}(e;x;y)=\media{{\mathcal
      I}_{e}\psi^-_{x}\psi^+_{y}}_L-\media{{\mathcal
      I}_{e}}_L\media{\psi^-_{x}\psi^+_{y}}_L=:G^{(2,1)}_{r,L}(z,x,y);\end{equation}
that is,
$G^{(2,1)}_{r,L}(z,x,y)=\media{\mathcal I_e;\psi^-_x\psi^+_y}_L$,
where the semicolon indicates truncated expectation.

\medskip
In the following we will also need a distinguished notation for the two-point dimer-dimer correlation: if $e_1,e_2$ are two edges of type $r,r'$, and black sites 
labelled $x,y$, respectively, we let 
\begin{equation} g_{L}(e_1,e_2;\emptyset;\emptyset)=:G^{(0,2)}_{r,r',L}(x,y).\end{equation}

Note that all the correlations
$g_L(e_1,\dots,e_k;x_1,\dots,x_n;y_1,\dots,y_n)$ are well defined for
any finite $L$, despite the fact that the Kasteleyn matrix $K_{\bt}$
may not be invertible for some choices of $\bt,L$. The multipoint
correlations,
$$g_{L}(e_1,\dots,e_k;x_1,\dots,x_n;y_1,\dots,y_n),$$
admit a thermodynamic limit
as $L\to\infty$, as shown in Section \ref{sec:RG}; the limit
can be expressed as a convergent multiscale fermionic
expansion and will be denoted
\[g(e_1,\ldots,e_k;x_1,\dots,x_n;y_1,\dots,y_n).\] In particular, the
thermodynamic limit of the two-point dimer-dimer correlation will be
denoted by $G^{(0,2)}_{r,r'}(x,y)$, while the $L\to\infty$ limit of
the interacting propagator and vertex function will be denoted
$G^{(2)}(x,y)$ and $G^{(2,1)}_r(z,x,y)$.

\subsection{Lattice Ward Identity}

The generating function $\mathcal W_{L}(A,\phi)$ has a gauge symmetry
property that implies certain identities (lattice Ward identities)
involving its derivatives.  These identities were derived in
\cite{GMT17b} for the model with $t_i\equiv 1$ and they hold (with the
same proof) also for the general model studied here. We recall here,
without giving the proof, the Ward Identity for the `vertex function',
but similar relations can be easily derived for higher point
correlations: for any finite $L$,
\begin{eqnarray}
&& \sum_{r=1}^4 G^{(2,1)}_{r,L}(x,y,z)=-\d_{x,z}G^{(2)}_L(y,x),\label{3.29}\\
&&\sum_{r=1}^4 G^{(2,1)}_{r,L}(x-v_r,y,z)=-\d_{x,y}G^{(2)}_L(x,z),\label{3.30}
\end{eqnarray}
with $\delta_{x,y}$ the Krokecker delta, see
\cite[Eq.(4.9)-(4.10)]{GMT17b}. By taking the difference between these
two equations, we get (see \cite[Eq.(4.17)]{GMT17b})
\begin{equation}
\label{eq:WIreticoloL}
\delta_{x,y}G^{(2)}_L(x,z)  -\delta_{x,z}G^{(2)}_L(y,x)=-\sum_{r=2}^4\nabla_{-v_r}G^{(2,1)}_{r,L}(x,y,z),\end{equation}
where $(\nabla_n f)(x,y,z):=f(x+n,y,z)-f(x,y,z)$ is the (un-normalized) discrete derivative acting on the $x$ variable. 
{By taking the limit $L\to\infty$, we see that \eqref{3.29}--\eqref{eq:WIreticoloL} also hold for the infinite volume correlation functions
$G^{(2)}(x,y), G^{(2,1)}_{r}(x,y,z)$.}

In Fourier space, we define
\begin{gather}\label{eq:FG21}
\hat G^{(2)}(p)=\sum_{x} G^{(2)}(x,0)e^{i p x}\\
\hat G^{(2,1)}_r(k,p)=\sum_{x,z}e^{-i px-ikz}
G^{(2,1)}_r(x,0,z)\\
\hat G^{(0,2)}_{r,r'}(p)=\sum_x e^{-ipx} G^{(0,2)}_{r,r'}(x,0).
\end{gather}
Then, the infinite-volume limit of
\eqref{3.29}--\eqref{eq:WIreticoloL} can be rewritten as
\begin{eqnarray}
&& \sum_{r=1}^4 \hat G^{(2,1)}_{r}(k,p)=-\hat G^{(2)}(k+p),\label{3.37}\\
&&\hat G^{(2)}(k+p)-\hat G^{(2)}(k)=\sum_{r=2}^4(e^{-ipv_r}-1)\hat G^{(2,1)}_r(k,p).\label{eq:WIF}\end{eqnarray}
In the following the asymptotic behavior at large
distances of the interacting propagator and vertex
function will be computed in terms of a reference continuum model, see next section,
which plays the role of the `infrared fixed point'
of our lattice dimer model in its Grassmann formulation. 

\section{The infrared fixed point theory}
\label{sec:IR}

In order to introduce the ``infra-red fixed point'' of our theory (referred to in the following as ``the continuum model'' or ``the reference model''), we need a couple of preliminary definitions.
First, we let $\mathcal M$ be the $2\times 2$ matrix with unit determinant
\begin{eqnarray}
  \label{eq:M}
  \mathcal M=\frac1{\sqrt{\Delta}}
  \begin{pmatrix}
    \bar\beta^1& \bar \beta^2\\
    -\bar\alpha^1&-\bar\alpha^2
  \end{pmatrix}
\end{eqnarray}
where $\bar \alpha^j,\bar\beta^j\in \mathbb R$, $j=1,2$ and
$\Delta=\bar\alpha^1\bar\beta^2-\bar\alpha^2\bar\beta^1>0$ (for the moment, these are free parameters; eventually, they will be the real and imaginary parts of the functions $\bar \a_\o,\bar\b_\o$ 
that appear in Theorem \ref{th:1}).
Also, given $L>0$ (the system size), 
an integer $N$ (ultra-violet cut-off) and $Z>0$, we introduce a
Grassmann Gaussian integration\footnote{\label{foto:Gint} We recall
  (cf. e.g. \cite[Sec. 4]{GMreview}) that, given a family
  $\{\psi^-_x,\psi^+_x\}_{x\in\mathcal I}$ of Grassmann variables and a
  $|\mathcal I|\times |\mathcal I|$ matrix $g$, the ``Grassmann Gaussian
  integration with propagator $g$'', denoted sometimes $\int P_g(d\psi)\dots$ in the
  following, is the linear map acting on polynomials of the Grassmann
  variables, such that
  $\int
  P_g(d\psi)\psi^-_{x_1}\psi^+_{y_1}\dots\psi^-_{x_n}\psi^+_{y_n}=\det
  G_n(\underline x,\underline y)$ with
  $G_n(\underline x,\underline y)$ the $n\times n$ matrix with entries
  $[G_n(\underline x,\underline y)]_{ij}= g(x_i,y_j)$. If the matrix $g$ is non-singular, one can write more explicitly
\begin{equation}
    \label{eq:esplicito}
    \int P_g(d\psi)f(\psi)=\det(g)\int D\psi\, e^{-\psi^+ g^{-1} \psi^-}\,f(\psi).    
  \end{equation}} 
  $P_Z^{[\le N]}(d\psi)$ on the family of Grassmann variables
 \[
  \{ \hat\psi^\pm_{k,\o},\; \o=\pm1, \; k\in \mathcal K\},\qquad \mathcal K=\left\{\mathcal M\cdot p\; \left|\;p\in\left(\frac{2\pi}L\right)(\mathbb Z+1/2)^2\right.\right\},
   \]
 defined by the propagator
\begin{eqnarray}
  \label{eq:slf}
  \int P_Z^{[\le N]}(d\psi)\hat\psi^-_{k,\o}\psi^+_{k',\o'}=
  \delta_{\o,\o'}\delta_{k,k'}\frac{L^2 }Z \frac{\chi_{N}(k)}{ \bar D_\o(k)}
\end{eqnarray}
where:
\begin{itemize}
\item $\chi_{N}(k)=\chi(2^{-N}|\mathcal M^{-1} k|)$, with $\chi:\mathbb R^+\to [0,1]$ a $C^\infty$ cut-off function that is equal to $1$ if its argument is smaller than $1$ and 
  equal to $0$ if its argument is larger than $2$;
\item $\bar D_\omega(k)=\bar\a_\o k_1+\bar\b_\o k_2$, with
  \begin{eqnarray}
    \label{eq:a1b1}
  \bar \a_\o=\o \bar\a^1+i \bar\a^2,\quad \bar\b_\o=\o \bar\b^1+i \bar\b^2.   
  \end{eqnarray}
 
  Observe that, since we are assuming $\Delta>0$, we have that
  \begin{eqnarray}
    \label{eq:waa}
    \frac{\bar\alpha_\o}{\bar\beta_\o}\not \in \mathbb R.
  \end{eqnarray}
  
\end{itemize}
While $\mathcal K$ is an infinite set, we effectively have only a finite number of non-zero Grassmann variables $\hat\psi^\pm_{k,\o}$, because  $\chi_N(k)$ is non-zero only for a finite number of values of $k$ in $\mathcal K$.

Note that, setting $q=\mathcal M^{-1}k$, the r.h.s. of \eqref{eq:slf} equals
\begin{equation}
  \delta_{\o,\o'}\delta_{q,q'}\frac{L^2}{Z \sqrt\Delta
    }\frac{\chi(2^{-N}|q|)}{-i q_1+\o q_2}\label{gq}.
\end{equation}
 In the language of Quantum Field Theory, in the limit
$\lim_{L\to\infty,N\to\infty}$, \eqref{gq} is just the
propagator of chiral massless relativistic fermions.

It is convenient to define, for $x\in\mathbb R^2$, the Grassmann variables
\begin{eqnarray}
\psi^\pm_{x,\o}:=\frac1{L^2}\sum_{k\in \mathcal K}e^{\pm i k x}\hat\psi^\pm_{k,\o}.
\end{eqnarray}
Note that $\psi^\pm_{x,\o}$ has anti-periodic boundary conditions on
\[
\Lambda:=(\mathcal M^T)^{-1}\mathcal T_L, \quad \mathcal T_L=\mathbb R^2/(L\mathbb Z^2)
\]
and that
\begin{equation}\label{gth}
\frac{g^{[\le N]}_{R,\o}(x-y)}Z:=\int P_Z^{[\le N]}(d\psi)\psi^-_{x,\o}\psi^+_{y,\o}=\frac{1}{ ZL^2}\sum_{k\in\mathcal K}e^{-i k(x-y)}\frac{\chi_{N}(k)}{\bar D_\o(k)}.
\end{equation}

The generating functional $\mathcal W_{L,N}(J,\phi)$ of the continuum
model is 
\begin{equation}\label{vv1}
e^{\WW_{L,N}(J,\phi)} = {\int\! P_Z^{[\le N]}(d\psi)
e^{\VV(\sqrt{Z}\psi) + \sum_{j=1}^2
(J^{(j)},\,\rho^{(j)})+
Z\,(\psi,\phi)}}
\;,\end{equation}
where $J=\{J^{(j)}_{x,\o}\}^{j=1,2}_{\o=\pm,\, x\in\L}$ are external
``sources'' (real-valued test functions) and
$\phi=\{\phi^\sigma_{x,\o}\}^{\s,\o=\pm}_{x\in\Lambda}$ are ``external
Grassmann sources'', i.e. $\phi^\sigma_{x,\o}$ is a Grassmann
variable. Also, we used the notation
\[(J^{(j)},\r^{(j)}):=\sum_{\o=\pm}\int_\L dx\;
  J^{(j)}_{x,\o}\r^{(j)}_{x,\o},\] with
\begin{equation}\label{rhodef}
\r^{(1)}_{x,\o} = \psi^+_{x,\o} \psi^{-}_{x,\o}\;,
\qquad 
\r^{(2)}_{x,\o} = \psi^+_{x,\o} \psi^{-}_{x,-\o}\;
\end{equation}
and
$$(\psi,\phi):=\sum_{\o=\pm}\int_\L dx\,(\psi^+_{x,\o}\phi^-_{x,\o}+\phi^+_{x,\o}\psi^-_{x,\o})\;.$$
Finally, the interaction $\mathcal V$ in \eqref{vv1} is
\begin{equation}\label{gjhfk} \VV(\psi)=\frac{\l_\io}2 
\sum_{\o=\pm}\int_\L dx\int_\L dy\;   v(x-y) \psi^+_{x,\o}
\psi^-_{x,\o}\psi^+_{y,-\o}\psi^-_{y,-\o}\;, \end{equation}
where $\l_\io\in\mathbb R$, $v(x)=v_0(\mathcal M^T x)$ and
$v_0(\cdot)$ is a smooth rotationally invariant potential,
exponentially decaying to zero at large distances, normalized as
\begin{eqnarray}
  \label{eq:normaas}
\int_{\mathbb R^2}dx\, v_0(x)=  \int_{\mathbb R^2}dx\, v(x)=1.
\end{eqnarray}
  We emphasize that, while this
expression seems to depend on an uncountable set of Grassmann
variables $\{\psi^\pm_{x,\o},\phi^\pm_{x,\o}\}_{x\in\L}$, writing
everything in Fourier space there is only a finite number of non-zero
Grassmann variables.

In the special case $\bar\alpha_\o=(-i-\o),\bar\beta_\o=(-i+\o)$, {which}
is relevant for the interacting dimer model with
$\underline t\equiv 1$, the continuum model reduces to that studied in
\cite[Sec. 5]{GMT17b}, if the
constants $Z^{(1)}$ and $Z^{(2)}$ that appear there are fixed to $1$.  Setting
instead $\bar\alpha_\o=-i, \bar\beta_\o=\o$ in \eqref{vv1} (so that
$\Delta=1$) one obtains, apart from minor differences, the model
studied in \cite[Sec. 3]{BFM1} and \cite[Sec. 3]{BMdrude}.
\begin{Remark} In order to recognize the equivalence of the model \eqref{vv1} with  $\bar\alpha_\o=-i, \bar\beta_\o=\o$
and the one in, e.g., \cite[Section 3]{BMdrude} (or, analogously, the one in \cite[Section 3]{BFM1}), one needs to set to zero some of the external fields, rotate the coordinate system and 
rescale some constants. More precisely, if $\mathbb W_{L,N}(J^{(1)},\phi)$ denotes the generating functional used in \cite{BMdrude} with $J^{(1)}_{x,\o}=Z^{(3)}J_x+\o\tilde Z^{(3)}\tilde J_x$, see 
\cite[Eq. (28)]{BMdrude}, then, setting $J^{(2)}_x\equiv 0$ in \eqref{vv1},
\begin{eqnarray}\label{eq:assurdita}
  \mathcal W_{L,N}((J^{(1)},0),\phi;\lambda_\infty)=const.+\mathbb W_{L,N}(\mathcal J^{(1)},\varphi;-\Delta^{-1}\lambda_\infty)
\end{eqnarray}
where the constant is independent of $J^{(1)},\phi$ (so that it does not influence the
correlation functions; it depends upon $\Delta$ and is due to the rescaling of the Grassmann fields), while
\begin{eqnarray}\label{eq:assurdo2}
\mathcal J^{(1)}(x):=\Delta^{1/2}J^{(j)}((\mathcal M^T)^{-1}x), \quad \varphi^\pm(x):=\Delta^{1/4}\phi^\pm((\mathcal M^T)^{-1}x),
\end{eqnarray}
  and we denoted explicitly the dependence of the generating function on $\lambda_\infty$.
 This immediately implies obvious relations between  the correlation functions
  $G^{(2,1)}_{R,\o',\o}(x,y,z)$, $G^{(2)}_{R,\o}(x,y)$ and
  $S^{(j,j)}_{R,\o,\o'}(x,y)$, defined below, and the analogous ones of \cite{BMdrude}.
  \end{Remark}

  The peculiarity of the continuum model is that its correlations can
  be computed exactly. This is because, as compared to its lattice
  counterpart, the continuum model is ``chiral gauge invariant'',
  which means that the correlation functions satisfy two hierarchies
  of Ward Identities, distinguished by the choice of the `chirality
  index' $\omega$, see \eqref{h11} below. These additional symmetries,
  together with other identities among correlation functions (the
  so-called Schwinger-Dyson equations), allow one to get closed
  equations for correlations functions.  In this sense, the infrared
  fixed point theory can be regarded as ``integrable''.

We {define the following correlation functions of the reference model}: if $x,y,z$ are distinct 
points of $\L$,
\begin{eqnarray} &&  G^{(2,1;L,N)}_{R,\o',\o}(x,y,z) =  \frac{\partial^3}{\partial J_{x,\o'}^{(1)}\partial\phi^-_{z,\o}\partial\phi^+_{y,\o} }
\WW_{L,N}(J,\phi)|_{J=\phi=0} \nn\\
&& G^{(2;L,N)}_{R,\o}(x,y):=\frac{\partial^2}{ \partial\phi^-_{y,\o}
\partial\phi^+_{x,\o}} \WW_{L,N}(J,\phi)|_{J=\phi=0} \label{eq:5.6op}\\
&& S^{(j,j;L,N)}_{R,\o,\o'}(x,y):=\frac{\partial^2}{\dpr J_{x,\o}^{(j)} \dpr
J_{y,\o'}^{(j)}}\WW_{L,N}(J,\phi)|_{J=\phi=0} \nn.
\end{eqnarray}
 From the construction of the correlation functions of the 
model, see e.g. \cite[Section  3 and 4]{BFM1}, one obtains in particular the existence of the following limits where cut-offs are removed:
\begin{eqnarray}
  &&G^{(2,1)}_{R,\o',\o}(x,y,z) =\lim_{L\to\infty}\lim_{N\to\infty} G^{(2,1;L,N)}_{R,\o',\o}(x,y,z)\;,\nn\\
  &&G^{(2)}_{R,\o}(x,y) =\lim_{L\to\infty}\lim_{N\to\infty}G^{(2;L,N)}_{R,\o}(x,y) \;,\label{eq:5.6}\\
  &&S^{(j,j)}_{R,\o,\o'}(x,y)=\lim_{L\to\infty}\lim_{N\to\infty}S^{(j,j;L,N)}_{R,\o,\o'}(x,y)\;.\nn
\end{eqnarray}
Away from $x=0$, the so-called ``density-density''
correlation $S^{(1,1)}$ is given by \cite[Eq. (5.12)]{GMT17b}
\begin{equation}
\label{googleee}
S^{(1,1)}_{R,\o,\o}(x,0) = 
\frac1{4\pi^2 Z^2(1-\t^2)}\frac1{(\bar\phi_\o(x))^2}+R_1(x),
\end{equation}
where $ \bar\phi_\o(x):=\o(\bar\beta_\o x_1-\bar\alpha_\o x_2)$,
\begin{equation}
 \label{eq:deftau} \tau=-\frac{\lambda_\infty}{4\Delta\pi}
  \end{equation}
and $|R_1(x)|\le C|x|^{-3}$. On the other hand, the ``mass-mass correlation'' $S^{(2,2)}$ satisfies (see \cite[Eq.(6.14)]{GMT17b})
\begin{equation}
\label{instangraaam}S^{(2,2)}_{R,\o,-\o}(x,0) =
\frac{\bar B}{4\pi^2Z^2}\frac1{|\bar\phi_\o(x)|^{2\nu}}+R_2(x),
\end{equation}
where $\bar B$ is an analytic function of $\l_\infty, Z, \bar\a_\o,\bar\b_\o$, which is equal to $1$ at $\l_\infty=0$,
\begin{equation}\label{nunonnu}\nu=\frac{1-\t}{1+\t},\end{equation}
see \cite[Eq.(6.15)]{GMT17b} and \cite[Appendix C]{GMT17b}, and $R_2$ is a correction term such that $|R_2(x)|\le C|x|^{-2-\th}$, for some $\th>0$ that, e.g., can be chosen $\th=1/2$.

We will not need the explicit form of $G^{(2)}_{R,\o}(x,0)$ and
$ G^{(2,1)}_{R,\o',\o}(x,0,z)$; let us just mention that they diverge
as $x,z$ tend to zero but they are locally integrable functions (see,
e.g., the expression of the interacting propagator in
\cite[eq.(4.18)]{BFM1}) and therefore admit Fourier transforms in the
sense of distributions\footnote{On the other hand, the notion of
  Fourier transform for $S^{(j,j)}_{R,\o,\o'}(x,y)$ requires a little
  more care. Regarding $S^{(1,1)}_{R,\o,\o'}(x,y)$, from its
  expression one sees that it is not locally integrable; still, it
  defines a tempered distribution if the singularity at the origin is
  interpreted in the sense of the principal part: therefore, its
  Fourier transform $\hat S^{(1,1)}_{R,\o,\o'}(p)$ exists in the sense
  of distributions.  This is not the case for
  $S^{(2,2)}_{R,\o,\o'}(x,y)$ when $\nu\ge1$ (in particular, when
  $\l=0$, where $\nu=1$) since $1/|x|^{2\nu}$ is not locally
  integrable on $\mathbb R^2$. In this respect,
  \cite[eq.(6.2)]{GMT17b} does not make sense as is: however, that
  equation is correct if $\hat S^{(2,2)}_{R,\o,-\o}(p)$ is replaced by
  $\tilde S^{(2,2)}_{R,\o,-\o}(p)$, that is the Fourier transform of
  $ S^{(2,2)}_{R,\o,-\o}(x,0)$ multiplied by a $C^\infty$ function
  that vanishes for $|x|\le 1/2$ and equals $1$ for $|x|\ge 1$.},
\begin{eqnarray}
&&\hat G^{(2,1)}_{R,\o',\o}(k,p)=
\int dx \int dy\, e^{-ipx +i(k+p)y}\, G^{(2)}_{R,\o',\o}(x,y,0) \;,\nn\\
&&\hat G^{(2)}_{R,\o}(k) = 
   \int dx\, 
e^{ikx}\,G^{(2)}_{R,\o}(x,0).\label{eq:FT}
\end{eqnarray}

For later use, let us mention that the small-momenta behavior of $\hat G^{(2,1)}_{R,\o,\o}$ and $\hat G^{(2)}_{R,\o}$ are (see, e.g., \cite[Theorem 2]{BM02})
\begin{equation}
  \label{eq:Ghat2R}
  |\hat G^{(2)}_{R,\o}(p)|\sim const\times {|p|}^{-1+O(\lambda_\infty^2)}, \quad {\text{as}\quad p\to 0},
 \end{equation}
{and, if $0<\mathfrak c\le |p|,|k|,|k+p|\le 2\mathfrak c$,} 
\begin{equation}  \label{eq:Ghat21R} |\hat G^{(2,1)}_{R,\o,\o}(k,p)|\sim const\times  \mathfrak c^{-2+O(\lambda_\infty^2)}, \quad {\text{as}\quad \mathfrak{c}\to 0}.
\end{equation}
A very useful consequence of the exact solution of the continuum model
is that the ``propagator'' and the ``vertex function'' satisfy the
following Ward Identity (see \cite[Eq.(5.9)]{GMT17b}):
\begin{equation}\label{h11}
{Z}\sum_{\omega'=\pm}\bar D_{\omega'}(p)\hat G^{(2,1)}_{R,\omega',\o}(k,p)=
\frac1{1-\tau \hat v(p)} [\hat G^{(2)}_{R,\o}(k) - \hat G^{(2)}_{R,\o}(k+p)]\;.\end{equation}
Note that this identity resembles formally the lattice Ward identity \eqref{eq:WIF} of the dimer model, with the crucial difference that \eqref{h11} are actually \emph{two} identities (one for each choice of $\o$).

\section{Comparison between lattice and continuum model, and proof of
  Theorems \ref{th:1}-\ref{th:2}}

\label{sec:proveth}

The reason why the continuum model plays the role of the ``infrared
fixed point theory'' for our interacting dimer model is that the large
distance behaviour of the dimer correlation functions can be expressed
in terms of linear combinations of the correlations of the continuum
model, for a suitable choice of the parameters
$Z,\l_\infty,\bar\a_\o,\bar\b_\o$.  Let us spell out the explicit
relation between correlation functions of the two models, in the
special cases of the dimer interacting propagator, the vertex function
and the dimer-dimer correlation.  The result is a consequence of the
multi-scale analysis described in Section \ref{sec:RG} {where, in particular, 
 we prove the following:}
\begin{Proposition}\label{prop:comp} {In a small neighborhood of  $\l=0$ in the complex plane, there exist:
\begin{enumerate}
\item two real-valued analytic functions  \footnote{By `real-valued analytic' we mean a function that is analytic in 
 a small complex neighbourhood of the origin, and such that it is real-valued for real values of $\l$.}
$\l\mapsto\bar p^\o$, with $\o=\pm$, called the interacting Fermi
points, satisfying \eqref{eq:pbaro} and \eqref{eq:pbaro2},
\item four complex-valued  analytic
functions $\lambda\mapsto\bar\a_\o,\lambda\mapsto\bar\b_\o$ satisfying \eqref{eq:alphabar}, \eqref{eq:symmetries},
\item two real-valued analytic functions $\lambda\mapsto Z,\lambda\mapsto\l_\infty$ satisfying $Z=  1+O(\l)$ and $\l_\infty=O(\l)$, 
\end{enumerate}
such that the dimer-dimer correlation can be represented in the following form:
\begin{eqnarray}\label{hh110}
&&G^{(0,2)}_{r,r'}(x,y) = \sum_{\o=\pm} \hat K_{\o,r} \hat K_{\o,r'} S^{(1,1)}_{R,\o,\o}(x,y)\\
&&\qquad + \sum_{\o=\pm} e^{i(\bar p^\o-\bar p^{-\o})(x-y)}  \hat  H_{-\o,r} \hat H_{\o,r'}  
S^{(2,2)}_{R,\o,-\o}(x,y)+R_{r,r'}(x,y)\nonumber\;,
\end{eqnarray}
where: $\lambda\mapsto\hat K_{\o,r}$ and $\lambda\mapsto \hat H_{\o,r}$ are  complex-valued analytic function of $\l$ satisfying 
$\hat K_{+,r}=\hat K^*_{-,r}$, $\hat H_{+,r}=\hat H^*_{-,r}$, $\hat K_{\o,r}=K_{\o,r}+O(\l)$, and $\hat H_{\o,r}=K_{\o,r}+O(\l)$; the correction term $R_{r,r'}(x,y)$  is translational invariant and satisfies $|R_{r,r'}(x,0)|\le C |x|^{-5/2}$.}

{Moreover, the interacting dimer propagator satisfies
\begin{equation} \hat G^{(2)}(k+\bar p^\o) \stackrel{\mathfrak c\to0}= \hat G^{(2)}_{R,\o}(k)[1+O(|k|^\th)],\label{h10ab}\end{equation}
for some $\th>0$\footnote{We can choose $\th=1/2$; from now on, this is the choice that the reader should keep in mind, unless otherwise stated.}.
Finally,  if $0<\mathfrak c\le |p|,|k|,|k+p|\le 2\mathfrak c$, then the interacting vertex function of the dimer model 
satisfies 
\begin{equation}\label{h10a}
\hat G^{(2,1)}_r(k+ \bar p^\o, p)
\stackrel{\mathfrak c\to0}= -\sum_{\o'=\pm}\hat K_{\o',r}\hat G^{(2,1)}_{R,\o',\o}(k,p)[1+O(\mathfrak c^\th)]\;.\end{equation}}
\end{Proposition}

{The statements contained in this proposition are proved in different subsections of Section \ref{sec:RG}: the 
construction of the functions $\bar p^\o, \bar\a_\o, \bar\b_\o$ is given in Sect.\ref{seccount}; the construction of the function $\l_{\infty}$ is given in 
Sect.\ref{remmm}, while the construction of $Z$ is given in 
Sect.\ref{bareZ}; the proof of \eqref{hh110} is given in Sect.\ref{sec:asympt}, and \eqref{h10ab}-\eqref{h10a} can be proved along the same lines.}

\subsection{Proof of Theorem \ref{th:1}}
The proof of the existence of the thermodynamic limit for the average of all the local functions of dimer configurations, Eq.\eqref{eq:exlim}, is postponed to 
Section \ref{sec:ultimiquattro}, see in particular the comments after \eqref{6.105}. The other statements in Theorem \ref{th:1} follow easily
from Proposition \ref{prop:comp}. In fact, if we rewrite Equation \eqref{hh110} by using Equations \eqref{googleee} and \eqref{instangraaam}, we obtain Equation \eqref{eq:32xx} with
\begin{equation}\label{klbark}\bar K_{\o,r}=\hat K_{\o,r}\frac1{Z\sqrt{1-\t^2}}, \qquad \bar
H_{\o,r}=\hat H_{\o,r}\frac{\sqrt{\bar B}}{Z}.
\end{equation}
which satisfy the desired properties, \eqref{KorHor1} and
\eqref{KorHor2}, thanks to the stated properties of
$Z, \hat K_{\o,r},\hat H_{\o,r}$ and of $\bar B$. Recall that $\tau$ is defined in \eqref{eq:deftau}, and the critical exponent $\nu$ is given in
Equation \eqref{nunonnu}: therefore, the fact that $\nu=1+O(\l)$, see \eqref{nu1+O}, follows from
the definition there and from the fact that $\l_\infty=O(\l)$. The other properties of $\bar p^\o,\bar\a_\o,\bar\b_\o$ stated in Theorem \ref{th:1} 
are the same as those stated in items (1)-(2) of Proposition \ref{prop:comp}.

\subsection{Proof of Theorem \ref{th:2}}

The key ingredient in the proof of Theorem \ref{th:2} is the analogue
of \eqref{eq:32x}-\eqref{eq:32x.1} for the interacting case, namely
formula \eqref{eq:32xl}. We start by discussing the proof of this
formula {which}, as we shall see, is a direct consequence of the
identities \eqref{h10ab}--\eqref{h10a}, and of the lattice Ward
Identity \eqref{eq:WIF}. In fact, combining these three identities,
we obtain: 
\begin{eqnarray}\label{xa1}
&& \sum_{\o'=\pm}\mathcal D_{\o'}(p)\hat G^{(2,1)}_{R,\o',\o}(k,p) 
= \left[ \hat G^{(2)}_{R,\o}(k) - \hat
G^{(2)}_{R,\o}(k+p) \right ] [1+O(\mathfrak c^\th)],\qquad {\phantom{\cdot}}
\end{eqnarray}
where (with $v_r$ as in \eqref{vii})
$$\mathcal D_{\o'}(p)=-i \sum_{r=2}^4\hat K_{\o',r}\,p\cdot v_r$$ and,
as before, $0<\mathfrak c\le |p|,|k|,|k+p|\le 2\mathfrak c$.  By
comparing this equation with \eqref{h11}, and recalling that
$\hat v(0)=1$ (see \eqref{eq:normaas}) we get
\begin{equation} Z(1-\t)\sum_{\omega'=\pm}\bar D_{\omega'}(p)\hat G^{(2,1)}_{R,\omega',\o}(k,p)=\sum_{\o'=\pm}\mathcal D_{\o'}(p)\hat G^{(2,1)}_{R,\o',\o}(k,p) 
[1+O(\mathfrak c^\th)]\;.\label{z1tau}\end{equation}
This implies that 
\begin{equation}\label{tildenontilde}-i\sum_{r=2}^4\hat K_{\o,r}(v_r)_1=
  Z(1-\t)\bar\a_\o,\qquad
-i\sum_{r=2}^4\hat K_{\o,r}(v_r)_2=
Z(1-\t)\bar\b_\o.\end{equation} In order to
deduce \eqref{tildenontilde} from \eqref{z1tau}, one can proceed as follows: by  \cite[Eq. C.24]{GMT17b}, we have that
\begin{equation}\label{G21oo} G^{(2,1)}_{R,-\omega,\o}(k,p)=\tau \hat v(p)\frac{\bar D_\o(p)}{\bar D_{-\o}(p)}\hat G^{(2,1)}_{R,\omega,\o}(k,p).\end{equation}
By plugging this identity into \eqref{z1tau} we get  (keeping the terms of dominant order as $p\to 0$ only and using $\hat v(0)=1$):
\begin{equation}
Z(1-\t^2)\bar D_{\omega}(p)\bar D_{-\omega}(p)=\mathcal D_\o(p) \bar D_{-\omega}(p)+
\tau \bar D_{\omega}(p)\mathcal D_{-\omega}(p).\end{equation}
Computing this formula at $p_2=0, p_1\neq0$ first, both for $\o=+$ and $\o=-$ and then repeating
the computation for $p_1=0, p_2\neq0$, one gets a system of linear
equations for the coefficients $-i\sum_{r=2}^4\hat K_{\o,r}(v_r)_j$, with $j=1,2, \o=\pm$, whose
solution is \eqref{tildenontilde}.

By replacing \eqref{klbark} into \eqref{tildenontilde} and recalling that $\nu=\frac{1-\t}{1+\t}$, cf. \eqref{nunonnu},
we find
\begin{equation}\label{tildenonbar} \bar K_{\o,2}+\bar K_{\o,3}=-i\sqrt\nu \bar\a_\o,\qquad 
  \bar K_{\o,3}+\bar K_{\o,4}=-i\sqrt\nu \bar\b_\o.\end{equation}
We claim that $\sum_{r=1}^4\bar K_{\o,r}=0$ (we shall prove this fact in a moment): therefore, the first equation can be rewritten as
$\bar K_{\o,1}+\bar K_{\o,4}=i\sqrt\nu \bar\a_\o$. In terms of the `elementary steps' $s(x,j)$ in direction $\vec e_{j}$ centered at $x$, introduced before \eqref{eq:32x}, the two equations in 
\eqref{tildenonbar} become
\begin{eqnarray}\label{eq:bbhh.1}
&&\hskip-1.truecm\sum_{e\in s(x,1)}
\s_e \bar K_{\o,r(e)}=-i\sqrt\nu \bar\b_\o=-i\o\sqrt\n\D_1\bar \phi_\o\\
&&\hskip-1.truecm\sum_{e\in s(x,2)}
\s_e \bar K_{\o,r(e)}=i\sqrt\nu \bar\a_\o=
-i \o\sqrt{\nu}\D_2\bar \phi_\o,\label{eq:bbhh.2}
\end{eqnarray}
{which are the two cases of Equation \eqref{eq:32xl}.}

In order to complete the proof of \eqref{eq:bbhh.1}-\eqref{eq:bbhh.2},
we need to prove that $\sum_{r=1}^4\bar K_{\o,r}=0$, as claimed
above. For this purpose, we consider \eqref{3.37}, and combine it with
\eqref{h10ab}-\eqref{h10a}, thus getting, if
$0<\mathfrak c\le |p|,|k|,|k+p|\le 2\mathfrak c$
\begin{equation}
\sum_{r=1}^4\sum_{\o'=\pm}\hat K_{\o',r} \hat G^{(2,1)}_{R,\o',\o}(k,p)[1+O(\mathfrak c^\th)]\stackrel{
\mathfrak c\to0}=\hat G^{(2)}_{R,\o}(k+p)[1+O(\mathfrak c^\th)].\end{equation}
{Using \eqref{G21oo} to rewrite the left hand side and recalling that $\hat v(p)=1+O(p)$}, this becomes
\begin{eqnarray}
  &&\hat G^{(2,1)}_{R,\o,\o}(k,p)\sum_{r=1}^4\Big(\hat K_{\o,r}+\t\hat K_{-\o,r}\frac{\bar D_\o(p)}{\bar D_{-\o}(p)}\Big)[1+O(\mathfrak c^\th)]=\\
  &&\hskip6.9truecm =\hat G^{(2)}_{R,\o}(k+p)[1+O(\mathfrak c^\th)].\nonumber\end{eqnarray}
Now, {recalling that the magnitude of the correlation functions for small $p,k$ has the form given in 
\eqref{eq:Ghat2R} and \eqref{eq:Ghat21R}, for this to hold in the limit $\mathfrak c\to0$ we must have}
\begin{equation}\label{eq5:15}\sum_{r=1}^4\Big(\hat K_{\o,r}+\t\hat K_{-\o,r}\lim_{p_j\to 0}\frac{\bar D_\o(p_j)}{\bar D_{-\o}(p_j)}\Big)=0,\end{equation}
for any sequence $p_j$ along which the ratio
$\bar D_\o(p_j)/\bar D_{-\o}(p_j)$ admits a limit. Note that, in
general, the limit depends upon the chosen subsequence. For instance,
if $p_j=(s_j,0)$ with $s_j\to0$ then the limit is
$-\alpha_\o^2/|\alpha_\o|^2$ while if $p_j=(0,t_j)$ with $t_j\to0$ the
limit is $-\beta_\o^2/|\beta_\o|^2$. On the other hand, these two
values cannot be equal since we know that the ratio
$\alpha_\o/\beta_\o$ is not real (cf. \eqref{eq:waa}).  {Consequently,
\eqref{eq5:15} implies} that $\sum_{r=1}^4\hat K_{\o,r}=0$ that, in
light of \eqref{klbark}, is equivalent to
$\sum_{r=1}^4 \bar K_{\o,r}=0$, as desired.

\medskip

With the identities \eqref{eq:bbhh.1}-\eqref{eq:bbhh.2} at hand, we
can easily prove \eqref{eq:35l}, by repeating the analogue of the
discussion leading, in the non-interacting case, to
\eqref{eq:35bis}. We will be very sketchy since the analogous argument
has been given in detail in \cite{GMT17a} in the case of the model
with weights $\underline t\equiv 1$.  We start from the very definition of the
covariance of the height difference:
\begin{equation} 
\label{covariance}
\mathbb E_\l\left[(h(\eta_1)-h(\eta_2));(h(\eta_3)-h(\eta_4))
\right]= \sum_{e\in C_{\eta_1\to \eta_2}} \sum_{e'\in C_{\eta_3\to \eta_4}}\sigma_e
\sigma_{e'} \mathbb E_\l(\mathds 1_e;\mathds 1_{e'}), \end{equation}
where $C_{\eta_1\to \eta_2}$ and $C_{\eta_3\to \eta_4}$ are two lattice paths connecting $\eta_1$ with $\eta_2$, and $\eta_3$ with $\eta_4$, respectively. 
For simplicity, we assume that $\eta_1$ and $\eta_2$ have the same parity, and similarly for $\eta_3$ and $\eta_4$: in this way, it is possible to choose the two paths 
$C_{\eta_1\to \eta_2}$ and $C_{\eta_3\to \eta_4}$ to be concatenations of `elementary steps' $s(x,j)$ in directions $\pm\vec e_j$, see the discussion after 
\eqref{eq:Kr} above. 
For simplicity, let us also assume  that the mutual distances between the faces $\eta_1,\dots \eta_4$ are all comparable, i.e.
\begin{eqnarray}
  \label{eq:matassa}
0<  c<\frac{\min_{i\ne j}|\eta_i-\eta_j|}{\max_{i\ne j}|\eta_i-\eta_j|}.
\end{eqnarray}
In this case, we choose the two paths $C_{\eta_1\to \eta_2}$ and
$C_{\eta_3\to \eta_4}$ to be of length at most
$C \max_{i\ne j}|\eta_i-\eta_j|$ and to be at mutual distance
$C^{-1} \max_{i\ne j}|\eta_i-\eta_j|$, for some constant $C=C(c)$.

We now insert \eqref{eq:32xx} into \eqref{covariance} and, by repeating the discussion of \cite[Section 3.2]{GMT17a}, we find that the dominant contribution comes from $\bar A_{r,r'}$ (the contribution from $\bar B_{r,r'}$ is sub-dominant due to the oscillating pre-factors):
\begin{eqnarray} 
\label{covariance_2}
  &&\mathbb E_\l\left[(h(\eta_1)-h(\eta_2));(h(\eta_3)-h(\eta_4))
     \right]=\\
  &&\quad = \sum_{e\in C_{\eta_1\to \eta_2}} \sum_{e'\in C_{\eta_3\to \eta_4}}\sigma_e
     \sigma_{e'} \bar A_{r(e),r(e')}(x(e),x(e')) \\
  &&\quad  +\, O\left(\frac1{\min_{i\ne j\le 4}|\eta_i-\eta_j|^{1/2}+1}\right), \end{eqnarray}
where $r(e)$ is the type of the edge $e$, $x(e)$ is the coordinate of the black site of $e$. By using 
the explicit expression of $\bar A_{r,r'}$, \eqref{eq:AAl}, and by decomposing the two paths $C_{\eta_1\to \eta_2}, C_{\eta_3\to \eta_4}$, into a sequence of elementary steps, we obtain 
(denoting the generic elementary step in $C_{\eta_1\to \eta_2}$, resp. $C_{\eta_3\to \eta_4}$, by $s(x,j)$, resp. $s(x',j')$)
\begin{eqnarray} 
  \eqref{covariance_2}&=& \frac1{4\pi^2}\sum_{\o=\pm}\sum_{\substack{s(x,j)\in C_{\eta_1\to \eta_2}\\ s(x',j')\in C_{\eta_3\to \eta_4}}} \sum_{\substack{e\in s(x,j)\\ e'\in s(x',j')}} {\sigma_e\sigma_{e'}}
  \frac{\bar K_{\o,r(e)} \bar K_{\o,r(e')} }{(\bar \phi_\o(x-x'))^2}\label{cov_3}\\
                      &&+\, O\left(\frac1{\min_{i\ne j\le 4}|\eta_i-\eta_j|^{1/2}+1}\right).\nonumber\end{eqnarray}
{                    We now use  \eqref{eq:bbhh.1}-\eqref{eq:bbhh.2} and the symmetry $\bar\phi_\o=\bar\phi^*_{-\o}$ to rewrite the dominant term in \eqref{cov_3} as the Riemann sum approximating the following integral:}
\begin{equation}
\label{eq:35tetris}
 - \frac{\nu}{2\pi^2}\Re\int_{\bar \phi_+(\eta_1)}^{\bar \phi_+(\eta_2)}dz\int_{\bar \phi_+(\eta_3)}^{\bar \phi_+(\eta_4)}dz' \frac {1
    }{(z-z')^2}
\end{equation}
whose explicit evaluation gives the main term in the r.h.s. of
\eqref{eq:35l}. Putting together the error terms, we obtain the
statement of Theorem \ref{th:2}, as desired.

In the case where \eqref{eq:matassa} fails (e.g.  when
$\eta_1=\eta_3$, $\eta_2=\eta_4$ and \eqref{covariance} is just the
variance of the height gradient), one chooses the paths
$C_{\eta_1\to\eta_2},C_{\eta_3\to\eta_4}$ to be ``as well separated as
possible'' (cf. \cite[Sec. 3.2]{GMT17a}) and the rest of the argument
works the same.

\section{Renormalization Group analysis} 
\label{sec:RG}

In this section we discuss the multiscale analysis of the dimer model and the
comparison with the continuum model, which leads us to the
results spelled out  in Proposition \ref{prop:comp}.

We first make a few preliminary manipulations on the Grassmann integral, in order to set it up in a form appropriate for 
multiscale integration (Section \ref{sec:preliminari}). In the following sections, Section \ref{sec:illavorovero} to \ref{sec:asympt}, we first give the iterative definition of 
the effective potentials, then explain how to bound the norm of their kernels, how to fix the dressed velocities and Fermi points in order to make
the expansion convergent uniformly in the thermodynamic limit and, finally, how to adapt the bounds to the computation of the correlation functions. 
For a more detailed guidance on Sections \ref{sec:illavorovero} to \ref{sec:asympt}, see the end of Section \ref{sec:preliminari}.

From now on, $C,C',\ldots,$ and $c,c',\ldots,$ denote universal
constants, whose specific values might change from line to line.

\subsection{Preliminaries}
\label{sec:preliminari}
As a preliminary step, we rewrite the quadratic part $S$ of the action in \eqref{eq:Wap} as a ``dressed'' term $S_0$ plus a "counter-term" $N=S-S_0$, whose role is 
to fix the location of the interacting Fermi points and Fermi velocities. 
Namely, letting as usual
\begin{eqnarray}
  \label{eq:fourier}  \hat\psi^\pm_k=\sum_{x\in\Lambda}\psi^\pm_xe^{\mp i k x}, \quad k\in \mathcal P(\bt), \qquad \psi^\pm_x=\frac1{L^2}\sum_{k\in \mathcal P(\bt)}e^{\pm ikx}\hat\psi^\pm _k,
\end{eqnarray}
we write:
\begin{equation}
S(\psi)=-L^{-2}\sum_{k\in \mathcal P(\bt)} \mu (k)\hat \psi^+_k\hat\psi^-_k\equiv S_0(\psi)+N(\psi),  \end{equation}
where $S_0(\psi)=-L^{-2}\sum_{k\in \mathcal P(\bt)} \mu_0 (k)\hat \psi^+_k\hat\psi^-_k$, with 
\begin{equation}\mu_0(k)=\mu(k)+\sum_{\o=\pm}\bar\chi_0(k-\bar p^\o)\left[-\mu(\bar p^\o)+a_\o(k_1-\bar p^\o_1)+b_\o(k_2-\bar p^\o_2)\right].\label{mu0}\end{equation}
In this equation: 
\begin{enumerate}
\item $\bar p^\o=\bar p^\o(\l)$, with $\o=\pm$, are points in
  $[-\pi,\pi]^2$, such that $\bar p^++\bar p^-=(\pi,\pi)$, and they
  will be fixed via the multiscale construction. A posteriori they can
  be interpreted as ``dressed Fermi points''; they are the same
  functions appearing in Theorem \ref{th:1}.
\item $a_\o=a_\o(\l)\in \mathbb C$ and $b_\o=b_\o(\l)\in\mathbb C$ are such that $a_\o=-a^*_{-\o}$ and
  $b_\o=-b^*_{-\o}$; they will also be fixed via the multiscale
  construction. A posteriori, their choice fixes the ``dressed Fermi
  velocities'' via the following relations:
  \begin{eqnarray}
 \partial_{p_1}\mu_0(\bar p^\o) =\partial_{p_1}\mu(\bar p^\o)+a_\o=:\bar\alpha_\o,\label{eq:6.4}\\
    \partial_{p_2}\mu_0(\bar p^\o)=\partial_{p_2}\mu(\bar p^\o)+b_\o=:\bar\beta_\o,\label{eq:6.5}
  \end{eqnarray}
where $\bar\a_\o,\bar\b_\o$ are the same functions appearing in Theorem \ref{th:1}.
\item the function $\bar\chi_0$ is defined as: 
  $\bar\chi_0(k')=\bar\chi(|\mathcal M^{-1}k'|)$, where: (1)
  $\mathcal M$ is the same matrix as \eqref{eq:M}, with $\bar\a^{1}$
  and $\bar\a^2$ (resp. $\bar\b^{1}$ and $\bar\b^2$) the real and
  imaginary parts of $\bar \a_+$ (resp. $\bar\b_+$); (2) $\bar \chi:\mathbb R^+\to [0,1]$ is a $C^\infty$ cut-off
  function in the Gevrey class of order $2$ (see \cite[Appendix
  C]{GMT17a}) that is equal to 1, if its argument is smaller than
  $c_0/2$, and equal to 0, if its argument is larger than $c_0$; here $c_0$ is a small enough constant, such that in particular the support of
  $\bar\chi_0(\cdot-\bar p^+)$ is disjoint from the support of  $\bar\chi_0(\cdot-\bar p^-)$. For
  later reference, we also let for $h$ a negative integer
  \begin{eqnarray}
    \label{eq:chih}
  \bar \chi_h(k'):=\bar\chi_0(2^{-h}k')  .
  \end{eqnarray}
\end{enumerate}

From the properties just stated of $\bar p^\o,a_\o,b_\o$ and
$\bar\chi(\cdot)$, we see that
\begin{eqnarray}
  \label{eq:simmmu0}
  \mu_0((\pi,\pi)-k)=\mu_0^*(k).
\end{eqnarray}

In the integration over $\psi$ in \eqref{eq:Wap}, the Fourier modes
$k$ that are the closest to the zeros of $\mu_0(\cdot)$ play a
somewhat special role, so they have to be treated separately, at the
very last step of the multi-scale procedure (see Section
\ref{sec:ultimiquattro}).  Namely, given $\bt\in \{0,1\}^2$, let
$k^\pm_{\bt}$ be the values of $k\in \mathcal P(\bt)$ that are closest
to $\bar p^\pm$ and note that $k^+_\bt=(\pi,\pi)-k^-_\bt$ [If there is
more than one momentum at minimal distance from $\bar p^\pm$ (there
are at most four), any arbitrary choice will work]. Next, we decompose
the quadratic action $S_0(\psi)$ as a sum of a term depending only on
$k^\pm_\bt$ plus a term depending only on the modes in
\[\mathcal P'(\bt):=\mathcal P(\bt)\setminus \{k^+_\bt,k^-_\bt\},\]
and we rewrite \eqref{eq:Wap} as
\bea
  e^{\mathcal W_L^{(\bt)}(A,\phi)}&=&
  \int D\psi\,e^{-L^{-2}\sum_{\o=\pm}\mu_0(k_\bt^\o)\hat\psi^+_{k_\bt^\o}\hat\psi^-_{k_\bt^\o}}\\
  &\times&  e^{-L^{-2}\sum_{k\in \mathcal P'(\bt)}\mu_0(k)\hat\psi^+_{k}\psi^-_{k}+N(\psi)+V(\psi,A)+(\psi,\phi)}.\nonumber\eea
We  multiply and divide by 
\begin{eqnarray}
  e^{L^2 E^{(0)}}:=\prod_{k\in \mathcal P'(\bt)}
  \mu_0(k),\end{eqnarray}
(the product is non-zero since we singled out the possibly zero modes $k^\pm_{\bt}$) and, letting
\begin{eqnarray}
  \label{eq:apc}
\hat \Psi_\o^\pm:=\hat \psi^\pm_{k_\bt^\o},  
\end{eqnarray}
we rewrite the generating function as 
\begin{equation}
\label{eq:Wanophi}
e^{\mathcal W_L^{(\bt)}(A,\phi)}=\int  D\hat \Psi e^{-L^{-2}\sum_{\o=\pm}\mu_0(k_\bt^\o)\hat\Psi^+_{\o}\hat\Psi^-_ {\o}+\mathbb W_L^{(\bt)}(A,\phi,\Psi)}.\end{equation}
Here
\begin{eqnarray}
\Psi^\pm_x=\frac1{L^2}\sum_{\o=\pm}e^{\pm i k_\bt^\o x}\hat\Psi^\pm_{\o},  \quad
 \int  D\hat \Psi\prod_{\o=\pm} \hat\Psi^-_{\o}\hat \Psi^+_{\o}= L^4,\label{crollo}
\end{eqnarray}
(the $L^4$ factor comes from the fact that \eqref{eq:Dpsi} translates in Fourier space into $
  \int D\psi \prod_{k\in \mathcal P(\bt)}[L^{-2}{\hat\psi^-_k\hat\psi^+_k}]=1$)
 and
\begin{equation}
e^{\mathbb W_L^{(\bt)}(A,\phi,\Psi)}:=e^{L^2 E^{(0)}} \int P_{g_{0}}(d\psi) e^{N(\Psi+\psi)+V(\Psi+\psi,A)+(\Psi+\psi,\phi)},\label{mathbbW}
\end{equation}
with  $P_{g_{0}}$ the Grassmann Gaussian integration (see footnote \ref{foto:Gint}) with propagator 
\begin{equation} g_{0}(x,y)=L^{-2}\sum_{k\in \mathcal P'(\bt)}
  \frac{e^{-ik(x-y)}}{\mu_0(k)}.\label{eq:g00}\end{equation} From this
point, we proceed as follows. First, we perform in a multi-scale way
the integration over the Grassmann variables $\psi$, i.e. over the
Fourier modes except $k^\pm_{\bt}$: the inductive integration
procedure, including the definition of the {\it running coupling
  constants (RCC)}, is described in Section \ref{sec:illavorovero};
the outcome of the construction can be conveniently expressed in terms
of a Gallavotti-Nicol\`o tree expansion, similar to the one described
in \cite[Section 6.2]{GMT17a}. The main definitions (and the main
differences compared to the case treated in \cite{GMT17a}) are
summarized in Section \ref{sectree}; in the same section, we also
state the bounds satisfied by the kernels of the effective potential,
see Proposition \ref{prop:an}, {\it under the assumption that the RCC
  are uniformly bounded in the infrared}, see condition
\eqref{ggg0}. The proof that the RCC remain in fact bounded under the
iterations of the renormalization group map is given in Section
\ref{secfl}; the flow of the RCC can be controlled only if their
initial data are properly fixed: as shown there, the choice of the
initial data fixes the dressed Fermi points $\bar p^\o$ and the
dressed Fermi velocities $\bar\a_\o$, $\bar\b_\o$, as anticipated
after \eqref{mu0}. In Section \ref{sec:ultimiquattro} we describe the
integration of the last two modes and prove the existence of the
thermodynamic limit for the correlation functions, with explicit
bounds on the speed of convergence as $L\to\infty$. Finally, in
Section \ref{sec:asympt}, we compute the fine asymptotics of the
correlations functions, via a comparison of the tree expansion of the
dimer model with that of the continuum model of Section \ref{sec:IR}, and complete the proof of Proposition 
\ref{prop:comp}.

\subsection{Multi-scale analysis}
\label{sec:illavorovero}
 In this section we describe the multi-scale computation of $\mathbb W_L^{(\bt)}(A,\phi, \Psi)$ {defined in} \eqref{mathbbW}. 
 We consider explicitly only the case $\phi=0$; the general case can be treated  analogously but we will not belabor the details in this paper. 
  
The procedure is based on a systematic use of the `addition principle' for Gaussian Grassmann integrals,
namely the following property \cite[Sec. 4]{GMreview}:
if $P_{g}(d\psi)$ is the Grassmann Gaussian integration with propagator $g$ and $g=g_1+g_2$ then 
\begin{equation}
  \label{eq:addpr}
\int P_{g}(d\psi) F(\psi)=\int P_{g_1}(d\psi_1)  P_{g_2}(d\psi_2) F(\psi_1+\psi_2).
\end{equation}
We apply this formula to $P_{g_{0}}$, in connection with the following decomposition of the propagator $g_{0}(x,y)$:
\begin{equation}
g_{0}(x,y)=g^{(0)}(x,y)+\sum_{\o=\pm} e^{-i\bar p^\o(x-y)}g_{\o}^{(\le -1)}(x,y)\label{decompo}
\end{equation}
where 
\begin{equation}
  \label{eq:gpar0par}
g^{(0)}(x,y)= L^{-2}\sum_{k\in \mathcal P'(\bt)}e^{-i k  (x-y)}\frac{1-\bar\chi_{-1}(k-\bar p^+)-\bar\chi_{-1}(k-\bar p^-)}
{\mu_0(k)}
\end{equation}
and, if $\mathcal P_\o'(\bt)=\{k': k'+\bar p^\o\in \mathcal P'(\bt)\}$,
\begin{equation}
g^{(\le -1)}_{\o}(x,y)=L^{-2}\sum_{k'\in \mathcal P_\o'(\bt)}e^{-ik'(x-y)}\frac{\bar\chi_{-1}(k')}{\mu_0(k'+\bar p^\o)}.\label{eq:6.11}
\end{equation}
By using the decomposition \eqref{decompo} and \eqref{eq:addpr}, we rewrite \eqref{mathbbW} as
\begin{eqnarray}
  && e^{\mathbb
     W_L^{(\bt)}(A,0,\Psi)}=
     e^{L^2 E^{(0)}}\int P_{{(\le -1)}}(d\psi^{(\le -1)})\times\label{6.13}\\
  &&\hskip1.5truecm\times\int P_{(0)}(d\psi^{(0)})
     e^{N(\psi^{(0)}+\psi^{(\le
     -1)}+\Psi)+V(\psi^{(0)}+\psi^{(\le
     -1)}+\Psi,A)},\nonumber\end{eqnarray} where
$\psi^{(0)}+\psi^{(\le -1)}+\Psi$ is a shorthand
notation for
\begin{multline}
  \label{eq:sh}
  \{\psi^{(0)\pm}_x+\sum_{\o}e^{\pm i\bar p^\o
  x}\varphi^\pm_{x,\o}\}_{x\in\L}, \quad \varphi^\pm_{x,\o}:=\psi^{(\le -1)\pm}_{x,\o}+L^{-2}e^{\pm i (k^\o_\bt-\bar p^\o)x}\hat \Psi^\pm_\o. 
  \end{multline}
$P_{(0)}$ is the Grassmann Gaussian measure with
propagator
$g^{(0)}(x,y)$, while $P_{(\le -1)}$ is the Grassmann Gaussian
measure with propagator
$$\d_{\o,\o'}g^{(\le -1)}_{\o}(x,y)=\int P_{(\le -1)}(d\psi)
\psi^{(\le -1)-}_{x,\o}\psi^{(\le -1)+}_{y,\o'}.$$

Since the cutoff function $\bar\chi_{-1}$ in \eqref{eq:6.11} is a Gevrey
function of order $2$, the propagator $g^{(0)}$ has
stretched-exponential decay at large distances:
\be \label{eq:bbpprr}|g^{(0)}(x,y)|\le C e^{-\kappa \sqrt{|x-y|}},
\ee for suitable $L$-independent constants $C,\kappa>0$, if $|x-y|$ is
the distance on the torus $\L$.  This is seen by writing $g^{(0)}$
via the Poisson summation formula as a sum of Fourier integrals, as in
\cite[App. A]{GMT17a}; each integral decays in the desired way because
it is the Fourier transform of a Gevrey function \cite{Rodino}.

Next, we denote by $V^{(0)}(\cdot,J)$ the combination
$N(\cdot)+V(\cdot,A)$, re-expressed in terms of the variables
$J=\{J_{x,r}\}_{x\in\L,\, 1\le r\le 4}$, instead of $A$: here, if $b$
is the bond of type $r$ and black site $x$, we let
$J_{x,r}:=e^{A_b}-1$. The result of the integration over $\psi^{(0)}$
is rewritten in exponential form:
\begin{equation}e^{L^2 E^{(0)}}\int P_{(0)}(d\psi^{(0)}) e^{V^{(0)}(\psi^{(0)}+\f,J)}=
e^{L^2 E^{(-1)} + S^{(-1)}(J)+V^{(-1)}(\f,J)},\label{eq:tb0}\end{equation}
where \cite[Sec. 4]{GMreview} 
\begin{eqnarray} && L^2 (E^{(-1)}-E^{(0)}) +S^{(-1)}(J)+ V^{(-1)}(\f,J)=\label{eq:troppobella}\\
&&\qquad =\sum_{n\ge 1}\frac1{n!}\mathcal E^T_0(\underbrace{V^{(0)}(\psi^{(0)}+\f,J);\cdots;
V^{(0)}(\psi^{(0)}+\f,J))}_{n\ {\rm times}},\nonumber
\end{eqnarray}
with $\mathcal E^T_0$ the truncated expectation\footnote{in other
  words,
  $\mathcal E^T_0(\underbrace{V^{(0)};\cdots; V^{(0)}}_{n\ {\rm
      times}})$ is the $n-th $ cumulant of $V^{(0)}$ w.r.t. the
  Grassmann Gaussian integration $P_{(0)}$. See \cite[Sec. 4 and
  App.
  A.3]{GMreview}} 
w.r.t. the Grassmann Gaussian integration $P_{(0)}(d\psi^{(0)})$, and
$E^{(-1)}, S^{(-1)}(\cdot)$ are fixed by the condition
$S^{(-1)}(0)=0$, $V^{(-1)}(0,J)=0$.  The series in the r.h.s. is
absolutely summable, for $\l$ sufficiently small (independently of $L$),
see \cite[Sec. 4.2]{GMreview}.  The reason is that the propagator
$g^{(0)}$ has a fast decay in space, uniformly in $L$, as in \eqref{eq:bbpprr}.

The effective
potential on scale $-1$ can be represented as in the following formula
(which is a {\it definition} of the kernels
$W^{(-1)}_{n,m;\ul\o,\ul r}$):
\begin{eqnarray}
  && V^{(-1)}(\f,J)= \sum_{\substack{n,m\ge 0:\\ n\ {\rm even}, \ n\ge 2}}\sum_{\ul x,\, \ul y,\, \ul \o,\,\ul r}W^{(-1)}_{n,m;\ul\o,\ul r}(\ul x, \ul y)\nonumber\\\label{eq:6.16}
  &&\qquad \times\, \f^+_{x_1,\o_1}\f^-_{x_2,\o_2}\cdots\f^+_{x_{n-1},\o_{n-1}}\f^-_{x_n,\o_n} J_{y_1,r_1}\cdots J_{y_m,r_m},\end{eqnarray}
where: $\ul x=(x_1,\ldots, x_n)\in \Lambda^n,\ul y\in \Lambda^m,\ul\o\in\{-1,+1\}^n,\ul r\in\{1,\dots,4\}^m$; the Grassmann variables $\f^\pm_{x,\o}$ were defined in \eqref{eq:sh}. 
Moreover, the kernels  can be
written as
\begin{eqnarray}
  \label{eq:rk}
 W^{(-1)}_{n,m;\ul\o,\ul r}(\ul x, \ul y)=\tilde W^{(-1)}_{n,m;\ul r}(\ul x, \ul y)\,\exp\{i\sum_{j=1}^n(-1)^{j-1}\bar p^{\o_j}x_j\}, 
\end{eqnarray}
  with  $\tilde W^{(-1)}_{n,m;\ul r}(\ul x, \ul y)$ a function that is\footnote{
    The properties of $\tilde W^{(-1)}_{n,m;\ul r}$ listed here are a consequence of the translation invariance of the Hamiltonian of the model and of 
    Eq.\eqref{eq:identif}. The exponential oscillating factor in \eqref{eq:rk} has its origin in \eqref{eq:sh}.}
 independent of $\ul\o$, 
 translationally invariant, 
 periodic of period $L$ in $y_i$, and $\bt$-periodic of period $L$ in
 $x_i$ (here we say that, e.g., a function is $(0,1)$-periodic if it
 is periodic in the first coordinate and anti-periodic in the second,
 and similarly for the other cases).  The kernels $W^{(-1)}$ are not uniquely defined by \eqref{eq:6.16}; due to the anti-commutation of
 Grassmann variables and to the fact that $J_{y,r}$ are ordinary
 commuting variables, one can assume  that
  $W^{(-1)}_{n,m;\ul \o,\ul r}(\ul x, \ul y)$ are symmetric
 under permutations of the indices $(y_1,r_1), \ldots, (y_m,r_m)$, and
 anti-symmetric under permutations of the indices
 $\{(x_{2i},\o_{2i})\}_{1\le i\le n/2}$ and of the indices
 $\{(x_{2i-1},\o_{2i-1})\}_{1\le i\le n/2}$. An analogous
 representation is valid for $S^{(-1)}(\cdot)$, and we denote its
 kernels by $W^{(-1)}_{0,m;\ul\o,\ul r}(\ul y)$.

There is an equivalent expression for $V^{(-1)}$ in
Fourier space. We use the following convention for the Fourier
transforms of the fields $\psi,J$:
$$\varphi^\pm_{x,\o}=L^{-2}\sum_{k\in \mathcal P_\o(\bt)}e^{\pm i k\cdot x}\hat \varphi^\pm_{k,\o},\qquad J_{x,r}=L^{-2}\sum_{p\in \mathcal P(\V0)} \hat J_{p,r} e^{-ipx},$$ where $\mathcal P_\o(\bt):=\{k: k+\bar p^\o\in\mathcal P(\bt)\}$.
The reason why $k\in \mathcal P_\o(\bt)$ (and not in $\mathcal P(\bt)$ as in
\eqref{eq:fourier}) is that the combination
$e^{\pm i \bar p^\o x}\varphi^\pm_{x,\o}$ is $\bt$-periodic, and not
$\varphi^\pm_{x,\o}$ itself.  This sum includes also  the momenta
$k=k^\o_\bt-\bar p^\o$, $\o=\pm$. Of course,
recalling that the only non-zero modes of $\psi_{x,\o}^\pm$
(resp. $\Psi^\pm_{x,\o}$) are in $\mathcal P'_\o(\bt)$
(resp. $\{k^\o_\bt-\bar p^\o\}_{\o=\pm}$), we have that
$$\hat \f_{k,\o}^\pm=\begin{cases} \hat \psi^\pm_{k,\o}, \ {\rm if}\ k\in\mathcal P'_\o(\bt),\\
  \hat \Psi^\pm_{\o}, \ {\rm if}\ k\not\in \mathcal
  P'_\o(\bt). \end{cases}$$

Then, \eqref{eq:6.16} becomes
\begin{eqnarray}
  && V^{(-1)}(\varphi,J)= \hskip-.2truecm\sum_{\substack{n,m\ge 0:\\ n\ {\rm even}, \ n\ge 2}}\hskip-.1truecm L^{-2(n+m)}\hskip-.1truecm\sum_{\ul k,\, \ul p,\, \ul \o,\,\ul r} 
  \hskip-.2truecm\hat W^{(-1)}_{n,m;\ul\o,\ul r}(k_2,\ldots,k_n,p_1,\ldots,p_m)\times \nonumber\\
  &&\qquad \times\, \hat\f^+_{k_1,\o_1}\hat\f^-_{k_2,\o_2}\cdots\hat\f^+_{k_{n-1},\o_{n-1}}\hat\f^-_{k_n,\o_n} \hat J_{p_1,r_1}\cdots \hat J_{p_m,r_m}\d_{\ul \o}(\ul k, \ul p),\label{eq:6.16mo}\end{eqnarray}
where $\underline k=(k_1,\ldots, k_n)$, with $k_i\in \mathcal P_{\o_i}(\bt)$, $\ul p=(p_1,\ldots p_m)\in \big[\mathcal P(\V0)\big]^m$ and 
\be\d_{\ul\o}(\ul k,\ul p)=L^2 \times\left\{
  \begin{array}{ccc}
    1 & \text{if} & \sum_{j=1}^n(-1)^{j-1}(k_j+\bar p^{\o_j})=\sum_{j=1}^m p_j\!\!\!\mod(2\pi,2\pi)\\
    0  & \text{else} & 
  \end{array}\right.
\label{eq:delta}\ee
is the periodized Kronecker delta enforcing momentum
conservation. Also,  $\hat W^{(-1)}_{n,m;\ul\o,\ul r}(k_2,\ldots,k_n,p_1,\ldots,p_m)$ is just the Fourier transform of $\tilde W^{(-1)}_{n,m;\ul r}$, computed at momenta $k_2+\bar p^{\o_2},\ldots, k_n+\bar p^{\o_n},p_1,\ldots,p_m$
(it  depends only on $n+m-1$ momenta, due to translation
invariance of $\tilde W^{(-1)}_{n,m;\ul r}$  in real space).

Using the Battle-Brydges-Federbush-Kennedy (BBFK) determinant formula and the Gram-Hadamard bound \cite[Sec. 4.2]{GMreview} for the truncated expectation in \eqref{eq:troppobella}, we find that 
$E^{(-1)}$, and 
$W^{(-1)}_{n,m;\ul\o,\ul r}(\ul x, \ul y)$ are absolutely convergent series and real analytic functions of \be \label{rcc00}(\n_{0,\o},a_{0,\o},b_{0,\o},\l_0),\ee
for $\max\{|\n_{0,\o}|,|a_{0,\o}|,|b_{0,\o}|,|\l_0|\}\le \e$ with $\e$ sufficiently small, 
where we denoted (for uniformity of notation with the running coupling constants 
$\n_{h,\o},a_{h,\o},b_{h,\o},\l_h$, to be introduced below):
\be\label{rcc0}\n_{0,\o}:=-\m(\bar p^\o),\quad a_{0,\o}:=a_\o,\quad b_{0,\o}:=b_\o,\quad \l_0:=\l.\ee
Moreover, $|E^{(-1)}|\le C\e$ and, using also the exponential decay of the bare potential, \eqref{eq:ub}, we find that
\begin{equation} \|W^{(-1)}_{n,m}\|_{\kappa,-1}\le C^{n+m}\e^{\max\{1,cn\}},\label{boundkappanorm}\end{equation}
for suitable constants $\kappa,C,c>0$ independent of the system size. Here
\begin{equation} \|W^{(-1)}_{n,m}\|_{\kappa,-1}:=L^{-2}\sup_{\ul \o,
    \ul r}\sum_{\ul x, \ul y}|W^{(-1)}_{n,m;\ul\o,\ul r}(\ul x, \ul
  y)|e^{\kappa \sqrt{2^{-1}d(\ul x,\ul
      y)}}, \label{kappanorm}\end{equation} and $d(x_1,\ldots,x_l)$ is
the length of the shortest tree on the torus $\L$ connecting the $l$
points in $(x_1,\ldots,x_l)$.
The choice of the stretched-exponential weight in \eqref{kappanorm} is related to the stretched-exponential decay of the propagator, see \eqref{eq:bbpprr}. 
For technical details about the proof \eqref{kappanorm}, or, better, of its analogue in a similar context, the reader can consult, e.g., \cite[Section III.A and Eq. (3.19)]{GGM}.

\begin{Remark}\label{rem.L} The fact that the kernels $W^{(-1)}_{n,m;\ul\o,\ul r}$ are absolutely convergent series of $(\n_{0,\o},a_{0,\o},b_{0,\o},\l_0)$, 
that each term in the expansion admits a limit as $L\to\infty$ (as one can check by inspection) and that they satisfy uniform bounds as $L\to\infty$, see \eqref{boundkappanorm}, 
implies that their infinite volume limits exist and satisfy the same bounds. For later reference, the infinite volume limit of $W^{(-1)}_{n,m;\ul\o,\ul r}$ will be denoted 
by $W^{(-1),\infty}_{n,m;\ul\o,\ul r}$, and similarly for its Fourier transform. 
\end{Remark}

\medskip

After this first integration step, we still need to integrate
$\psi^{(\le -1)}$ out, see \eqref{6.13}. Let us first informally
explain how this is done, before giving the precise inductive
procedure in Sections \ref{sec:tih}--\ref{sec:tis}.  The idea is to
repeat the same procedure as above: we rewrite (via the addition
principle)
$\psi^{(\leq -1)}_{\o} = \psi^{(-1)}_\o+\psi^{(\leq -2)}_{\o}$, where
$\psi^{(-1)}_\o$ (resp. $\psi^{(\le -2)}_\o$) is a Grassmann field
with propagator supported, in momentum space, on momenta
$k'\in \mathcal P'_\o(\bt)$ with $|k'|\sim 2^{-1}$
(resp. $|k'|\lesssim 2^{-2}$); we integrate $\psi^{(-1)}_\o$ out; we
exponentiate the result of the integration, thus defining the
effective potential on scale $-2$, in analogy with
\eqref{eq:tb0}-\eqref{eq:troppobella}; and so on.  One after the
other, we integrate the fields $\psi^{(-2)}, \ldots, \psi^{(h+1)}$
out, define the effective potential $V^{(h)}$ on scale $h$ (which
involves fields $\psi^{(\le h)}$ with momenta
$k'\in \mathcal P'_\o(\bt)$ that belong to the support of
$\bar\chi_{h}(\cdot)$ (cf. \eqref{eq:chih}), and continue until we
reach the `last scale', $h_L$, fixed by the finite volume $L$, which
induces a natural infrared cut-off. More precisely, $h_L$ is fixed as
the smallest (in absolute value) negative integer $h$ such that the
support of $\bar \chi_{h}(\cdot)$ has empty intersection with
$\mathcal P'_\o(\bt)$.  Note that, since all momenta in
$\mathcal P'_\o(\bt)$ are at distance at least $\pi/L$ from
$\bar p^\o$, we have $h_L\sim -\log_2L$ for $L$ large.
  The result of the
integration of the Grassmann fields $\psi^{(\le h_L)}$ gives the
generating function $\mathbb W_L^{(\bt)}(A,0,\Psi)$, as desired.

\medskip

In order for the bounds on the generating function to be uniform in
$L$, we need to improve the procedure roughly described here: at each
step, before integrating the field on the next scale, we actually need
to isolate and re-sum a certain selection of potentially dangerous
contributions to the effective potential, the so-called marginal and
relevant terms. We refer, e.g., to \cite[Sec. 5]{GMT17a}, see in
particular \cite[Section 5.2.2]{GMT17a} for a dimensional
classification of the divergent terms arising in a `naive' multiscale
scheme. As discussed there, see \cite[Eq. (5.8)]{GMT17a} and following
lines, the scaling dimension of the kernels with $n$ external fields
of type $\psi$ and $m$ external fields of type $J$ is $2-n/2-m$; in
the renormalization group jargon, positive scaling dimension (that is,
$2-n/2-m>0$ $\Leftrightarrow$ $(n,m)=(2,0)$) corresponds to {\it
  relevant} contributions, vanishing scaling dimension (that is,
$2-n/2-m=0$ $\Leftrightarrow$ $(n,m)=(4,0),(2,1)$) corresponds to {\it
  marginal} contributions, and negative scaling dimension corresponds
to {\it irrelevant} ones. In order to cure the potential divergences
associated with the terms with $(n,m)=(2,0),(4,0),(2,1)$, at each step
of the multiscale construction we properly `localize' and re-sum these
terms, via an iterative procedure that we now describe.

\subsubsection{The inductive statement}
\label{sec:tih}
Let us inductively assume that the fields
$\psi^{(0)}, \psi^{(-1)},\ldots,\psi^{(h+1)}$, $h\ge h_L$, have been
integrated out, and that after their integration the generating
function has the following structure, analogous to the one at scales
$0,-1$:
\bea && e^{-L^{-2}\sum_\o\mu_0(k^\o_\bt)\hat\Psi^+_{\o}\hat\Psi^-_{\o}+\mathbb W_L^{(\bt)}(A,0,\Psi)}=e^{L^2 E^{(h)}+S^{(h)}(J)}
\times\label{eq:vh}\\
&&\qquad \times e^{-L^{-2}Z_h\sum_\o\mu_{h,\o}(k^\o_\bt-\bar p^\o)\hat\Psi^+_{\o}\hat\Psi^-_{\o}}\int
P_{(\le h)}(d\psi)e^{V^{(h)}(\sqrt{Z_h}(\psi+\Psi),J)},\nonumber\eea 
for suitable real constants $E^{(h)}$, $Z_h$, and suitable `effective potentials' $S^{(h)}(J)$, $V^{(h)}(\f,J)$, to be defined inductively below, and fixed in such a way that 
$V^{(h)}(0,J)=S^{(h)}(0)=0$. In the second line, 
$$\m_{h,\o}(k):=\bar D_\o(k)+r_\o(k)/{Z_h},$$
where 
$$\bar D_\o(k)=\bar \a_\o k_1+\bar\b_\o k_2$$ and 
\begin{equation}r_\o(k)=\m(k+\bar p^\o)-\mu(\bar
  p^\o)-\partial_{k_1}\m(\bar p^\o)k_1-\partial_{k_2}\m(\bar
  p^\o)k_2\label{eq:rem}\end{equation} is a remainder of order
$O(k^2)$ for $k$ small. Finally, $P_{(\le h)}(d\psi)$ is the Grassmann
Gaussian integration with propagator (diagonal in the index $\o$) \be
\frac1{Z_h}g^{(\le
  h)}_{\o}(x,y)=\frac1{Z_h}\frac1{L^2}\sum_{k\in\mathcal
  P_\o'(\bt)}e^{-ik(x-y)}\frac{\bar\chi_{h}(k)}{\m_{h,\o}(k)}.\label{eq:gleh}\ee
 We will also prove
inductively that:
\begin{enumerate}
\item 
 $V^{(h)}(\f,J)$ has the same structure as 
 \eqref{eq:6.16mo}, with the upper index $(-1)$ in the kernels replaced by $(h)$;
 
\item the kernels of   $V^{(h)}(\f,J)$ satisfy the following symmetry:
  \begin{eqnarray}
    \label{eq:upps}
   \hat  W^{(h)}_{n,m;-\underline\o,\underline r}(\ul k,\ul p)=     \Bigl[\hat W^{(h)}_{n,m;\underline\o,\underline r}(-\ul k,-\ul p)\Bigr]^*.
  \end{eqnarray}

\end{enumerate}

\begin{Remark}
  \label{rem:domenicale}
It is important to emphasize right away that we will view the kernels
$W^{(h)}_{n,m;\ul\o,\ul r}$, $h\le-2$, 
as functions of:
\begin{enumerate}
\item [(i)]a sequence of {\it running coupling constants}
\[\{\l_{h'},\n_{h',\o},a_{h',\o},b_{h',\o},Y_{h',r,(\o,\o')}\}_{h< h'\le -1}.\] 
  
\item [(ii)] a sequence of single-scale
  propagators $\{g^{(h')}_\o/Z_{h'-1}\}_{h< h'\le -1}$, of the form
      \be
  \frac1{Z_{h-1}}g^{(h)}_\o(x,y):=\frac1{L^2}\sum_{k\in \mathcal
    P_\o'(\bt)}e^{-ik(x-y)}\frac{f_{h}(k)}{\tilde Z_{h-1}(k)\bar
    D_\o(k)+r_\o(k)},\label{eq:ghjy6}\ee where
  $f_{h}(k)=\bar \chi_h(k)-\bar\chi_{h-1}(k)$ and
  \[\tilde Z_{h-1}(k)=Z_{h-1}\bar \chi_h(k)+Z_h(1-\bar\chi_{h}(k));\]

\item [(iii)] the {\it irrelevant} part of $V^{(-1)}$, denoted by
  $\mathcal R V^{(-1)}$. 
\end{enumerate}
The actual values of the running coupling constants (RCCs) will be
defined via an inductive procedure in Sections
\ref{sec:h-1}-\ref{sec:tis}, the outcome of which is the beta function equation \eqref{beta2}. 
In Section \ref{secfl}, we will show that there is only one specific choice of the initial data $(\n_{0,\o},a_{0,\o},b_{0,\o})$, 
which we will determine via a fixed point argument, guaranteeing that the flow of RCCs is uniformly bounded for all $h\le 0$. 
For the fixed point argument itself, it is convenient to allow the beta function, as well as the kernels of the effective potential, to be computed 
at values of the RCCs different from the final, `correct', ones. This is what we mean by saying that 
$W^{(h)}_{n,m;\ul\o,\ul r}$ will be thought of as functions of the RCCs: we will allow ourselves to think of the RCCs as independent variables, 
which can be varied freely, as long as they remain sufficiently small; similarly for the dependences on $g^{(h)}_\o/Z_{h-1}$ and $\mathcal R V^{(-1)}$ mentioned in items (ii)-(iii): for certain manipulations discussed below, we will allow ourselves to modify the definition of the kernels by modifying the form of the single-scale propagators or of the kernel of the irrelevant part at scale $-1$, keeping the rest of the iterative 
definition unchanged.

\end{Remark}
\subsubsection{The inductive statement for $h=-1$}
\label{sec:h-1}
The representation \eqref{eq:vh} with \eqref{eq:gleh}-\eqref{eq:rem} is valid at the initial
step, $h=-1$, with $Z_{-1}=1$. To see this,  one
  needs to use that, if $k$ belongs to the support of
  $\bar \chi_{-1}$, then
  $\mu_0(k+\bar p^\o)=\mu(k+\bar p^\o)-\mu(\bar p^\o)+a_\o k_1 +b_\o
  k_2$, see \eqref{mu0}. Moreover, by using
  \eqref{eq:6.4}-\eqref{eq:6.5}, we can also rewrite
  $\mu(k+\bar p^\o)-\mu(\bar p^\o)+a_\o k_1 +b_\o k_2=\bar
  D_\o(k)+r_\o(k)$, which implies that \eqref{eq:gleh}
  at $h=-1$ is the same as \eqref{eq:6.11}.

  To see that \eqref{eq:upps} holds for $h=-1$, note that it is
  equivalent to requiring that $V^{(-1)}$ is invariant under the
  transformation $\varphi^\pm_{\o,x}\to\varphi^\pm_{-\o,x}$ together
  with complex conjugation of the kernels.  On the other hand, by
  Remark \ref{rem:coniugato}, we know that the potential
  $V^{(0)}(\psi,J)$ is invariant under conjugation of the kernels
  together with the transformation
  $\psi^\pm_x=(\psi_x^{(0)\pm}+\sum_\o e^{\pm i \bar
    p^\o}\varphi^\pm_{\o,x})\to(-1)^x\psi^\pm_x$, i.e.,
  $\psi_x^{(0)\pm}\to (-1)^x\psi_x^{(0)\pm}
  ,\varphi^\pm_{\o,x}\to\varphi^\pm_{-\o,x}$. The statement
    \eqref{eq:upps} for $h=-1$ easily follows from  the relation
  \eqref{eq:troppobella} between $V^{(0)}$ and $V^{(-1)}$ together
  with the fact that the propagator $g^{(0)}$ in \eqref{eq:gpar0par}
  satisfies
  \[
    [g^{(0)}(x,y)]^*=(-1)^{x+y} g^{(0)}(x,y),
  \]
    because $\bar p^++\bar p^-=(\pi,\pi)$.

\subsubsection{The inductive step}
\label{sec:tis}
We assume that \eqref{eq:vh} holds with $V^{(h)}$ satisfying the
properties specified in the inductive statement, and we discuss here
how to get the same representation at the next scale $h-1$. First, we
split $V^{(h)}$ into its {\it local} and {\it irrelevant} parts:
$V^{(h)}=\LL V^{(h)}+\RR V^{(h)}$ where, denoting by
$\hat W^{(h),\infty}_{n,m;\ul\o,\ul r}$ the infinite volume limit of
$\hat W^{(h)}_{n,m;\ul\o,\ul r}$,
\begin{eqnarray} 
&& 
 \LL V^{(h)}(\f,J):=\label{eq:6.21}\\
 &&\qquad =L^{-2}\sum_{\o}\sum_{k\in \mathcal P_\o(\bt)} 
 \hat \f^+_{k,\o} [\hat W^{(h),\infty}_{2,0;(\o,\o)}(0)+k\cdot \partial_{k}\hat W^{(h),\infty}_{2,0;(\o,\o)}({0})\big]\hat \f^-_{k,\o}	\nonumber\\
&& \qquad +\sum_{x\in\L}\sum_{\o_1,\ldots,\o_4} \f^+_{x,\o_1}\f^-_{x,\o_2}\f^+_{x,\o_3}\f^-_{x,\o_4} \hat W^{(h),\infty}_{4,0;(\o_1,\ldots,\o_4)}(0,0,0)\nonumber\\
&& \qquad + \sum_{x\in\L}\sum_{\o_1,\o_2,r} J_{x,r}\f^+_{x,\o_1}\f^-_{x,\o_2}e^{i(\bar p^{\o_1}-\bar p^{\o_2})x}\hat W^{(h),\infty}_{2,1;(\o_1,\o_2),r}(0,\bar p^{\o_1}-\bar p^{\o_2}).\nn\eea
   \begin{Remark} \label{remark:10}A few remarks about this definition are in order:
     \begin{enumerate}
     \item The existence of the limit of
       $\hat W^{(h)}_{n,m;\ul\o,\ul r}$ as
       $L\to\infty$ is a corollary of the inductive bounds on the
       kernels of $V^{(h)}$, which are uniform in
       $L$, as it was the case for
       $h=-1$, cf. with Remark \ref{rem.L}. More details on the
       inductive bounds on the kernels of
       $V^{(h)}$ are discussed below.
     \item The reason why, in the second line of \eqref{eq:6.21}, we
       only include terms where the Grassmann fields have the same
       index $\o$, is that the terms with opposite
       $\omega$ indices give zero contribution to the generating
       function, due to the support properties of the Grassmann
       fields. In fact, in \eqref{eq:vh} we need to compute
       $V^{(h)}$ at Grassmann fields $\hat \psi^{(\le
         h)\pm}_{k,\o}$ that, in momentum space, have the same support
       as $\hat g^{(\le h)}_\o(k)$, i.e.,
       $|\mathcal M^{-1}k|\le
       c_02^{h}$ (note that the support properties of
       $\hat g^{(\le h)}_\o$ are the same as those of
       $\bar\chi_h$ (cf. \eqref{eq:chih}), and these were discussed in
       the third item after \eqref{mu0}). If $h\le -1$ and
       $c_0$ is sufficiently small, quadratic terms of the form
       $\hat \psi^{(\le h)+}_{k,\o}\hat \psi^{(\le h)-}_{k+\bar
         p^\o-\bar
         p^{-\o},-\o}$ would involve two fields that cannot both
       satisfy this support property.
\item Due to the Grassmann anti-commmutation rules and the anti-symmetry of the kernels, the quartic term in \eqref{eq:6.21} can be rewritten as 
\begin{equation}4\sum_{x\in\L}\f^+_{x,+}\f^-_{x,+}\f^+_{x,-}\f^-_{x,-} \hat W^{(h),\infty}_{4,0;(+,+,-,-)}(0,0,0).\label{quartic:symm}\end{equation}

     \end{enumerate}
   \end{Remark}

   Along the induction step, we will need 
   a function $W^{(h),R}_{2,0;(\o,\o)}(x_1,x_2)$ (the upper index `$R$' stands for ``relativistic'') which should be thought of as
   the kernel for $n=2,m=0$ of a relativistic model.
   More precisely,
at step $h=-1$, one simply lets $W^{(-1),R}_{2,0;(\o,\o)}(x_1,x_2)\equiv 0$.
For $h<-1$, $W^{(h),R}_{2,0;(\o,\o)}$ is defined as a suitable modification of $W^{(h),\infty}_{2,0;(\o,\o)}$ (that, by the induction hypothesis, has already been defined); more precisely,  
$W^{(h),R}_{2,0;(\o,\o)}$ is obtained by making the following replacements in $W^{(h),\infty}_{2,0;(\o,\o)}$ (which should be thought of as a function of the running coupling constants, 
of the single scale propagators and of the irrelevant part of $V^{(-1)}$, as explained in Remark \ref{rem:domenicale}):
   \begin{enumerate}
   \item  [(i)] the running coupling constants
   $\{\n_{h',\omega},a_{h',\o},b_{h',\o}\}_{h'> h}$ are set zero, (note that the
   running coupling constants $\l_{h'}$ are {\it not} set equal to
   zero);
 \item [(ii)] the single-scale propagators $g^{(h')}_\o/Z_{h'-1}$ are replaced by the `relativistic' single-scale propagators 
   $g^{(h')}_{R,\o}/Z_{h'-1}$, for all $h< h'\le -1$, where
   \be g^{(h')}_{R,\o}(x,y)=\int_{\mathbb R^2} \frac{dk}{(2\pi)^2}e^{-ik(x-y)}\frac{f_{h'}(k)}{\bar D_\o(k)};\label{gRo}\ee
 \item [(iii)] $\mathcal R V^{(-1)}$ is set to zero.
   \end{enumerate}

   The function $W^{(h),R}_{2,0;(\o,\o)}$  will be shown to satisfy both the identity \eqref{eq:upps} and the extra symmetries (in Fourier space)
   \bea && \hat W^{(h),R}_{2,0;(-\o,-\o)}(k)=-[\hat W^{(h),R}_{2,0;(\o,\o)}(k)]^*,\nonumber\\
&& \hat W^{(h),R}_{2,0;(\o,\o)}(A^{-1}\s_1 Ak)=i\o [\hat W^{(h),R}_{2,0;(\o,\o)}(k)]^*,\label{eq:symm}\\
&& \hat W^{(h),R}_{2,0;(\o,\o)}(A^{-1}\s_3 Ak)=[\hat W^{(h),R}_{2,0;(\o,\o)}(k)]^*\nonumber\eea
   where $A=\begin{pmatrix} \bar\a^1 & \bar \b^1 \\ \bar \a^2 & \bar \b^2 \end{pmatrix}$ while $\sigma_1,\sigma_3$ are the first and third Pauli matrices.   Let us assume that $W^{(h'),R}_{2,0;(\o,\o)},h'\ge h$ has been  already shown to satisfy \eqref{eq:symm} and below we explain how to prove the same at scale $h-1$.

   In order to define the running coupling constants on scale $h$, we  decompose the term containing $ \partial_{k}\hat W^{(h),\infty}_{2,0;(\o,\o)}({0})$ in  \eqref{eq:6.21}, by rewriting 
   \be \partial_{k}\hat W^{(h),\infty}_{2,0;(\o,\o)}({0})=\partial_{k}\hat W^{(h),R}_{2,0;(\o,\o)}({0})+\partial_{k}\hat W^{(h),s}_{2,0;(\o,\o)}({0}),\label{Rs}\ee
   ('$s$' stands for `subdominant'). From the symmetries \eqref{eq:symm},
 a straightforward computation (see Appendix \ref{symm}) shows that 
                                                                \be k\cdot \partial_{k}\hat W^{(h),R}_{2,0;(\o,\o)}({0})=-z_h(\bar\a_\o k_1+\bar\b_\o k_2)= -z_h \bar D_\o(k),\label{eq:symm1}\ee
                                                                for some real number $z_h$.
                                                                We now combine this term with the Grassmann Gaussian integration $P_{(\le h)}(d\psi)$, and define: 
                                                                \be P_{(\le h)}(d\psi)e^{-z_hZ_h L^{-2}\sum_\o\sum_{k\in\mathcal P'_\o(\bt)} \bar D_\o(k)\hat \psi^+_{k,\o}\hat \psi^-_{k,\o}}\equiv e^{L^2 t_h}\tilde P_{(\le h)}(d\psi),\label{eh:eh}\ee
                                                                where $\tilde P_{(\le h)}(d\psi)$ is the Grassmann Gaussian integration with propagator
                                                                \be \frac{\tilde g^{(\le h)}_\o(x,y)}{Z_{h-1}}=\frac1{L^2}\sum_{k\in \mathcal P_\o'(\bt)}e^{-ik(x-y)}\frac{\bar\chi_{h}(k)}{\tilde Z_{h-1}(k) \bar D_\o(k)+r_\o(k)},\label{eq:6.28}\ee
                                                                with 
                                                                \be \tilde Z_{h-1}(k):=Z_h(1+z_h\bar \chi_h(k)),\qquad Z_{h-1}:=\tilde Z_{h-1}(0)=Z_h(1+z_h),\label{eq:ecci}\ee
                                                                and $e^{L^2 t_h}$ is a  constant that normalizes $\tilde P_{(\le h)}(d\psi)$ to $1$:
                                                                \be t_h=\frac1{L^2}\sum_\o\sum_{k\in\mathcal P'_\o(\bt)}\log\Big(1+\frac{z_h\bar \chi_h(k)\bar D_\o(k)}{\bar D_\o(k)+r_\o(k)/Z_h}\Big) .\label{eq:th}\ee
   
                                                                By using \eqref{eh:eh}, we rewrite the Grassmann integral in the right side of \eqref{eq:vh} as
                                                                \bea && \int P_{(\le h)}(d\psi)e^{V^{(h)}(\sqrt{Z_h}(\psi+\Psi),J)}=e^{L^2 t_h-z_hZ_h L^{-2}\sum_\o \bar D_\o(k^\o_\bt-\bar p^\o)\hat \Psi^+_{\o}\hat \Psi^-_{\o}}\times\nonumber\\
&&\hskip4.1truecm \times\int\tilde P_{(\le h)}(d\psi)e^{\widehat V^{(h)}(\sqrt{Z_{h-1}}(\psi+\Psi),J)},\label{eq:vieri}\eea
   where 
   \bea \widehat V^{(h)}(\f,J)&=&L^{-2}\sum_{\o}\sum_{k\in \mathcal P_\o(\bt)} 
                                  \hat \f^+_{k,\o} [2^h\nu_{h,\o}+a_{h,\o}k_1+b_{h,\o}k_2\big]\hat \f^-_{k,\o}	\nonumber\\
&+&\l_h \sum_{x\in\L} \f^+_{x,+}\f^-_{x,+}\f^+_{x,-}\f^-_{x,-} \label{eq:whV}\\
&+& \sum_{\o_1,\o_2,r}\frac{Y_{h,r,(\o_1,\o_2)}}{Z_{h-1}}\sum_{x\in\L} J_{x,r}e^{i(\bar p^{\o_1}-\bar p^{\o_2})x}
    \f^+_{x,\o_1}\f^-_{x,\o_2}\nonumber\\
&+&\mathcal RV^{(h)}(\sqrt{{Z_h}/{Z_{h-1}}}\,\f,J),\nn\eea
    and the running coupling constants at scale $h$ are defined as
    \bea && 2^h\nu_{h,\o}=\frac{Z_h}{Z_{h-1}} \hat W^{(h),\infty}_{2,0;(\o,\o)}(0),\label{eq:6.46}\\
&& a_{h,\o}=\frac{Z_h}{Z_{h-1}} \partial_{k_1}\hat W^{(h),s}_{2,0;(\o,\o)}(0), \qquad 
   b_{h,\o}=\frac{Z_h}{Z_{h-1}} \partial_{k_2}\hat W^{(h),s}_{2,0;(\o,\o)}(0), \nonumber \\
&&\l_h =4\Big(\frac{Z_h}{Z_{h-1}}\Big)^2 \hat W^{(h),\infty}_{4,0;(+,+,-,-)}(0,0,0),\nonumber\\
&& Y_{h,r,(\o_1,\o_2)}=Z_h \hat W^{(h),\infty}_{2,1;(\o_1,\o_2),r}(0,\bar p^{\o_1}-\bar p^{\o_2}).\nonumber\eea


   Thanks to the symmetry \eqref{eq:upps} of the kernels
   (that by inductive hypothesis holds at step $h$) the running coupling constants satisfy the following:
   \be \nu_{h,\o}=\nu_{h,-\o}^*, \quad  a_{h,\o}=-a_{h,-\o}^*, \quad  b_{h,\o}=-b_{h,-\o}^*, \quad  Y_{h,r,\ul\o}=Y_{h,r,-\ul\o}^*.\ee
   Moreover $\l_h\in\mathbb R$: for this, one uses both \eqref{eq:upps} 
   and the fact that $$\hat W^{(h)}_{4,0;(+,+,-,-)}(0,0,0)=\hat W^{(h)}_{4,0;(-,-,+,+)}(0,0,0).$$ For later reference, we rewrite the local part of $\widehat V^{(h)}(\f,J)$ as
   \bea \mathcal L{\widehat V}^{(h)}(\f,J)&=&\sum_{\o}\Big[2^h\n_{h,\o}F_{\n;\o}(\f)+a_{h,\o}F_{a;\o}(\f)+b_{h,\o} F_{b;\o}(\f)\Big]\nonumber\\
&+&\l_h F_\l(\f)+ \sum_{r,\ul\o}\frac{Y_{h,r,\ul\o}}{Z_{h-1}}F_{Y;r,\ul\o}(\f,J),\label{lcwhV}\eea
    (for the definitions of $F_{\n;\o}(\f),F_{a;\o}, F_{b;\o}$, etc., compare \eqref{lcwhV} with the first two lines of \eqref{eq:whV}). 

    We now decompose the propagator \eqref{eq:6.28} as  \[\tilde g^{(\le h)}_\o(x,y)=
    g^{(h)}_\o(x,y)+g^{(\le h-1)}_\o(x,y),\] with $g^{(\le h-1)}_\o$ as in \eqref{eq:gleh} and $g^{(h)}_\o$ as in \eqref{eq:ghjy6}.
    To see that this decomposition holds,  note that $\tilde Z_{h-1}(k)\equiv Z_{h-1}$ on the support of $\bar\chi_{h-1}(\cdot)$.
    
    Then, rewrite \eqref{eq:vieri} as
    \bea && \int P_{(\le h)}(d\psi)e^{V^{(h)}(\sqrt{Z_h}(\psi+\Psi),J)}=e^{L^2 t_h-z_hZ_h L^{-2}\sum_\o \bar D_\o(k^\o_\bt-\bar p^\o)\hat \Psi^+_{\o}\hat \Psi^-_{\o}}\times\nonumber\\
&&\qquad \times \int P_{(\le h-1)}(d\psi)\int P_{(h)}(d\psi')e^{\widehat V^{(h)}(\sqrt{Z_{h-1}}(\psi+\psi'+\Psi),J)},\label{eq:vieriii}\eea
   which implies the validity of the representation \eqref{eq:vh} at scale $h-1$, with $E^{(h-1)}$, $S^{(h-1)}(\cdot)$ and $V^{(h-1)}(\cdot)$ defined by 
\begin{eqnarray} &&e^{L^2 E^{(h-1)} + S^{(h-1)}(J)+ V^{(h-1)}(\sqrt{Z_{h-1}}(\psi+\Psi),J)}=\\
&&\qquad =e^{L^2 (E^{(h)}+t_h)+S^{(h)}(J)}\int P_{(h)}(d\psi') e^{\widehat V^{(h)}(\sqrt{Z_{h-1}}(\psi+\psi'+\Psi),J)},\nonumber\end{eqnarray}
that is, 
\begin{eqnarray} && L^2 (E^{(h-1)}-E^{(h)}-t_h) +(S^{(h-1)}(J)-S^{(h)}(J))+ V^{(h-1)}(\f,J) \label{eq:tb2}\\
&&\quad =\sum_{n\ge 1}\frac1{n!}\mathcal E^T_h(\underbrace{\widehat V^{(h)}(\f+\sqrt{Z_{h-1}}\psi',J);\cdots;
\widehat V^{(h)}(\f+\sqrt{Z_{h-1}}\psi',J)}_{n\ {\rm times}},\nonumber
\end{eqnarray}
   with $\mathcal E^T_h$ the truncated expectation w.r.t. the Grassmann Gaussian integration $P_{(h)}(d\psi)$, and $E^{(h-1)}$, $S^{(h-1)}(\cdot)$ fixed as usual by the conditions
   $S^{(h-1)}(0)=0$ and $V^{(h-1)}(0,J)=0$.

   To conclude the proof of the induction step, it remains to prove that the kernels of $V^{(h-1)}$ satisfy \eqref{eq:upps} and that 
   \eqref{eq:symm} holds, at scale $h-1$.
   The proof of the former statement is very similar (but not identical) to the argument used in Section \ref{sec:h-1} to
   prove  \eqref{eq:upps} at scale $h=-1$ starting from the symmetries of $V^{(0)}$. Namely,  thanks to 
   \eqref{eq:upps} at scale $h$, the potential $V^{(h)}$ is invariant under $\varphi^\pm_{x,\o}\to\varphi^\pm_{x,-\o}$ together with complex conjugation of the kernels. Then, the claim follows from the representation \eqref{eq:tb2}, together with the fact that the propagator $g^{(h)}$ (defined in \eqref{eq:ghjy6}) satisfies the symmetry 
   \begin{eqnarray}
     \label{eq:symmgh}
     [g^{(h)}_\o(x,y)]^*=g^{(h)}_{-\o}(x,y).
   \end{eqnarray}
   As for  \eqref{eq:symm} at scale $h-1$, the proof uses the  symmetries of the relativistic propagator \eqref{gRo}, together with the fact that $\lambda_{h'}$ is real. See Appendix \ref{symm}.
   \begin{Remark}
     \label{rem:gRo}
     Note that, if the function $z_{h'}$ in \eqref{eq:ecci} is
     sufficiently small for all the scales $h\le h'\le-1$, say
     $|z_{h'}|\le \e $ uniformly in $L,h'$, then
     $e^{-c \e |h|}\le Z_h\le e^{c \e |h|}$. As a
     consequence, $g^{(h)}_\o$ satisfies a bound analogous to
     \eqref{eq:bbpprr}, namely
     \be \label{eq:bbpprrh}|g^{(h)}_\o(x,y)|\le C_0 2^h e^{-\kappa
       \sqrt{2^h|x-y|}}. \ee In fact, note that on the support of
     $f_h(\cdot)$ (which is concentrated on $k:|k|\sim 2^h$),
     $\tilde Z_{h-1}(k)/Z_{h-1}=1+O(\e)$ and recall that
     $r_\o(\cdot)$ is quadratic for small values of its argument, so that $r_\o(k)/Z_{h-1}$ is negligible w.r.t. $\bar D_\o(k)$.
     The propagator $g^{(h)}_{R,\o}$ satisfies the same estimate as \eqref{eq:bbpprrh}, while the difference $g^{(h)}_{\o}-g^{(h)}_{R,\o}$ satisfies an estimate that is better by a factor $2^h$.
   \end{Remark}
   \subsubsection{The Beta function}

   The iterative integration scheme described above allows us to express the kernels of $V^{(h)}$  and, in particular, the running coupling 
   constants (RCC) at scale $h$, as functions of the sequence of RCC on higher scales, $\{\l_{h'},\n_{h',\o},a_{h',\o},b_{h',\o},Y_{h',r,\underline \o}\}_{h<h'\le -1}$, of the single-scale propagators $\{g^{(h')}_\o/Z_{h'-1}\}_{h<h'\le -1}$, and of $\mathcal R V^{(-1)}$. That is, we can rewrite Equation \eqref{eq:6.46} in the form
\begin{eqnarray} &&
\n_{h-1,\o}=2\n_{h,\o}+B^\n_{h,\o}\quad a_{h-1,\o}=a_{h,\o}+B^a_{h,\o},\quad b_{h-1,\o}=b_{h,\o}+B^b_{h,\o},\nonumber\\
&&\l_{h-1}=\l_h+B^\l_h,\qquad Y_{h-1,r,\ul\o}=Y_{h,r,\ul\o}+B^Y_{h,r,\ul\o},\label{beta2}
\end{eqnarray}
   where $B^\#_{h,\cdot}$, $h\le -1$, is the so-called Beta function. One has to think of $B^\#_{h,\cdot}$ as a function
   of the RCC on scales $h'$ with  $h\le h'\le 0$.
   Note that the first four equations makes sense also with $h=0$, in which case they express the relation between 
   $(\n_{-1,\o},a_{-1,\o},b_{-1,\o},\l_{-1})$ and $(\n_{0,\o},a_{0,\o},b_{0,\o},\l_{0})$, see \eqref{rcc0}. 
   Note also that by construction the beta function $B^\#_{h,\cdot}$ depends on $Z_{h'}$ only via the combinations 
   $Z_{h'}/Z_{h'-1}=(1+z_{h'})^{-1}$, with $h< h'< 0$. For later reference, we rewrite the definition of $z_h$, \eqref{eq:symm1}, in a form analogous to \eqref{beta2}, 
\begin{equation}\label{betazeta}z_{h-1}=B^z_h,  \quad h\le 0,\end{equation}
where the right side is thought of as a function of $(\l_{h'},z_{h'})_{h\le h'\le 0}$, with the convention that $z_{0}=z_{-1}=0$ (the latter is because $W^{(-1),R}_{2,0;(\o,\o)}\equiv 0$).

   \begin{Remark}
     \label{rem:noY}
The components of the beta function for $\n_{h,\o},a_{h,\o},b_{h,\o},\l_h$
     are independent of $Y_{h',r,\underline \o},h'>h$. Therefore, we can first
     solve the flow equation for $\n_{h,\o},a_{h,\o},b_{h,\o},\l_h$ and then inject the
     solution into the flow equation for $Y_{h,r,\underline \o}$.
   \end{Remark}
   
   Before we proceed in describing the dimensional bounds satisfied
   by the kernels of the effective potential, let us comment on their
   structure. We have proven inductively that $V^{(h)}$ has, in momentum space, the same structure as \eqref{eq:6.16mo}. If one writes $V^{(h)}$ in real space,  due to 
   iterative action of the $\mathcal R$ operator  in the inductive procedure explained above, the structure  that emerges naturally 
   is that of a polynomial with 
   spatial derivatives acting on some of the  Grassmann fields $\f^\pm_{x,\o}$.
   For an explanation of why this is the case see
   \cite[Section 6.1.4]{GMT17a} and
   Appendix \ref{dimr} below, where finite-size effects associated with the action of $\mathcal R$ are also discussed.
   Correspondingly, $V^{(h)}$ can be represented as
\begin{eqnarray}
&& V^{(h)}(\f,J)= \sum_{\substack{n,m\ge 0:\\ n\ {\rm even}, \ n\ge 2}}\sum_{\ul x,\, \ul y,\, \ul \o,\,\ul r,\, \ul i,\ul q} W^{(h)}_{n,m,\ul i,\ul q;\ul\o,\ul r}(\ul x, \ul y)\times\label{eq:6.16ren}\\
&&\qquad \times\, \hat \partial^{q_{1}}_{i_1}\f^{(\le h)+}_{x_1,\o_1}\cdots\hat \partial^{q_{n}}_{i_n}\f^{(\le h)-}_{x_n,\o_n} J_{y_1,r_1}\cdots J_{y_m,r_m}.\nonumber\end{eqnarray}
   The main difference between this formula and \eqref{eq:6.16}, besides the scale label $h$ replacing $-1$, is the presence of the indices 
   $\ul i=(i_1,\ldots,i_n)\in\{1,2\}^n$ and $\ul q=(q_1,\ldots,q_n)\in\{0,1,2\}^n$ and the operators 
   $\hat \partial^{q}_i$ acting on the Grassmann fields: this is a  differential operator, dimensionally equivalent to a  derivative of order $q$ in direction $i$. Let us stress that the representation in \eqref{eq:6.16ren} 
   is not unique: the claim is that there exists such a representation, with the kernels satisfying natural dimensional estimates, discussed below. 

\medskip

In order for the iterative construction to allow us to compute the thermodynamic and correlation functions, we need to prove that: 
(i) the RCC $\n_{h,\o},a_{h,\o}, b_{h,\o},\l_{h},z_h$ are small, uniformly in the scale (say, smaller than a sufficiently small constant $\e$), 
provided the functions $\bar p^\o,a_\o,b_\o$ (see \eqref{mu0}) 
have been properly fixed; (ii) the kernels of the effective potential are all well defined (i.e. the sums \eqref{eq:tb2} are convergent uniformly in $L$), quasi-local (i.e., fast decaying, with a stretched-exponential behavior) and satisfy natural scaling properties, i.e., 
\be \label{bouW}\|W^{(h)}_{n,m,\ul i, \ul q}\|_{\kappa,h}\le C^{n+m} \e^{\max\{1,cn\}} 2^{h(2-n/2-m-|\ul q|)} \Big(\max_{h'\ge h}\frac{|Y_{h',\cdot}|}{|Z_{h'}|}\Big)^m, \ee
with $|\ul q|=\sum_{i=1}^nq_{i}$, $|Y_{h',\cdot}|=\max_{r,\ul\o}|Y_{h',r,\ul\o}|$, and  
\be\label{norW}\|W^{(h)}_{n,m,\ul i, \ul q}\|_{\kappa,h}:=L^{-2}\sup_{\ul \o, \ul r}\sum_{\ul x, \ul y}|W^{(h)}_{n,m,\ul i, \ul q;\ul\o,\ul r}(\ul x, \ul y)|e^{\kappa \sqrt{2^{h}d(\ul x,\ul y)}},\ee
for suitable constants $C,c,\kappa$, independent of $L,h$. 

\medskip

   The boundedness of the flow of the RCC and the validity of the dimensional bounds for the kernels of the effective potential will, in fact, be the final outcome of our analysis. The logic of proof goes as follows: 
   one first proves the validity of the dimensional bounds on the kernels, under the assumption that the RCC remain small. 
   These bounds will, in particular, imply that the components of the beta function are well defined and satisfy bounds that are uniform in $L$ and $h$. 
   This part of the proof is pretty standard: it follows from a representation of the effective potential in terms of Gallavotti-Nicol\`o (GN) trees, see Section \ref{sectree} below, and an iterative application of the 
   Battle-Brydges-Federbush-Kennedy (BBFK) determinant formula, see, e.g., \cite[Lemma 3]{GMT17a}. 

   Next, we prove that the RCC remain bounded, by studying the flow generated by the beta function. The key point is that, as long as the RCC on scales larger than $h$ are small, then the beta 
   function on scale $h$ is well defined, and can be used to control the evolution of the RCC for another step. This opens the way to an inductive proof of the smallness of the RCC. Of course, 
   the fact that RCC remain small at all scale requires a specific (model-dependent) structure of the beta function. In our case, we are lucky enough that the beta function has structure which maintains 
   the RCC small at all scale, provided the initial data are small, and that $\bar p^\o,a_\o,b_\o$ are properly fixed, see Section \ref{secfl} below. 
   It is not just a matter of luck, of course: a key point in the analysis is played by the comparison 
   of the $\l$-component of the beta function of our dimer model, with the corresponding quantity for the reference continuum model (the two functions are the same at dominant order). 
   The exact solvability of the reference model implies the validity of a remarkable cancellation in the $\l$-component of the beta function for the reference model and, therefore, a posteriori, for our dimer model, 
   as well. 

\subsection{Tree expansion for the effective potential}\label{sectree}

   As anticipated above, the detailed structure of the kernels of $V^{(h)}$, arising from the iterative construction described in the previous section, can be conveniently 
   represented in terms of GN trees. The definition 
   of GN trees, of their values, and the procedure leading to their introduction
   have been discussed at length in several previous papers and will not be repeated here, see e.g. \cite{GMreview}; in particular, we refer to \cite[Section  5.2.1 and 6.2]{GMT17a} for a description of the GN tree 
   expansion in a context very similar to the present one, i.e., in the case of isotropic, `tilt-less', interacting dimer models with weights $\ul t\equiv 1$ and plaquette interaction. The present case differs from the one treated in \cite{GMT17a}  for the fact that 
   here the model is anisotropic (and, correspondingly, the height has an average slope that is different from zero). Technically, this means that in the present case the expansion involves more running coupling 
   constants than those considered in \cite{GMT17a}: the RCC $\nu_{h,\o},a_{h,\o},b_{h,\o}$ are identically zero in the tilt-less case. In particular, the trees involved in our construction are characterized by the following features, slightly different from those listed in \cite[Section  6.2]{GMT17a}:

   \begin{enumerate}
\item A GN tree $\t$ contributing to $V^{(h)}$, $\tilde S^{(h)}(J):=S^{(h)}(J)-S^{(h+1)}(J)$, or to $\tilde E^{(h)}=E^{(h)}-E^{(h+1)}-t_{h+1}$  
has root on scale $h$ and can have endpoints (either `normal' or `special', which are those represented as black dots or white
squares, respectively, in \cite{GMT17a}, see, e.g., \cite[Fig.13]{GMT17a}) on all possible scales between $h+2$ and $0$. The endpoints $v$ on scales 
$h_v< 0$ are preceded by a node $v'$ of $\t$, on scale $h_{v'}=h_v-1$, that is necessarily a branching point. The family of GN trees with root on scale $h$, $N_n$ normal endpoints and $N_s$ special endpoints 
is denoted by $\mathcal T^{(h)}_{N_n,N_s}$. 
\item A normal endpoint $v$ on scale $h_v\le 0$ can be of five different types, $\l,\n,a,b$, or $\mathcal R V^{(-1)}$. If $v$ is of type $\l,\n,a$ or $b$, then it 
is associated with $\l_{h_{v'}} F_\l(\sqrt{Z_{h_{v'}-1}}\f^{(\le h_{v'})})$, 
or $\sum_\o \n_{h_{v'},\o}F_{\n;\o}(\sqrt{Z_{h_{v'}-1}}\f^{(\le h_{v'})})$, or 
$\sum_\o a_{h_{v'},\o}F_{a;\o}(\sqrt{Z_{h_{v'}-1}}\f^{(\le h_{v'})})$, or 
$\sum_\o b_{h_{v'},\o}F_{b;\o}(\sqrt{Z_{h_{v'}-1}}\f^{(\le h_{v'})})$, depending on its type (recall that the monomials $F_\l$, $F_{\n;\o}$, etc., were defined in \eqref{lcwhV}); in this case,
the node $v'$ immediately preceding $v$ on $\t$, of scale $h_{v'}=h_v-1$, is necessarily a branching point. If $v$ is of type $\mathcal R V^{(-1)}$, then $h_v=0$, and $v$ 
is associated with (one of the monomials contributing to) $\mathcal RV^{(-1)}(\f^{(\le -1)},0)$; in this case, the node immediately preceding $v$ 
on $\tau$, of scale $h_v-1$, is not necessarily a branching point.
\item A special endpoint $v$ on scale $h_v\le 0$ can be either local, or non-local. If $v$ is local, then it is associated with 
\be\label{YZ}\frac{Y_{h_{v'},r,\ul\o}}{Z_{h_{v'}-1}}F_{Y;r,\ul\o}(\sqrt{Z_{h_{v'}-1}}\f^{(\le h_{v'})},J),\ee
for some $r\in\{1,2,3,4\}$, $\ul\o=(\o_1,\o_2)\in\{\pm\}^2$; if $\o_1=\o_2$, we shall say that $v$ is a `density endpoint', while, if $\o_1\neq\o_2$, that $v$ is a `mass endpoint'. 
Note that the factors  $Z_{h_{v'}-1}$ in \eqref{YZ} simplify: the summand equals $Y_{h_{v'},r,\ul\o}F_{Y;r,\ul\o}(\f^{(\le h_{v'})},J)$; in \eqref{YZ}, these factors are kept just for uniformity of notation 
with the cases in the previous item. In the case that $v$ is local, the node $v'$ immediately preceding $v$ on $\t$, of scale $h_{v'}=h_v-1$, is necessarily a branching point. 
If $v$ is non-local, then $h_v=0$, and $v$ is associated with (one of the monomials contributing to) $V^{(-1)}(\f^{(\le -1)},J)-V^{(-1)}(\f^{(\le -1)},0)$; in this case, the node immediately preceding $v$ 
on $\tau$, of scale $h_v-1$, is not necessarily a branching point. 
\end{enumerate}
In addition to the items above, let us recall that each vertex of the tree that is not an endpoint and that is not the special vertex $v_0$ (the leftmost vertex of the tree, immediately following the root on $\t$) is 
associated with the action of an $\RR$ operator.

In terms of the tree expansion, we can express the effective potential and the single-scale contributions to the free energy and generating function as 
\be L^2 \tilde E^{(h)}+\tilde S^{(h)}(J)+V^{(h)}(\sqrt{Z_h}\f,J)=\sum_{\substack{N_n,N_s\ge 0:\\ N_n+N_s\ge 1}}\sum_{\t\in \mathcal T^{(h)}_{N_n,N_s}} V^{(h)}(\t,\sqrt{Z_h}\f,J),
\label{eq:6.63e}\ee
where
\bea && V^{(h)}(\t,\sqrt{Z_h}\f,{J})=\label{f5.49ab}\\
&& \quad =\sum_{
{\bf P}\in\PP_\t}\sqrt{Z_h}^{|P_{v_0}^\psi|}\sum_{T\in{\bf T}}\sum_{\bf i, \bf q}\sum_{ \xx_{v_0}}W_{\t,{\bf P},{T},{\bf i}, {\bf q}}
(\xx_{v_0}) D_{\bf i}^{\bf q}\f({P_{v_0}^\psi})\, J({P_{v_0}^J})
\;.\nonumber\eea
Eq.(\ref{f5.49ab}) is the analogue of \cite[(6.64)]{GMT17a}, and we refer the reader to that paper for the notation and a sketch of the proof (in this formula, the indices $\bf i,\bf q$ replace the 
multi-indices $\b\in B_T$ \cite[(6.64)]{GMT17a}). We recall that 
$P_{v_0}^\psi$ and $P_{v_0}^J$ are two sets of indices that label the Grassmann external fields and the external fields of type $J$, respectively; moreover, 
$J({P_{v_0}^J})=\prod_{f\in P_{v_0}^J}J_{y(f),r(f)}$ and 
\be D_{\bf i}^{\bf q}\f({P_{v_0}^\psi})=\prod_{f\in P_{v_0}^\psi} \hat \partial_{i(f)}^{q(f)}\f_{x(f),\o(f)}^{\e(f)}\;.\label{eq:6.65}\ee
Clearly, the kernels in \eqref{eq:6.16ren} are obtained by summing $W_{\t,{\bf P},{T},{\bf  i},{\bf q}}
(\xx_{v_0})$ over $\t\in \mathcal T^{(h)}_{N_n,N_s}$ and over $N_n,N_s$, 
under the constraint that the number of external fields of type $\psi$ and $J$ is equal to $n$ and $m$, respectively, that the elements of ${\bf i}$ are the same as $\ul i$, etc. 
Similarly, the kernels of the single scale contribution to the generating function, $\tilde S^{(h)}(J)$, which we denote by $W^{(h)}_{0,m;\ul r}(\ul y)$, 
are obtained by summing the tree values $W_{\t,{\bf P},{T}}$
over $\t\in \mathcal T^{(h)}_{N_n,N_s}$ and over $N_n,N_s$, under the constraint that $P_{v_0}^\psi=\emptyset$ and that $\cup_{f\in P_{v_0}^J}\{(y(f), r(f))\}$ matches the tuple $(\ul y,\ul r)$; finally, the single scale contribution to the free energy, $L^2 \tilde E^{(h)}$ is obtained by an analogous sum over GN trees, 
under the constraint that $P_{v_0}^\psi=P_{v_0}^J=\emptyset$.

\medskip

The bound \eqref{bouW}, as well as the analogous one for $W_{0,m:\ul r}$, is a corollary of the following fundamental bound on the weighted $L_1$ norm of $W_{\t,{\bf P},{T},{\bf i},{\bf q}}$, which is the analogue of 
\cite[Proposition 8]{GMT17a} and of \cite[(3.110)]{BM-XYZ}; for the proof, we refer the reader to \cite{BM-XYZ,GMT17a}. See also Appendix  \ref{dimr} below for some technical details.
\begin{Proposition}\label{prop:an} There exists $L$-independent constants $\e,C,c,\kappa>0$ such that, if 
\be \max_{h'>h}\{|\l_{h'}|,|\n_{h',\o}|,|a_{h',\o}|,|b_{h',\o}|,|z_{h'}|\}\le \e,\label{ggg0}\ee
and $\tau\in\mathcal T^{(h)}_{N_n,N_s}$, then 
\bea && \|W_{\t,{\bf P},T,{\bf i},{\bf q}}\|_{\kappa,h}\le C^{N_s}
\,(C\e)^{\max\{N_n,c|I_{v_0}^\psi|\}}\,
2^{h(2-\frac12|P_{v_0}^\psi|-
|P_{v_0}^J|-|{\bf q}|)}\label{5.62}\\
&&\times \Biggl[\prod_{v\ {\rm s.e.p.}}\sup_{r,\ul\o}
\Big|\frac{Y_{h_v-1,r,\ul\o}}{Z_{h_v-1}}\Big|\Biggr]\Big[\prod_{\substack{v\,{\rm not}\\ {\rm e.p.}}}\hskip-.02truecm\frac{C^{\sum_{i=1}^{s_v}|P_{v_i}|-|P_v|}}{s_v!} \;2^{ \frac\e2|P^\psi_v|}2^{2-\frac12|P_v^\psi|-|P_v^J|-z(P_v)}
\Big]\;,\nonumber\eea
where: $|I_{v_0}^\psi|=\sum_{v\ {\rm e.p.}}|P_{v}^\psi|$ is the total number of Grassmann fields associated with the endpoints of the tree; the first product in the second line runs over the special endpoints, while the second over all the vertices of the tree that are not endpoints. 
Moreover $|{\bf q}|=\sum_{f\in P_{v_0}^\psi}q(f)$
and 
\be z(P_v)=\begin{cases} 1& {\rm if}\quad (|P_v^\psi|,|P_v^J|)=(4,0),(2,1),\\
2 & {\rm if}\quad (|P_v^\psi|,|P_v^J|)=(2,0),\\
0 & {\rm otherwise.}\end{cases}\label{eq:zpv}\ee
\end{Proposition}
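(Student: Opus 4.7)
The plan is to follow the standard multi-scale scheme combining the Battle-Brydges-Federbush-Kennedy (BBFK) formula with the Gram-Hadamard bound, iterated along the Gallavotti-Nicol\`o tree, as used in \cite{BM-XYZ,GMT17a}. Starting from \eqref{f5.49ab}, the kernel $W_{\tau,{\bf P},T,{\bf i},{\bf q}}(\underline x_{v_0})$ is written as an iterated integral along $\tau$: at every vertex $v$ which is not an endpoint, with children $v_1,\dots,v_{s_v}$ on scale $h_v+1$, the truncated expectation $\mathcal E^T_{h_v}$ produced by \eqref{eq:tb2} (applied to the subtree rooted at $v$) is rewritten via BBFK as a sum over anchored spanning trees $T_v$ connecting $v_1,\dots,v_{s_v}$, of a product of single-scale propagators $g^{(h_v)}_\o/Z_{h_v-1}$ along $T_v$, times a determinant of the uncontracted fields, which by Gram-Hadamard is bounded by the product of the $L^2$ norms of the corresponding Gram vectors. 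The union $\cup_v T_v$ is $T$.

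The next step is the standard dimensional count. Using the stretched-exponential decay \eqref{eq:bbpprrh} of $g^{(h_v)}$ and integrating along the spanning tree $T_v$ produces, for each vertex $v$, a factor $2^{2 h_v}$ per internal integration and $2^{h_v}$ per anchor line, while each Gram vector contributes $2^{h_v/2}$; the stretched-exponential weight $e^{\kappa\sqrt{2^{h_v}d(\underline x,\underline y)}}$ survives the integration at the price of a finite multiplicative constant (by the interpolation/triangle-inequality argument of \cite[App.~C]{GMT17a}). Collecting the factors vertex-by-vertex and telescoping along the branches yields the naive scaling factor
\[
2^{h(2-\tfrac12|P_{v_0}^\psi|-|P_{v_0}^J|)}\,\prod_{v\text{ not e.p.}}2^{2-\frac12|P_v^\psi|-|P_v^J|}.
\]

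The gain $2^{-z(P_v)}$ at each relevant/marginal vertex comes from the action of the $\mathcal R$ operator. By definition of $\mathcal R$ (which subtracts Taylor polynomials of appropriate degree at zero external momenta at each non-endpoint, non-$v_0$ vertex), any vertex $v$ with $(|P_v^\psi|,|P_v^J|)\in\{(2,0),(4,0),(2,1)\}$ gives rise to a remainder with $z(P_v)$ additional derivatives acting on the external fields; by interpolation, these derivatives convert into an extra factor $2^{(h_{v'}-h_v)z(P_v)}$ (with $v'$ the ancestor where the derivatives are transferred), which summed along the tree produces the $2^{-z(P_v)}$ on each relevant/marginal vertex, together with the $2^{-h|{\bf q}|}$ stored at the root. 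This rearrangement is paid by the ``short memory'' factor $2^{\epsilon|P_v^\psi|/2}$ per vertex, needed to re-sum the Taylor remainder along the branches while preserving the Gram bound for the determinant, exactly as in \cite[Sec.~5.2]{GMT17a}.

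Finally, endpoints and combinatorics. The $s_v!$ denominator comes directly from the $1/n!$ in the definition of the truncated expectation \eqref{eq:tb2} combined with the symmetrization over children. The factor $C^{\sum_i|P_{v_i}|-|P_v|}$ counts the choices of $P_v$ as a subset of the fields inherited from $v_1,\dots,v_{s_v}$. Each special endpoint on scale $h_v$ brings a factor $|Y_{h_v-1,r,\underline\o}|/|Z_{h_v-1}|$ (from \eqref{YZ}) times a bounded coefficient, absorbed into $C^{N_s}$; each normal endpoint on scale $h_v\le -1$ associated with $\lambda,\nu,a,b$ is trivially bounded by $\epsilon$ from \eqref{ggg0}, while the endpoints of type $\mathcal RV^{(-1)}$ on scale $0$ are bounded by \eqref{boundkappanorm}, giving $(C\epsilon)^{\max\{1,c n\}}$ per endpoint with $n$ Grassmann fields; multiplying over endpoints and using $\sum_{v\text{ e.p.}}\max\{1,c|P_v^\psi|\}\ge c'\max\{N_n,c|I_{v_0}^\psi|\}$ yields the overall factor $(C\epsilon)^{\max\{N_n,c|I_{v_0}^\psi|\}}$. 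The main technical obstacle is the propagation of the stretched-exponential weight through the multi-scale integration (instead of the more standard exponential weight used, e.g., in \cite{BM-XYZ}); this is dealt with by the Gevrey interpolation estimates of \cite[App.~C]{GMT17a}, and it is the reason why the weight in $\|\cdot\|_{\kappa,h}$ is $e^{\kappa\sqrt{2^h d(\underline x,\underline y)}}$ rather than exponential.
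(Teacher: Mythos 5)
Your overall scheme --- BBFK determinant formula, Gram--Hadamard bound, telescoping dimensional count along the Gallavotti-Nicol\`o tree, Gevrey interpolation for the stretched-exponential weight --- is indeed the argument the paper relies on; it does not give a self-contained proof of this Proposition but cites \cite[Prop.~8]{GMT17a} and \cite[(3.110)]{BM-XYZ}, and your sketch faithfully reproduces that route, including the accounting of endpoints, the $s_v!$ combinatorics, and the $2^{-z(P_v)}$ gain from $\mathcal R$.

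Two points, though, where you part ways with the paper. First, you attribute the factor $2^{\frac{\epsilon}{2}|P_v^\psi|}$ to ``re-summing the Taylor remainder along the branches while preserving the Gram bound,'' but that is not its origin: it absorbs the wave-function renormalization ratios $(Z_{h_v}/Z_{h_v-1})^{|P_v^\psi|/2}=(1+z_{h_v})^{-|P_v^\psi|/2}$ that appear at every non-endpoint vertex because $V^{(h)}$ is evaluated at $\sqrt{Z_h}\psi$ while $\widehat V^{(h)}$ at $\sqrt{Z_{h-1}}\psi$ (cf.\ \eqref{eq:whV} and \eqref{f5.49ab}); the hypothesis $|z_{h'}|\le\epsilon$ in \eqref{ggg0} is exactly what bounds this per-vertex ratio by $2^{\epsilon|P_v^\psi|/2}$. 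Second, and more substantively, your plan treats the $\mathcal R$ action as the ``usual'' one, whereas the paper's $\mathcal L$ (see \eqref{eq:6.21}) is defined in terms of the \emph{infinite-volume} kernels $\hat W^{(h),\infty}$, not the finite-$L$ ones. This modification is the actual technical content that the paper adds to \cite{BM-XYZ,GMT17a}: $\mathcal R V^{(h)}$ is then the difference between a finite-volume kernel and an infinite-volume subtraction, and one must separately control the resulting finite-size corrections (the split in \eqref{R4barV}--\eqref{barw}, the Poisson-summation estimate $\delta g^{(h)}_\o$, and the $e^{-c\sqrt{2^h L}}$ gains). Without this step, \eqref{5.62} does not follow from the standard argument; you should incorporate the discussion of Appendix~\ref{dimr} into your plan before claiming the bound.
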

The dimensional gain $2^{-z(P_v)}$ associated with the marginal and relevant vertices, i.e., those with $(|P_v^\psi|,|P_v^J|)=(4,0),(2,1),(2,0)$, comes from the action of $\mathcal R$, as explained in 
\cite[Section 6.1.4]{GMT17a} and in Appendix \ref{dimr} below. 

    Since the exponents $2+ \frac\e2|P^\psi_v|-\frac12|P_v^\psi|-|P_v^J|-z(P_v)$ in \eqref{5.62} are all strictly negative, one can  sum \eqref{5.62} over $\t\in \mathcal T^{(h)}_{N_n,N_s}$, over $T\in{\bf T}$, and over ${\bf P}\in\PP_\t$, under the constraint that $|P_{v_0}^\psi|=n$ and $|P_{v_0}^J|=m$,
    we get the bound \eqref{bouW}; see also the discussion after \cite[Proposition 8]{GMT17a}.
    Similarly,
if we sum \eqref{5.62}  over $\t\in \mathcal T^{(h)}_{N_n,N_s}$, $T\in{\bf T}$, ${\bf P}\in\PP_\t$, with $|P_{v_0}^\psi|=n$, $|P_{v_0}^J|=m$, {\it under the additional constraint that $\t$ has 
    at least one node on scale $k>h$}, then we get a bound that is the same as \eqref{bouW} times an additional gain factor $2^{\th' (h-k)}$, where $\th'$ is a positive constant, smaller than 1 (estimates are 
    not uniform as $\th'\to 1^-$;  from here on, we will choose $\th'=3/4$). 
This is the so-called {\it short memory property}, see Remark 16 after \cite[Proposition 8]{GMT17a}.

\subsection{The flow of the running coupling constants}\label{secfl}

    As explained above, as long as the RCC $\n_{h',\o},a_{h',\o}, b_{h',\o},\l_{h'},z_{h'}$ stay small, for all $h'>h$, in the sense of \eqref{ggg0}, the beta function controlling the flow of
    the same constants on all scales larger or equal to $h$, see \eqref{beta2}-\eqref{betazeta},  
    can be represented in terms of a convergent GN expansion, induced by the one of the kernels of the effective potential discussed above. 
    The goal is then to fix the initial data $\n_{0,\o}, a_{0,\o}, b_{0,\o}$, in such a way that the resulting flow of $\n_{h,\o},a_{h,\o},b_{h,\o},\l_h,z_h$
    driven by the beta function stays uniformly small in the scale index. 
    For this purpose, not only we have to make a careful choice of the `counter-terms' $\n_{0,\o}, a_{0,\o}, b_{0,\o}$, but we also need to exploit a number of remarkable cancellations, 
    some of which follow from the exact solution of the reference model of Section \ref{sec:IR}. Let us emphasize that we have the right to fix the counter-terms, which up to now were chosen arbitrarily in \eqref{mu0} (recall \eqref{rcc0}), but we cannot change $\l_0=\l$, which enters the definition of the model.  
    \medskip

We look for a solution of the flow equation for $\ul u_h$ such that, as $h\to-\infty$: 
\begin{enumerate}
\item $\nu_{h,\o},a_{h,\o},b_{h,\o}$ tend exponentially to zero; more precisely, recalling that $|\n_{h,+}|=|\n_{h,-}|$, and similarly for $|a_{h,\o}|$, $|b_{h,\o}|$, we require that
\be \|(\underline\nu,\underline a,\underline b)\|_\th:=\sup_{h\le 0}\{2^{-\th h}|\nu_{h,+}|,\, 2^{-\th h}|a_{h,+}|,\, 2^{-\th h}|b_{h,+}|\}\le \e,\label{nubound}\ee
for $\e$ small enough and $\th=1/2$;
\item $\l_h$ tends exponentially to a finite limiting value $\l_{-\infty}$; more precisely, given a positive constant $\e'$ smaller than the constant $\e$ in the previous item, 
we require that $|\l_0|\le \e'$ and 
\be \|\underline\lambda\|_\th:=\sup_{h\le 0}\{2^{-\th h}|\lambda_{h-1}-\lambda_h|\}\le \e',\label{eq:boundlambda}\ee
where $\th$ is the same as in the previous item; note that, from the condition on $\l_0$ and \eqref{eq:boundlambda}, it follows that 
\be \label{llaa}|\l_h|\le \frac{\e'}{1-2^{-\th}}+\e',\ee uniformly in $h$.
\end{enumerate} 

In order to construct such solution, we proceed in various steps. 
\begin{itemize}
\item First, in Section \ref{secz*}, given any sequence $\ul\lambda$ satisfying the conditions of item (2), we show how to construct the 
solution $z_h$ to the beta function equation \eqref{betazeta} with initial datum $z_0=z_{-1}=0$. \item Next, in Section \ref{secfixedp}, we reformulate the problem of constructing a bounded solution to the beta function 
equations for $\n_{h,\o},a_{h,\o},b_{h,\o},\l_h$ into a fixed point problem of the form $\ul u=T(\ul u)$ in a Banach space of sequences. \item In order to construct and prove uniqueness of such a fixed point we appeal to the Banach fixed point theorem, a.k.a. the contraction theorem: in Section \ref{secinvarianceX} we prove the existence of an invariant set $X_\e$, i.e., a set such that $T(X_\e)\subseteq X_\e$, and in Section \ref{seccontractionX} we prove that $T$ is a contraction on $X_\e$. \item In the remaining subsections, we collect various consequences of the proof of existence of the fixed point (or, equivalently, of the desired bounded sequence of RCCs): 
in Section \ref{seccount} we invert the counterterms and explain how to compute the dressed Fermi points $\bar p^{\omega}$ and Fermi velocities $\bar \alpha_\omega,\bar\beta_\omega$ 
(the functions $\bar p^{\omega}$, $\bar \alpha_\omega,\bar\beta_\omega$ constructed here are the same as items (1)-(2) of Proposition \ref{prop:comp}); 
in Section \ref{secflowZ} we discuss the flow of $Z_h$ and define the associated critical exponent $\eta$; in Sections \ref{remmm} and  \ref{bareZ}
we fix the bare couplings $\lambda_\infty$ and $Z$ of the reference model, in such a way that the flows of the reference and dimer model have the same asymptotic behavior as $h\to -\infty$, including the same
critical exponents (the choices of $\lambda_\infty, Z$ made there are those stated in item (3) of Proposition \ref{prop:comp}); finally, in Section \ref{flowY}, 
we discuss the flow of $Y_{h,r,\ul\o}$ and define the associated critical exponents $\eta,\eta_1$.\end{itemize}

\subsubsection{Fixing $(z_h)_{h\le -1}$.}\label{secz*} Given a sequence $\underline \l:=(\lambda_h)_{h\le -1}$ satisfying \eqref{eq:boundlambda}, we construct the
solution of the beta function equation 
    \be z_{h-1}=B^z_h(\underline\lambda,\underline z)\label{bbetazzeta}\ee
{for $\underline z:=(z_h)_{h\le -1}$} iteratively in $h$, starting from $h=0$. {We denote this solution by $\underline z^*(\underline\lambda)$. We recall that the definition of the function $B^z_h$, first introduced in 
    \eqref{betazeta}, is induced by the iterative definition of $z_h$, \eqref{eq:symm1}. Note that, of course, $B^z_h(\underline\lambda,\underline z)$ 
only depends on the components of $\ul\lambda,\underline z$ of scale index larger or equal to $h$; note also that $B^z_h$ does not depend on the irrelevant part of the effective interaction and, in particular, it does not depend on 
$\l_0$, because its definition only involves $W^{(h),R}_{2,0;(\o,\o)}$ (see \eqref{eq:symm1} and the definition of $W^{(h),R}_{2,0;(\o,\o)}$ given after Remark \ref{remark:10}), specifically item (iii) after \eqref{gRo}).}

By using the tree expansion of the beta function, we now show that the solution $\underline z^*(\underline\lambda)$ of \eqref{bbetazzeta}
is a Cauchy sequence, differentiable in $\ul\l$;
more precisely, we prove that, for $\l_0$ fixed, such that $|\l_0|\le \e'$, and $\underline \l$ satisfying \eqref{eq:boundlambda}, 
\be |z_{h-1}^*(\underline\l)-z_h^*(\underline\l)|\le C_0(\e')^2 2^{\th h}, \qquad \Big|\frac{\partial z^*_h (\underline\l)}{\partial \l_k}\Big|\le C_0\e' 2^{\th(h-k)}, \label{z*}\ee
for all $h\le k\le -1$. Once this is done, we plug $\ul z^*(\ul\l)$ in the flow equations for $\n_{h,\o},a_{h,\o},b_{h,\o},\l_h$, i.e., the first four equations of \eqref{beta2}, 
so that a posteriori their beta functions are re-expressed purely in terms of $\l_0$ and $\underline u$, where
\be \underline u=(\underline \nu, \underline a, \underline b,\underline \l), \label{ulu}\ee
with $\underline \nu:=(\n_{h,\o})_{h\le 0}^{\o\in\{\pm\}}$, $\underline a:=(a_{h,\o})_{h\le 0}^{\o\in\{\pm\}}$ and $\underline b:=(b_{h,\o})_{h\le 0}^{\o\in\{\pm\}}$. 

\medskip

    Let us prove the first inequality in \eqref{z*}, inductively in $h$. Note that at the first step, $h=0$, the inequality is trivially true, simply because $z^*_0(\underline\l)=z^*_{-1}(\underline\l)=0$. 
    We assume that $|z_{h'-1}^*(\underline\l)-z_{h'}^*(\underline\l)|\le C_0(\e')^2 2^{\th h'}$, for all scales $h<h'\le 0$, and we want to prove that the same bound holds for $h'=h$. 
    Note that, for $\e'$ sufficiently small, the first inequality in \eqref{z*} also implies that $|z^*_{h'}(\underline\l)|\le \e'\le \e$, $\forall h\le h'\le -1$, uniformly in $\underline\l$. Recall that the definition 
    of $B^z_{h}$ is induced by \eqref{eq:symm1}. The kernel $W^{(h),R}_{2,0;(\o,\o)}$ can be written as a sum over GN trees, analogous to \eqref{f5.49ab}, and the contribution associated with each tree can be bounded as in Proposition \ref{prop:an}. Note that the trees contributing to $B^z_h$ have only endpoints of type $\l$: this is because $z_h$ is defined in 
terms of $\partial_k \hat W^{(h),R}_{2,0;(\o,\o)}(0)$, see \eqref{eq:symm1}, and $W^{(h),R}_{2,0;(\o,\o)}$ is obtained, by definition, by: setting the RCCs $\n_{h',\omega},a_{h',\o},b_{h',\o}$, as well as the irrelevant coupling $\mathcal R V^{(-1)}$, 
to zero; replacing the single-scale propagators $g^{(h')}_{\o}$ by their `relativistic' counterpart, $g^{(h')}_{R,\o}$; see the discussion after \eqref{quartic:symm}.
Moreover, the contribution from the tree with exactly one $\lambda$ endpoint (which corresponds, in the language of Feynman diagrams, to the `tadpole')
is exactly zero (because the kernel of the tadpole is proportional to a delta function, so that the associated contribution to $\partial_k \hat W^{(h),R}_{2,0;(\o,\o)}(0)$ is zero). These considerations lead to the following representation:
    \be B^z_{h}(\underline \l,\ul z^*)=\sum_{N\ge 2}\sum_{\t\in \mathcal T^{(h)}_{N,0}} \sum_{{\bf P}\in\PP_\t}\sum_{T\in{\bf T}}B^z(\underline\l,\underline z^*;\t,{\bf P},T),\label{6.75-0z}\ee
    where $\ul z^*=\ul z^*(\ul \l)$ (recall that $B^z_{h}$ depends only on the components of $\ul z^*$ with scale index $\ge h$, which have already been inductively defined), 
    and $B^z(\underline \l,\underline z^*;\t,{\bf P},T)$ can be bounded in a way analogous to \eqref{5.62}: 
    \be |B^z(\underline\l,\underline z^*;\t,{\bf P},T)|\le (C\e')^N\Big[\prod_{\substack{v\,{\rm not}\\ {\rm e.p.}}}\hskip-.02truecm\frac{C^{\sum_{i=1}^{s_v}|P_{v_i}|-|P_v|}}{s_v!} \;2^{\frac{\e'}2{|P^\psi_v|}}2^{2-\frac12|P_v|-z(P_v)}\Big]\;.\label{smp}\ee
  Here we used the fact that the endpoints are all of type $\l$, and that their values are bounded as in \eqref{llaa}. 
  We now split $B^z_h(\ul\l,\ul z^*)$ as follows: 
  \be B^z_h(\ul\l,\ul z^*)=\big[B^z_h(\ul\l,\ul z^*)-B^z_h(\ul\l,\ul z^*)\big|_{\l_{-1}=0}\big]+B^z_h(\ul\l,\ul z^*)\big|_{\l_{-1}=0}.\ee
  By definition, the difference in square brackets is expressed in terms of a sum over trees that have at least one endpoint on scale 0, while 
  $B^z_h(\ul\l,\ul z^*)\big|_{\l_{-1}=0}$ is a sum over trees that have no endpoints on scale 0. By using the short memory property
  (see comments after the statement of Proposition \ref{prop:an}), we find
  \be \Big|B^z_h(\ul\l,\ul z^*)-B^z_h(\ul\l,\ul z^*)\big|_{\l_{-1}=0}\Big|\le C(\e')^2 2^{\th h}.\label{dajjje} \ee
  An important remark is that by rescaling $h\to h+1$, we can re-express $B^z_h(\ul\l,\ul z^*)\big|_{\l_{-1}=0}$ in terms of $B^z_{h+1}$: 
  \be B^z_h(\ul\l,\ul z^*)\big|_{\l_{-1}=0}=B^z_{h+1}(S\ul\l,S\ul z^*),\label{scale}\ee
  where $S$ is the shift operator, namely, $(S\ul\l)_h:=\l_{h-1}$, and similarly for $S\ul z^*$. This follows from the fact that the 
 tree expansion for $B^z_{h}$ involves the relativistic propagators $g^{(h')}_{R,\o}$, which are homogeneous, scale-covariant, functions: $g^{(h')}_{R,\o}(x,y)=2^{-1}g^{(h'+1)}_{R,\o}(x/2,y/2)$.
In conclusion, 
\bea z_{h-1}^*(\underline\l)-z_h^*(\underline\l)&=&\big[B^z_h(\ul\l,\ul z^*)-B^z_h(\ul\l,\ul z^*)\big|_{\l_{-1}=0}\big]\label{second}\\
&+&\big[B^z_{h+1}(S\ul\l,S\ul z^*)-B^z_{h+1}(\ul\l,\ul z^*)\big].\nonumber\eea

    We want to bound 
    the difference in the second line as
    \be \big|B^z_{h+1}(S\ul\l,S\ul z^*)-B^z_{h+1}(\ul\l,\ul z^*)\big|\le C(\e')^2 2^{\th h}\label{quasifatta}.\ee
    The beta function $B^z$ is $O((\e')^2)$ because $N\ge 2$ in \eqref{6.75-0z}, so we have just to get the extra factor $2^{\th h}$.
    The left-hand side  can be rewritten as
    \be B^z_{h+1}(S\ul\l,S\ul z^*)-B^z_{h+1}(\ul\l,\ul z^*)=
    \int_0^1 dt\, \frac{d}{dt} B^z_{h+1}(\underline \l(t),\ul z^*(t)), \label{interp}\ee
    with $\underline \l(t):=\underline\l+t(S\underline \l-\underline \l)$, and similarly for $\ul z^*(t)$.  $B^z_{h+1}$  can be written in terms of its tree expansion, see \eqref{6.75-0z}, so that, 
    when the derivative w.r.t. $t$ acts on it, it can act on the factors $\l_{h'}(t)$ associated with the endpoints $v$ of the tree, or on the factors $z_{h'}^*(t)$ associated with the propagators and with the branches of the tree. 
    If it acts on an endpoint of type $\l$, whose value is $\l_{h'}(t)$, its effect is to replace it by $\l_{h'}-\l_{h'-1}$, which is bounded by $\e' 2^{\th h'}$, see \eqref{eq:boundlambda}; if it acts on a factor $z^*_{h'}(t)$, its effect is to multiply the tree value by 
    $z_{h'}^*(\ul \l)-z^*_{h'-1}(\ul\l)$, which is bounded by $C_0(\e')^2 2^{\th h'}$, thanks to the inductive hypothesis. Using these facts and the short memory property, \eqref{quasifatta} follows.
    Putting this together with \eqref{dajjje}, we get the desired bound on $z^*_{h-1}(\ul\l)-z^*_h(\ul\l)$. 

The proof of the second inequality \eqref{z*} is completely analogous: it can be proved inductively in $h$ (at the first step is trivially valid, again because $z^*_{-1}(\ul\l)\equiv 0$), by using the tree representation of the beta 
function, \eqref{6.75-0z}, and the short memory property. The details are left to the reader. 

\begin{Remark}\label{asymptz}
  The limiting value of $z^*_h(\ul\l)$ as $h\to-\infty$, which
  certainly exists, due to the first of \eqref{z*}, only depends on
  $\l_{-\infty}(\ul\l):=\l_0+\sum_{h\le 0}(\l_{h-1}-\l_h)$. {In order
  to prove this, notice that
  $\l_{-\infty}(\ul \l)-\l_h=\sum_{k\le h}(\l_{k-1}-\l_k)=O(\e' 2^{\th
    h})$, thanks to \eqref{eq:boundlambda}, and that
  $z^*_{-\infty}(\ul\l)-z^*_h(\ul \l)=O((\e')^2 2^{\th h})$, thanks to
  the first of \eqref{z*}: therefore, the error made by replacing in
  \eqref{6.75-0z} the sequence $\ul \l$ (resp. $\ul z^*$) by the
  constant sequence of elements $\l_{-\infty}$ (resp. $z^*_{-\infty}$)
  is of the order $O((\e')^22^{\th h})$. Consequently, letting
  $h\to-\infty$ in \eqref{bbetazzeta}, we find that $z^*_{-\infty}$ is
  the fixed point solution of an equation that only depends on
  $\l_{-\infty}$.}
\end{Remark}

\subsubsection{The solution of the flow equation as a fixed point.}\label{secfixedp}

Given $|\l_0|\le \e'$ and $\ul \l$ satisfying \eqref{eq:boundlambda}, we fix $\ul z=\ul z^*(\ul\l)$ as described in the previous subsection, and plug it into the flow equations for $\n_{h,\o},a_{h,\o},b_{h,\o},\l_h$: these are 
coupled equations, whose beta functions are thought of as functions of $\l_0$ and $\ul u$, with $\ul u$ as in \eqref{ulu}. 
In order to find the desired solution to these flow equations, we first note that the equations for $\n_{h,\o}, a_{h,\o}, b_{h,\o}$ in \eqref{beta2} imply that, for $k<h\le 0$,
\begin{eqnarray} 
&& \nu_{h,\o}=2^{k-h}\n_{k,\o}-\sum_{k<j\le h}2^{j-h-1}B^\n_{j,\o}(\l_0,\underline u),\label{nuho}\\
&& a_{h,\o}=a_{k,\o}-\sum_{k<j\le h}B^a_{j,\o}(\l_0,\underline u),\qquad b_{h,\o}=b_{k,\o}-\sum_{k<j\le h}B^b_{j,\o}(\l_0,\underline u).\nonumber
\end{eqnarray}
[Clearly, $B_{j,\o}^{\cdot}(\l_0,\ul u)$ actually depends only on the the components of $\ul u$ on scales larger than $j$.] 
If we send $k\to-\infty$ in  \eqref{nuho} and impose the desired condition on the exponential decay of $\nu_{h,\o},a_{h,\o},b_{h,\o}$, see \eqref{nubound}, we get $\nu_{h,\o}=-\sum_{j\le h}2^{j-h-1}B^\n_{j,\o}$, 
$a_{h,\o}=-\sum_{j\le h}B^a_{j,\o}$, and $b_{h,\o}=-\sum_{j\le h}B^b_{j,\o}$.

Regarding $\l_h$, we study its flow equation by extracting the first order contribution in $(\l_0,\ul u)$
from the beta function. By inspection, one verifies that the first order contribution does not depend on $\ul u$: therefore, we can write 
   \be B^\l_{h}(\l_0,\underline u)=c_h^\l \l_{0}+\tilde B^\l_h(\l_0,\underline u),\label{dec}\ee
   where $\tilde B^\l_h$ is at least of second order in $(\l_0,\ul u)$ and $c_h^\l$ can be computed in terms of first order perturbation theory. 
   Note that the GN trees that contribute to it have only a normal endpoint at scale $0$, of type $\mathcal R V^{(-1)}$. 
   Then, due to the short memory property, 
   \be \label{chz}|c_h^\l|\le \bar C\, 2^{\theta h}, \ee 
for some $\bar C>0$. By iterating the beta function equation for $\l_h$, we get: 
   \be \l_{h-1}=C_h^\l\l_0+\sum_{j=h}^0 \tilde B^\l_j(\l_0,\underline u),\ee
   where $C_h^\l=1+\sum_{j=h}^0c_j^\l$.

\medskip

In conclusion, given a sufficiently small $\l_{0}$, we look for initial data $\n_{0,\o}, a_{0,\o}$, $b_{0,\o}$, depending on $\l_{0}$, such that the corresponding flow satisfies, for all scales 
$h\le 0$, 
\begin{equation} 
\begin{cases} \nu_{h,\o}=-\sum_{j\le h}2^{j-h-1}B^\n_{j,\o}(\l_0,\underline u),\\
 a_{h,\o}=-\sum_{j\le h}B^a_{j,\o}(\l_0,\underline u),\\
b_{h,\o}=-\sum_{j\le h}B^b_{j,\o}(\l_0,\underline u)\\
 \l_{h-1}=C_h^\l \l_0+\sum_{j=h}^0\tilde B^\l_j(\l_0,\underline u),\end{cases}\label{eq:sys}
\end{equation}
with $\ul u$ satisfying \eqref{nubound}, \eqref{eq:boundlambda}. 
The system \eqref{eq:sys} will be viewed as a fixed point equation $\ul u=T(\ul u)$ for a map $T$ on the space 
of sequences 
  \be \label{norm} X_\e:=\{\underline u: \|(\underline\nu,\underline a,\underline b)\|_\th \le \e,\ \|\underline\l\|_\th\le \e'\}, \ee
see \eqref{nubound}, \eqref{eq:boundlambda}. In this equation, and from now on, we let $\e$ be sufficiently small, and we fix $\th=1/2$ and 
  \be \label{condK}\e'=\e/K,\quad K=\max\left\{1,\frac{C_1}{1-2^{-\th}}, \frac{2C'_1}{1-2^{-\th}}\right\},\ee 
  where $C_1,C_1'$ are the constants in \eqref{6.77} and \eqref{6.82} below, whose explicit values can be computed in terms of the first order contributions in $\ul\l$ 
  to $B^\n_{h,\o}$, $B^a_{h,\o}$, $B^b_{h,\o}$. 
\medskip

We now want to prove that $T$ is a contraction on $X_\e$, with respect to the metric $d(\ul u,\ul u'):=\|\ul u-\ul u'\|$, where
\be \|\ul u\|:= \max\{ \|(\underline\nu,\underline a,\underline b)\|_\th,\, K\sup_{h\le -1}|\l_h|\}.\label{linnorm}\ee
More precisely, we intend to prove that the image of $X_\e$ under the action of $T$ is contained in $X_\e$, and that $\| T(\ul u)-T(\ul u')\|\le (1/2)\, \|\ul u-\ul u'\|$ for all $\ul u,\ul u'\in X_\e$.
If this is the case, then $T$ admits a unique fixed point in $X_\e$, which corresponds to the desired initial data $\n_{0,\o}, a_{0,\o}, b_{0,\o}$, generating 
a flow satisfying conditions (1)-(2) discussed at the beginning of this section. 

\subsubsection{Invariance of $X_\e$ under the action of $T$.}\label{secinvarianceX}
In this subsection we show that $T(X_\e)\subseteq X_\e$, i.e. $\|T(\ul u)\|\le \e$ under the condition that 
\be |\l_{0}|\le \frac{\e}{2K}\min\{1,\bar C^{-1}\},\label{lkk}\ee
where $\bar C$ is the same as in \eqref{chz}. 
Note that, in order to prove that $T(X_\e)\subseteq X_\e$, it is enough to show that
\bea && |B^\n_{h,\o}(\l_0,\underline u)|, |B^a_{h,\o}(\l_0,\underline u)|, |B^b_{h,\o}(\l_0,\underline u)|\le C_1K^{-1}\e 2^{\th h},\label{6.77}\\
&& |\tilde B^\l_{h}(\l_0,\underline u)|\le C_2 \e^2 2^{\th h}, \label{6.78}\eea
for some $K$-independent constants $C_1,C_2$ (in order to see that \eqref{6.77}-\eqref{6.78} imply $\|T(\ul u)\|\le\e$, it is enough to plug them in the right side of \eqref{eq:sys} and use \eqref{condK} 
and \eqref{lkk}). 

\medskip

\noindent{\bf 1}. {\it The bound on $B^\n_{h,\o}(	\l_0,\underline u)$.}
We start by proving the bound on $B^\n_{h,\o}(	\l_0,\underline u)$ in \eqref{6.77}. Recall that the definition 
of $B^\n_{h,\o}$ is induced by the first of \eqref{eq:6.46}, combined with the first of \eqref{beta2}. As for the case of $B^z_h$ discussed in Section \ref{secz*}, $B^\n_h$ 
can be written as a sum over trees:
   \be B^\n_{h,\o}(\l_0,\underline u)=\sum_{N\ge 1}\sum_{\t\in \mathcal T^{(h)}_{N,0}} \sum_{{\bf P}\in\PP_\t}\sum_{T\in{\bf T}}B^\n_{h,\o}(\l_0,\underline u;\t,{\bf P},T),\label{6.75-0}\ee
   and $B^\n_{h,\o}(\l_0,\underline u;\t,{\bf P},T)$ can be bounded in a way analogous to \eqref{5.62}: 
   \bea |B^\n_{h,\o}(\l_0,\underline u;\t,{\bf P},T)|&\le& C^{N}\Big[\prod_{\substack{v\, {\rm e.p.}}} |F_v|\Big]\times\label{6.75}\\
&\times&\Big[\prod_{\substack{v\,{\rm not}\\ {\rm e.p.}}}\hskip-.02truecm\frac{C^{\sum_{i=1}^{s_v}|P_{v_i}|-|P_v|}}{s_v!} \;2^{\frac\e2{|P^\psi_v|}}2^{2-\frac12|P_v|-z(P_v)}
  \Big]\;.\nonumber\eea
Here, $|F_v|$ equals $|\nu_{h_{v'},+}|,|a_{h_{v'},+}|,|b_{h_{v'},+}|$ or $|\l_{h_{v'}}|$, if $v$ is of type $\n,a,b,$ or $\l$, 
see item (2) in the list of properties of trees in Section \ref{sectree};
  if $v$ is of type $\mathcal R V^{(-1)}$, then $F_v$ is a kernel of  $\mathcal R V^{(-1)}$ (the one associated with the given choice of $P_v$), and $|F_v|$ is its norm 
  \eqref{kappanorm} (more precisely, it is its un-weighted counter-part, i.e., the case $\k=0$), 
  which is bounded as in \eqref{boundkappanorm}. 
  
  We now split $B^\n_{h,\o}$ in a dominant plus a subdominant contribution, in the same spirit as the decomposition \eqref{Rs}: 
  $B^\n_{h,\o}=B^{\n,R}_{h,\o}+B^{\n,s}_{h,\o}$, where: 
  $B^{\n,R}_{h,\o}$ includes the sum over the trees whose endpoints are all of type $\l$ and all the single-scale propagators $g^{(k)}_\o/Z_{k-1}$ have been replaced by
  $g^{(k)}_{R,\o}/Z_{k-1}$, see \eqref{gRo}; $B^{\n,s}_{h,\o}$ is the remainder, which includes the sum over trees that have at least one endpoint of type $a,b,\n$ or $\mathcal R V^{(-1)}$ (the 
  scale $k$ of the endpoints of type $\l, a,b,\n$ satisfies $h<k\le 0$, while the scale of the endpoints of type $\mathcal R V^{(-1)}$ is necessarily $k=0$), or at least one  `remainder propagator' on some scale $k$ between $h$ and $0$, $(g^{(k)}_{\o}-g^{(k)}_{R,\o})/Z_{k-1}$.

The key observation is that $B^{\n,R}_{h,\o}=0$: in fact the definition of $B^{\n,R}_{h,\o}$ is induced by the first of \eqref{eq:6.46}, with $\hat W^{(h),\infty}_{2,0;(\o,\o)}(0)$ replaced by 
$\hat W^{(h),R}_{2,0;(\o,\o)}(0)$ which is zero, as follows immediately from \eqref{eq:symm}.

The subdominant contribution, $B^{\n,s}_{h,\o}$, is not zero, but it is easy to bound. We further distinguish various contributions to it. (1) 
Let us start with the contributions from trees with at least two endpoints, one of which is of type $\n,a,b,\mathcal R V^{(-1)}$ and is on scale $k\in[h+1,0]$: these are bounded 
proportionally to $\e^2 2^{\th' (h-k)} 2^{\th k}$, where $\th'=3/4>\th$; the factor $2^{\th' (h-k)}$ is due to the short memory property (see the comment after \eqref{eq:zpv}), while the 
factor $\e 2^{\th k}$ comes from the norm $|F_v|$ associated with the endpoint of type $\n,a,b,\mathcal R V^{(-1)}$ on scale $k$, and the other $\e$ from the second endpoint.
Summing the bound over $k$ in $[h+1,0]$, we get  $const\times \e^2 2^{\th h}$, with the constant being independent of $K$. (2) Next, we consider the contributions from trees 
with exactly one endpoint, of type $\mathcal R V^{(-1)}$ (and, therefore, on scale $0$). Recalling that the norm of the value of the endpoint, $|F_v|$, is proportional to $|\l_0|\le \e/(2K)$, we find that the total 
contribution from these trees is $O(\e K^{-1} 2^{\th h})$, the factor $2^{\th h}$ coming from the short memory property, the proportionality factor being independent of $K$. 
(3) Finally, we are left with the contributions from trees whose endpoints are all of type $\l$ and at least one remainder propagator on some scale $k$ between $h$ and $0$, $(g^{(k)}_{\o}-g^{(k)}_{R,\o})/Z_{k-1}$. Recalling from Remark \ref{rem:gRo} that the dimensional bound of the remainder propagator is better by a factor $2^{\th k}$, as compared to the bound of $g^{(k)}_{R,\o}/Z_{k-1}$, we find that the 
these contributions are bounded by $const\times(\e/K)\sum_{k=h}^02^{\th'(h-k)}2^{\th k}\le const\times (\e/K)2^{\th h}$ (once again, the factor $2^{\th'(h-k)}$ is due to the 
short memory property, and the constant is independent of $K$). 
Putting things together, we obtain the desired estimate on $B^\n_{h,\o}$. 

\medskip

\noindent{\bf 2}. {\it The bound on $B^a_{h,\o}, B^b_{h,\o}$.} By definition, see \eqref{eq:6.46} and the definition of $\hat W^{(h),s}_{2,0;(\o,\o)}$ after \eqref{Rs}, 
  the trees contributing to $B^a_{h,\o}, B^b_{h,\o}$ either have an endpoint of type $\n,a,b,\mathcal R V^{(-1)}$, or their values contain a `remainder propagator' $(g^{(k)}_{\o}-g^{(k)}_{R,\o})/Z_{k-1}$ on some scale $k$ between $h$ and $0$. By proceeding as in the previous item, in particular in the discussion of the bound on $B^{\n,s}_{h,\o}$, 
  we find that $|B^a_{h,\o}|\le const\times(\e/K)\sum_{k=h}^02^{\th'(h-k)}2^{\th k}$, which is the desired estimate, and similarly for $|B^b_{h,\o}|$.

\medskip

\noindent{\bf 3}. {\it The bound on $\tilde B^\l_{h}$.} The fact that  $|\tilde B^\l_{h}|=O(\e^2)$ is obvious, because $\tilde B^\l_h$ is  a sum of trees with two or more endpoints, given that we have extracted the first-order contribution $c^\l_h \l_0$. The non-trivial issue is to show that the bound is proportional to $2^{\th h}$. For this purpose, we split $\tilde B^\l_h$ into a dominant and a subdominant part, following 
once again the same logic: we write $\tilde B^\l_h=B^{\l,R}_{h}+B^{\l,s}_{h}$, where:
$B^{\l,R}_{h}$  the sum over the trees whose endpoints are all of type $\l$ and all the single-scale propagators $g^{(k)}_\o/Z_{k-1}$ have been replaced by
$g^{(k)}_{R,\o}/Z_{k-1}$, see \eqref{gRo}; $B^{\l,s}_{h}$ is the remainder, which includes the sum over trees that have at least one endpoint of type $a,b,\n$ or $\mathcal R V^{(-1)}$, or at least one  
`remainder propagator' on some scale $k$ between $h$ and $0$, $(g^{(k)}_{\o}-g^{(k)}_{R,\o})/Z_{k-1}$.

The key observation is that the dominant term, $B^{\l,R}_{h}=B^{\l,R}_h(\ul\l)$ {\it is the same as the one of the reference model} discussed in Section \ref{sec:IR}: by this, we mean that $B^{\l,R}_{h}(\ul\l)$ 
is the same that we would get in the reference model, by applying the same multi-scale integration procedure. On the other hand, for the continuum model it is known that, if we denote by 
$\l^*\underline 1$ the constant sequence $(\l^*\underline 1)_h\equiv\l^*$, then 
\be \label{l*}|B^{\l,R}_{h}(\l^*\underline 1)|\le ({\rm const.})|\l^*|^2 2^{\th h},\ee
for $\l^*$ sufficiently small, see \cite[Theorem 3.1]{BM04}. Moreover, once the bound \eqref{l*} is known for the beta function computed on the constant sequence $\l^*\underline 1$, 
by using the short memory property, we find that the same bound holds for more general sequences: more precisely, we find that $B^{\l,R}_h(\ul\l)\le ({\rm const.})\e^2 2^{\th h}$, 
for any Cauchy sequence $\ul \l$ satisfying $\|\ul\l\|_\th\le \e$, as desired. 

We are left with the subdominant term, $B^{\l,s}_h(\ul\l)$, that, non surprisingly, can be bounded in a way similar to the subdominant contribution $B^{\n,s}_{h,\o}$; the result is, once again, $|B^{\l,s}_h(\ul\l)|\le ({\rm const.})\e^2 
2^{\th h}$ (details left to the reader). 

This concludes the proof of \eqref{6.77}-\eqref{6.78} and, therefore, that  $T(X_\e)\subset X_\e$.
  
\subsubsection{$T$ is a contraction on $X_\e$.}\label{seccontractionX}

We now show that $\|T(\ul u)-T(\ul u')\|\le (1/2) \|\ul u-\ul u'\|$, for all pairs of sequences $\ul u,\ul u'\in X_\e$ (here $\|\cdot\|$ is the norm in \eqref{linnorm}).
We consider the component at scale $h$ of $T(\underline u)-T(\underline u')$,
\begin{equation} \big[T(\underline u)-T(\underline u')\big]_h=\begin{cases}-\sum_{j\le h}2^{j-h-1}\big(B^\n_{j,\o}(\l_0,\underline u)-B^\n_{j,\o}(\l_0,\underline u')\big)\\
-\sum_{j\le h}\big(B^a_{j,\o}(\l_0,\underline u)-B^a_{j,\o}(\l_0,\underline u')\big)\\
-\sum_{j\le h}\big(B^b_{j,\o}(\l_0,\underline u)-B^b_{j,\o}(\l_0,\underline u')\big)\\
\sum_{j\ge h}\big(\tilde B^\l_{j}(\l_0,\underline u)-\tilde B^\l_{j}(\l_0,\underline u')\big).\end{cases}\label{eq:sysprime}
\end{equation}
In order to prove that $T$ is a contraction, it is enough to show that, if $\underline u,\underline u'\in X_\e$ and $\l_0$ satisfies \eqref{lkk}, 
  then the analogues of \eqref{6.77}-\eqref{6.78} hold, namely:
  \bea && |B^\#_{h,\o}(\l_0,\underline u)-B^\#_{h,\o}(\l_0,\underline u')|\le C_1'K^{-1} \|\underline u-\underline u'\|2^{\th h}, \quad {\rm for}\quad \#=\n,a,b, \nonumber\\
&& |\tilde B^\l_{h}(\l_0,\underline u)-\tilde B^\l_{h}(\l_0,\underline u')|\le C_2'\e \|\underline u-\underline u'\| 2^{\th h}\,,\label{6.82}\eea
   for some $K$-independent constants $C_1',C_2'$
   .

   The proof of \eqref{6.82} is very similar to the one of \eqref{6.77}-\eqref{6.78}: in order to illustrate the main ideas, let us focus on $B^\n_{h,\o}(\underline u)-B^\n_{h,\o}(\underline u')$, the other 
   components being treatable in a similar manner. By using \eqref{6.75-0}, we rewrite the difference under consideration as a sum over trees: 
   \bea && B^\n_{h,\o}(\l_0,\underline u)-B^\n_{h,\o}(\l_0,\underline u')=\label{6.75-1}\\
&&=\sum_{N_n\ge 1}\sum_{\t\in \mathcal T^{(h)}_{N_n,0}} \sum_{{\bf P}\in\PP_\t}\sum_{T\in{\bf T}}\big(B^\n_{h,\o}(\l_0,\underline u;\t,{\bf P},T)-
   B^\n_{h,\o}(\l_0,\underline u';\t,{\bf P},T)\big).\nonumber\eea
   We further rewrite the difference in parentheses in the right side in a way similar to \eqref{interp}: 
   \bea && B^\n_{h,\o}(\l_0,\underline u;\t,{\bf P},T)-
           B^\n_{h,\o}(\l_0,\underline u';\t,{\bf P},T)=\\
&&\hskip4.truecm=\int_0^1 dt\, \frac{d}{dt} B^\n_{h,\o}(\l_0,\underline u(t);\t,{\bf P},T),\nonumber\eea
   with $\underline u(t):=\underline u'+t(\underline u-\underline u')$. When the derivative w.r.t. $t$ acts on $B^\n_{h,\o}(\underline u(t);$ $ t,{\bf P},T)$, it can act on  the modified 
   running coupling constants $\n_{h',\o}(t)$, $a_{h',\o}(t)$, $b_{h',\o}(t)$, $\l_{h'}(t)$ associated with the endpoints $v$ of the tree, or on the modified constants
   $z_{h'}^*(\ul\l(t))$ associated with the propagators and with the branches of the tree. If, e.g., it acts on an endpoint $v$ of type $\n$, which is associated with $\n_{h',\o}(t)$, 
   its effect is to replace it by $\n_{h',\o}-\n_{h',\o}'$; when bounding the norm of the tree value, the endpoint $v$ comes with a factor $|\n_{h',\o}-\n_{h',\o}'|$, which leads 
   to a factor $\|\underline u-\underline u'\|$; this has to be compared with the `standard' factor $|\n_{h',\o}|$ appearing in the bound of the un-modified tree value, which led to a factor 
   $\|\underline u\|\le \e$ in (one of the contributions to) the first of \eqref{6.77}: therefore, the bound on 
   $\frac{d}{dt} B^\n_{h,\o}(\underline u(t);\t,{\bf P},T)$ is qualitatively the same as the one on $B^\n_{h,\o}(\underline u;\t,{\bf P},T)$, up to an additional factor $\|\underline u-\underline u'\| / \e$. 
   The terms in which the derivative w.r.t. $t$ acts on other RCCs, or on $z_{h'}^*(\ul(t))$ are treated similarly. In light of these considerations,
   recalling the bound $|B^\n_{h,\o}(\underline u)|\le C_1K^{-1}\e 2^{\th h}$ on the un-modified $\n$-component of the beta function, we obtain the bound in the first line of \eqref{6.82}
   with $\#=\n$. The other components are treated similarly, but we do not belabor further details here.  

\medskip

This concludes the proof that the map $T$ defined by \eqref{eq:sys} is a contraction on $X_\e$: therefore, it admits a unique fixed point $\underline{\mathfrak{u}}$ on $X_\e$, whose 
components at $h=0$ correspond to the initial data generating a flow that satisfies the conditions (1) and (2) given at the beginning of Section \ref{secfl}. 

\subsubsection{Analyticity of the fixed point sequence and inversion of the counterterms.}\label{seccount}

   Thanks to the convergence of the tree expansion for the components of the beta function, the components of $\underline{\mathfrak{u}}$, and, in particular, those at $h=0$, are all real analytic functions of $\l_{0}=\l$, in the ball \eqref{lkk}. 
We write: 
\be \label{id}\n_{0,\o}=f_{\n;\o}(\l),\quad a_{0,\o}=f_{a;\o}(\l),\quad b_{0,\o}=f_{b;\o}(\l).\ee 
From now on, with some abuse of notation, we denote by $\n_{h,\o}=\n_{h,\o}(\l)$, $a_{h,\o}=a_{h,\o}(\l)$, $b_{h,\o}=b_{h,\o}(\l)$, $\l_h=\l_h(\l)$, $z_h=z_h(\l)\equiv z^*_h(\ul\l(\l))$ 
the components of the fixed point sequence, thought of as  functions of $\l_0=\l$. 
Recalling that $\n_{0,\o}(\l)=-\m(\bar p^\o)$, from the first equation in \eqref{id} we calculate $\bar p^\o=\bar p^\o(\l)$ (via the implicit function theorem; recall that $\alpha_\o=\partial_{p_1}\m(p^\o)\neq 0,\beta_\o=\partial_{p_2}\m(p^\o)\neq 0$ and that $\alpha_\o/\beta_\o\not\in\mathbb R$), and find that $\bar p^\o(\l)=p^\o+O(\l)$. 
Finally, recalling that $\bar \a_\o,\bar \b_\o$ are related to $a_\o=a_{0,\o}(\l)$, $b_\o=b_{0,\o}(\l)$ via \eqref{eq:6.4}-\eqref{eq:6.5}, we find that $\bar\a_\o=\bar\a_\o(\l)=\a_\o+O(\l)$ and 
$\bar\b_\o=\bar\b_\o(\l)=\b_\o+O(\l)$, as desired. The functions $\bar p^\o(\l), \bar\alpha_\o(\lambda), \bar\beta_\o(\lambda)$ are those of items (1)-(2) of Proposition \ref{prop:comp}.

\subsubsection{The flow of $Z_h$ and its critical exponent $\eta$.}\label{secflowZ}

Once the initial data are fixed as in \eqref{id} and the corresponding flow of RCC is bounded and exponentially convergent, we
immediately find that 
\be Z_h=\prod_{k=h+1}^{0}(1+z_k)=: (1+z_{-\infty}(\l))^{-h}A_h,\label{eq:6.83}\ee
where $A_h=1+O(\l^2)$ and  $A_h=A_{-\infty}(1+O(\l^2 2^{\th h}))$, as easily follows from \eqref{z*}. Note that, while $z_{-\infty}$ depends only on $\l_{-\infty}$, $A_{-\infty}$ depends on the whole sequence. 
For future reference, we let $\h=\h(\l)=\log_2(1+z_{-\infty}(\l))$ be the so-called {\it critical exponent of the wave function renormalization}. Then,
\be Z_h=2^{-\h h}A_h=A_{-\infty} 2^{-\h h}(1+O(\l^2\,2^{\th h})).\label{eq:6.84}\ee

\subsubsection{Fixing the bare coupling $\l_{\infty}$ of the reference model.}\label{remmm} 

The critical exponent $\h(\l)$ only depends on the asymptotic value of $z_h$ as $h\to -\infty$ that, in turn, only depends on $\l_{-\infty}(\l)$, see Remark \ref{asymptz}. 
  Recall that, by its very definition, the flow equation of $z_h$
  involves a beta function expressed in terms of
  $\hat W^{(h),R}_{2,0;(\o,\o)}$ and, therefore, it is the same as we
  would get in a multiscale expansion of the reference model of
  Section \ref{sec:IR}: as a consequence, the critical exponent
  $\h(\l)$ is the {\it same} as the one of the reference model,
  $\h_R(\l_\infty)$, provided that the bare coupling $\l_\infty$ of
  the reference model is fixed in such a way that the infrared limit
  $\l_{-\infty;R}=\l_{-\infty;R}(\l_\infty)$ of its coupling equals
  that of the dimer model, \be
  \l_{-\infty;R}(\l_\infty)=\l_{-\infty}(\l).\label{linftyl-infty}\ee
  This equation is analytically invertible w.r.t. $\l_\infty$, as one
  can show by repeating the study of the flow of $\l_h$ for the
  reference model:  in that case, $\l_{h;R}$ satisfies the
  analogue of the fourth equation in \eqref{eq:sys}, which reads, {for $h<0$,}
  $\l_{h;R}=\l_\infty+\sum_{j= h}^0B_{h,R}^\l(\l_\infty,\ul u_{R})$,
  where $B_{h,R}^\l$ is given by a convergent tree
  expansion, and satisfies
  $|B_{h,R}^\l(\l_\infty,\ul u_{R})|\le ({\rm const.})|\l_\infty|^2
  2^{\th h}$. With respect to the dimer model (cf. \eqref{dec}), note that there is no linear term in the beta function of $\l$: this is because the interaction potential of the
  continuum model is exactly quartic in the Grassmann fields.  From this, one finds that
  $\l_{-\infty}=\l_{\infty}+f_{\l,R}(\l_{\infty})$, where $f_{\l,R}$
  is analytic in $\l_\infty$ and of second order in $\l_\infty$; in
  particular, $\l_{-\infty;R}(\l_\infty)$ is analytically invertible
  with respect to $\l_\infty$.  In conclusion, \eqref{linftyl-infty}
  can be inverted into
  $\l_\infty=\l_{-\infty;R}^{-1}\big(\l_{-\infty}(\l)\big){=O(\l)}$; this
  choice guarantees that the asymptotic couplings as $h\to-\infty$ of the
  dimer and reference models are the same. The function $\lambda_\infty(\l)$ constructed here is the one of item (3) of Proposition \ref{prop:comp}.
  Finally, by inspection of
  second order perturbation theory, it turns out \cite[Th. 2]{BM02} that
  $\h_R(\l_\infty)=a\l^2_{\infty}+O(\l_\infty^3)$, for a suitable
  $a>0$. Therefore, by fixing $\l_\infty$ as in \eqref{linftyl-infty}
  and recalling that
  $\l_{-\infty;R}(\l_\infty)=\l_\infty+O(\l_\infty^2)$, we find
  $\h(\l)=a[\l_{-\infty}(\l)]^2+ O(\l^3)$.

\subsubsection{Fixing the bare wave function renormalization $Z$ of the reference model.}\label{bareZ}

Once  $\l_\infty$ is fixed as in the previous subsection, the flow of the wave function renormalization $Z_{h;R}$ of the reference model 
is given by the analogue of \eqref{eq:6.84}, with the {\it same} critical exponent $\eta$: if $h<0$, 
\be Z_{h;R}=2^{-\h h}A_{h,R}=A_{-\infty,R} 2^{-\h h}(1+O(\l^2\,2^{\th h})),\label{eq:6.84RR}\ee
   for some $A_{-\infty,R}=Z(1+O(\l^2))$. Actually, by a trivial rescaling of the Grassmann fields in \eqref{vv1}, one sees that $ Z_{h;R}$ is proportional to $Z$, so that $A_{-\infty,R}=Z\tilde A_{-\infty,R}$ with $\tilde A_{-\infty,R}=(1+O(\l^2))$ independent of $Z$. We now fix $Z=A_{-\infty}/\tilde A_{-\infty,R}=1+O(\l^2)$,
   which guarantees that $\lim_{h\to-\infty}Z_{h;R}/Z_h=1$. The function $Z(\l)$ constructed here is the one of item (3) of Proposition \ref{prop:comp}.

   \subsubsection{The flow of $Y_{h,r,\ul\o}$.}\label{flowY}
     On scale $-1$, one sees by direct inspection of the non-interacting dimer model that  $Y_{-1,r,(\o_1,\o_2)}:=-K_re^{-i\bar p^{\o_2}\cdot v_r}+O(\l)$.
   Once the fixed point sequence $\underline{\mathfrak{u}}$ has been determined, we can plug it into the beta function equation for $Y_{h,r,\ul\o}$,
   \be Y_{h-1,r,\ul\o}=Y_{h,r,\ul\o}+B^Y_{h,r,\ul\o}(\underline{\mathfrak{u}}, \ul Y_r),\qquad h\le -1,\label{BY}\ee
   where $\ul Y_r=(Y_{h,r,\ul\o})_{h\le -1,\, \ul\o\in\{\pm\}^2}$.
   Note that, by definition, $B^Y_{h,r,\ul\o}$ is linear in $\ul Y_r$. 
   This equation can be solved iteratively in $h$, via a procedure analogous to the one used to compute $\ul z$ given $\ul\l$, see Section \ref{secz*}. In particular,
   $B^Y_{h,r,\ul\o}$ admits a tree expansion, by using which \eqref{BY} can be rewritten as
   \be Y_{h-1,r,\ul\o}=Y_{h,r,\ul\o}+\sum_{k=h}^{-1}\sum_{\ul \o'}B^{Y,R}_{k,h;\ul\o,\ul\o'}(\underline{\mathfrak{u}})Y_{k,r,\ul\o'}
   +\sum_{k=h}^{-1}\sum_{\ul \o'}B^{Y,s}_{k,h;r,\ul\o,\ul\o'}(\underline{\mathfrak{u}})Y_{k,r,\ul\o'},\label{BY.1}\ee
   where: $B^{Y,R}_{h,k;\ul\o,\ul\o'}$ is the relativistic contribution, i.e., it is expressed as a sum over trees whose endpoints are all of type $\l$ and all the propagators have been replaced by relativistic ones 
   (it is easy to check, by inspection, that $B^{Y,R}_{h,k;\ul\o,\ul\o'}$ is independent of $r$), and $B^{Y,s}_{h,k;r,\ul\o,\ul\o'}$ is the remainder.
   Thanks to the short memory property, and the known bounds on the components of the fixed point sequence $\underline{\mathfrak{u}}$, we find that 
   \be |B^{Y,R}_{k,h;\ul\o,\ul\o'}(\underline{\mathfrak{u}})|\le C|\l| 2^{\th'(h-k)}, \quad |B_{k,h;r,\ul\o,\ul\o'}^{Y,s}(\underline{\mathfrak{u}})|\le C|\l| 2^{\th'h}.\ee
 
   We now let $y_{h,r,\ul\o}=Y_{h-1,r,\ul\o}/Y_{h,r,\ul\o}-1$, and iteratively compute $y_{h,r,\ul\o}$ for $h\le -1$ from \eqref{BY.1}, starting from $h=-1$. 
   Proceeding by induction, as in Section \ref{secz*}, we find that $y_{h,r,\ul\o}$ is a Cauchy sequence, whose limiting value as $h\to-\infty$, $y_{-\infty,\ul\o}(\l)$, only depends on $\l_{-\infty}(\l)$, see Remark \ref{asymptz}. 
   This limiting value defines new critical exponents, $\h_{\ul\o}(\l):=\log_2(1+y_{-\infty,\ul\o}(\l))$. 
   By using the same considerations {as in Section \ref{remmm}}, we conclude that $\h_{\ul\o}(\l)$ are the same as the corresponding exponents in the continuum model, provided that 
   the bare coupling $\lambda_\infty$ is fixed in such a way that $\l_{-\infty,R}(\l_\infty)=\l_{-\infty}(\l)$. Thanks to the symmetries of the reference model, it is known that $\h_{(\o_1,\o_2)}(\l)$ are real and only depend on
   the product $\o_1\o_2$; we denote by $\h_1(\l)$, resp. $\h_2(\l)$, the critical exponent with $\o_1=-\o_2$, resp. $\o_1=\o_2$. Remarkably, it is known also that $\h_2(\l)=\h(\l)$,  see \cite[Theorem 3]{BM02}. 
   On the other hand, an explicit computation shows that $\h_1(\l)=b\l_{-\infty}(\l)+O([\l_{-\infty}(\l)]^2)$, for a suitable $b\neq 0$,
   so that in particular $\h_1(\l)\neq \eta(\l)$ (recall that $\h(\l)=a[\l_{-\infty}(\l)]^2+
   O(\l^3)$, as discussed {in Section \ref{remmm}).}
   In terms of these critical exponents, we can rewrite $Y_{h,r,\ul\o}$ in a way analogous to \eqref{eq:6.84}
   \bea Y_{h,r,(\o,\o)}&=&2^{-\h h}B_{h,r,\o}=2^{-\h h}B_{-\infty,r,\o} (1+O(\l\,2^{\th h})),\label{eq:6.86}\\
  Y_{h,r,(\o,-\o)}&=&2^{-\h_1 h}C_{h,r,\o}= 2^{-\h_1 h}C_{-\infty,r,\o}(1+O(\l\,2^{\th h})),\nonumber\eea
                      for suitable complex constants $B_{h,r,\o}$, $C_{h,r,\o}$, such that $B_{h,r,-\o}=B_{h,r,\o}^*$ and $C_{h,r,-\o}=C_{h,r,\o}^*$.

The critical exponent $\nu$ of Theorems \ref{th:1} and \ref{th:2} is given in terms of $\h(\l),\h_1(\l)$  by the simple relation
                      \begin{eqnarray}
                        \label{eq:nuetaeta}
                        \nu(\l)=1
+\eta(\l)-\eta_1(\l).                      \end{eqnarray}
                      
                      \subsection{Thermodynamic limit for the correlation functions}
\label{sec:ultimiquattro}
                    In the previous sections, we have obtained a convergent expansion for the effective potentials, valid for $|\l|$ small enough and a suitable choice of $\bar p^\o=\bar p^\o(\l)$, $\bar \a_\o=\bar \a_\o(\l)$, $\bar \b_\o=\bar\b_\o(\l)$.
                    In particular, after the integration of all the scales $h> h_L$  we obtain\footnote{Recall that $h_L$ is the first scale at which the support of $\bar\chi_h$ has empty intersection with $\mathcal P'_\o$, so that (see \eqref{eq:gleh}) at scale $h_L$ one can remove in \eqref{eq:vh} the integration w.r.t. $P_{(\le h_L)}$ and just replace $\psi$ with $0$. \label{foto:hL} Recall also that $\mu_0(\cdot)$ was defined in \eqref{mu0}, and $J=J(A)$  just before Eq. \eqref{eq:tb0}.}  from \eqref{eq:vh} with $h=h_L$
\bea  \mathbb W_L^{(\bt)}(A,0,\Psi)&=&L^{-2}\sum_\o[\m_0(k^\o_\bt)-Z_{h_L}\m_{\bt,\o}]
\hat \Psi^+_{\o}\hat \Psi^-_{\o} \nonumber\\
&+&L^2 E^{(h_L)}+S^{(h_L)}(J)+V^{(h_L)}(\sqrt{Z_{h_L}}\Psi,J),\label{eq:vhL}\eea
                    where we defined 
$$\m_{\bt,\o}:=\m_{h_L,\o}(k^\o_\bt-\bar p^\o)= \bar D_\o(k^\o_\bt-\bar p^\o)+r_\o(k^\o_\bt-\bar p^\o)/Z_{h_L}$$
    and $E^{(h_L)}$, $S^{(h_L)}(J)$ and $V^{(h_L)}(\Psi,J)$ are given by the convergent tree expansion discussed above. 
    In order to obtain the Grassmann generating function with $\bt$ boundary conditions, 
    $\mathcal W_L^{(\bt)}(A,0)$, we need to integrate out $\Psi$, see \eqref{eq:Wanophi}; finally, the dimer generating function is obtained by taking a linear combination of $e^{\mathcal W_L^{(\bt)}(A,0)}$ {with all values of $\bt$}, see \eqref{eq:BP}. Using \eqref{eq:vhL} we write: 
    \bea\label{eq:Wnonophi}
    e^{\mathcal W_L^{(\bt)}(A,0)}&=&e^{L^2 E^{(h_L)}+S^{(h_L)}(J)}\times\\
&\times &\int D\hat \Psi e^{-L^{-2}Z_{h_L}\sum_{\o}\mu_{\bt,\o}\hat\Psi^+_{\o}\hat\Psi^-_{\o}+V^{(h_L)}(\sqrt{Z_{h_L}}\Psi,J)}.\nonumber\eea 
          In order to study the thermodynamic limit for correlations, it is important to characterize how $E^{(h_L)}$, $S^{(h_L)}(J)$ and $V^{(h_L)}(\Psi,J)$ 
          depend on the system size $L$ and on the boundary conditions $\bt$. For illustrative purposes, we start by considering the case $A=J=0$, in which case the generating function reduces to the partition function. 
          As shown in Appendix \ref{app:grigliabis}, $Z_{\bt}:= e^{\mathcal W_L^{(\bt)}(0,0)}$ can be rewritten as
\begin{eqnarray} 
Z_{\bt}&=&\Big[\prod_{k\in \mathcal P'(\bt)}\mu_0(k)\Big]e^{L^2 \Delta(\lambda)} (1+s_\bt(\l))\times\label{yaa}\\
&\times& \frac1{Z_{h_L}^2}\int D\hat \Psi e^{-L^{-2}Z_{h_L}\sum_{\o}\mu_{\bt,\o}\hat\Psi^+_{\o}\hat\Psi^-_{\o}+V^{(h_L)}(\sqrt{Z_{h_L}}\Psi,0)},
\nonumber\eea
where: $\Delta$ is analytic in $\l$, independent of $L$ and of the boundary conditions; $s_\bt(\l)$ depends on $L,\bt$ and is of order $O(\lambda)$, uniformly in $L,\bt$;
\be V^{(h_L)}(\Psi,0)= L^{-3}\sum_\o u_{\bt,\o}(\l)\hat\Psi^+_{\o}\hat\Psi^-_{\o}+L^{-6}v_\bt(\l)\hat \Psi^+_{+}\hat \Psi^-_{+}\hat \Psi^+_{-}\hat \Psi^-_{-},\label{eq:692}\ee
         with $u_{\bt,\o}(\l),v_\bt(\l)$ of order $O(\lambda)$, uniformly in $L,\bt$. From now on, for lightness of notation, we drop the argument $\l$ in $u_{\bt,\o}(\l),v_\bt(\l),s_\bt(\l)$. The integration of $\Psi$ is elementary, and gives (recall \eqref{crollo})
\begin{eqnarray} 
  Z_{\bt}=\Big[\prod_{k\in \mathcal P'(\bt)}\mu_0(k)\Big]e^{L^2 \Delta(\lambda)} (1+s_\bt)\label{yae}\Big[\prod_{\o=\pm}(-\mu_{\bt,\o}+\frac{u_{\bt,\o}}L)+\frac{v_\bt}{L^2}\Big],
\end{eqnarray}
or, equivalently, 
\begin{equation} 
Z_{\bt}=e^{L^2 \Delta(\lambda)} (1+s_\bt)\, \big( Z^0_{\bt}+\tilde Z^0_\bt\, L^{-2} \s_\bt\big)\label{yea}\ee
where we defined 
\bea && \tilde Z^0_{\bt}=\prod_{k\in \mathcal P'(\bt)}\mu_0(k),\qquad Z^0_{\bt}=\m_{\bt,+}\,\m_{\bt,-}\,\tilde Z^0_\bt \label{tildeZ0},\\
&& \s_\bt=-L\sum_{\o=\pm}u_{\bt,\o}\,\m_{\bt,-\o}+u_{\bt,+}u_{\bt,-}+v_\bt.\eea
We now let $\bt^0$ be the boundary conditions for which $k_{\boldsymbol\th}^\o$ is at the largest distance from $\bar p^\o$;  if $L$ is large enough,
\be |\m_{\bt^0,\o}|\ge (1/2)\, |\m_{\bt,\o}|,\quad \forall\bt\in\{0,1\}^2\label{gr1}\ee 
and 
\be c_-^{-1}/L\le |\m_{\bt^0,\o}|\le c_-/L,\ee
for a suitable $L$-independent constant $c_-$. Moreover, 
\be c_+^{-1}\le |{\tilde Z^0_\bt}/{\tilde Z^0_{\bt'}}|\le c_+,\label{griglia}\ee
for a suitable $L$-independent constant $c_+$, for all choices of $\bt,\bt'$ (see Appendix \ref{app:griglia}). We now multiply and divide the term $\tilde Z^0_\bt\, L^{-2} \s_\bt$ in 
\eqref{yea} by $Z^0_{\bt^0}$ and rewrite it as 
\be \tilde Z^0_\bt\, L^{-2} \s_\bt= Z^0_{\bt^0}\frac{\tilde Z^0_\bt}{\tilde Z^0_{\bt^0}}\frac{\s_{\bt}}{L^2\mu_{\bt^0,+}\,\mu_{\bt^0,-}}=: Z^0_{\bt^0}\s_{\bt,\bt^0}. \ee
By using 
\eqref{gr1}--\eqref{griglia}, we immediately conclude that $\s_{\bt,\bt^0}=O(\l)$, uniformly in $L,\bt$. If we now take the appropriate linear combination of $Z_\bt$, we obtain the partition function
of the interacting dimer model that, in light of the previous considerations, can be written as
\be\label{zlll} Z_L =\frac12\sum_{\bt} c_\bt Z_\bt=\frac{e^{L^2 \Delta(\lambda)} }2\sum_{\bt} (1+s_\bt) \Big[c_\bt Z^0_{\bt}+Z^0_{\bt^0}c_\bt\s_{\bt,\bt^0}\Big].\ee
We now let  $Q^0_L=\frac12\sum_{\bt}c_{\bt}Z^0_{\bt}$; we recall that the constants $c_\bt$ are either $1$ or $-1$, depending on $\bt$ and on the parity of $L/2$, see the definition after \eqref{eq:2}. A simple computation shows that
\begin{eqnarray}
  \label{eq:infradito}
c_{\bt}Z^0_{\bt}=|Z^0_{\bt}| \quad \text{for all} \quad \bt  ,
\end{eqnarray}
 see Appendix \ref{app:infradito}. Therefore, $Q^0_L=\frac12\sum_{\bt}|Z^0_{\bt}|$, so that 
\be \frac12\max_\bt|Z^0_{\bt} |\le Q^0_L\le 2\max_{\bt}|Z^0_{\bt}|.\ee
If we use these inequalities in \eqref{zlll}, we get  
\begin{equation} Z_L=e^{L^2\D(\l)}Q^0_L (1+r_L(\l))\label{eq:gfp},\end{equation}
where the error term $r_L(\l)$ is of order $O(\l)$, uniformly in $L$. 

\medskip

Let us now adapt the previous discussion to $e^{\mathcal W_L^{(\bt)}(A,0)}$, in the presence of the external field $A$. In this case, the analog of \eqref{yae}  
is
\begin{eqnarray} 
&& e^{\mathcal W_L^{(\bt)}(A,0)}=\tilde Z^0_\bt e^{L^2 \Delta+S_L(J)+\mathcal S_\bt(J)} (1+s_\bt)\times\nonumber\\
&&\times \frac1{Z_{h_L}^2}\int  D\hat \Psi e^{-L^{-2}Z_{h_L}\sum_{\o}\mu_{\bt,\o}\hat\Psi^+_{\o}\hat\Psi^-_{\o}+V^{(h_L)}(\sqrt{Z_{h_L}}\Psi,J)}.
\label{yaaA}\end{eqnarray}
Let us define the various functions involved, and let us prove a number of properties that they satisfy.
In the first line, $\tilde Z^0_\bt$ was defined in \eqref{tildeZ0}, $\D=\D(\l)$, $s_\bt=s_\bt(\l)$ are the same as in \eqref{yaa}.  Moreover, $S_L(J)+\mathcal S_\bt (J)$
is a rewriting of $S^{(h_L)}(J)\equiv \sum_{h_L\le h<0}\tilde S^{(h)}(J)$ (recall that $\tilde S^{(h)}$ is the single scale contribution to the generating function, see item (1) in Sect.\ref{sectree} and Eq.\eqref{eq:6.63e}):
\begin{equation}\label{recallgenfun} S_L(J)+\mathcal S_\bt (J)=\sum_{h_L\le h<0}\tilde S^{(h)}(J).\end{equation} 
In this equation, $S_L(J)$ is independent of boundary conditions
and corresponds to the `bulk' contribution, i.e., the dominant one in the thermodynamic limit; more precisely,  
$S_L(J)$ is {\it defined} as
\begin{eqnarray}
S_L(J)&=&\sum_{m\ge 1}\sum_{\ul r\in\{1,\ldots,4\}^m} \sum_{\ul y\in\L^m}\,J_{y_1,r_1}\cdots J_{y_m,r_m} \times \label{definSL}\\
&\times &\sum_{h_L\le h\le 0} \sum_{n_2,\ldots,n_m\in\mathbb Z^2}W^{(h),\infty}_{0,m;\ul r}(y_1,y_2+n_2L,\ldots,y_m+n_mL), \nonumber\eea
where $W^{(h),\infty}_{0,m;\ul r}(\ul y)$ is the infinite volume limit of the kernel $W^{(h)}_{0,m;\ul r}(\ul y)$ of the single-scale contribution to the generating function, 
$\tilde S^{(h)}(J)$, defined after \eqref{eq:6.65}, 
which satisfies the bound \eqref{bouW} with $n=0$; in particular, recalling \eqref{eq:6.84} and \eqref{eq:6.86}, which imply that 
$\max_{h'\ge h}\frac{|Y_{h',\cdot}|}{|Z_{h'}|}\le C2^{-h [\eta_1-\eta]_+}\le C 2^{-hC|\l|}$, we find
\be \|W^{(h),\infty}_{0,m}\|_{\k,h}\le C^m 2^{h(2-m)}2^{-h C|\l|m},\label{weightL1n0} \ee
for some $C,\k>0$. 
We recall that \eqref{bouW} (and, therefore, \eqref{weightL1n0}) is a consequence of the {GN tree expansion for $W^{(h),\infty}_{0,m;\ul r}$, which reads
\be W^{(h),\infty}_{0,m;\ul r}(\ul y)=\sum_{N_n\ge 0,\, N_s \ge 1}\sum_{\t\in \mathcal T^{(h)}_{N_n,N_s}} 
\sum_{
{\bf P}\in\PP_\t}\sum_{T\in{\bf T}}\sum_{ \xx_{v_0}} \mathds 1(\xx_{v_0}=\ul y) W_{\t,{\bf P},{T}}(\xx_{v_0})\;,\label{cor.6.131}\ee
and of the weighted $L_1$ bound on $W_{\t,{\bf P},{T}}(\xx_{v_0})$ stated in 
Proposition \ref{prop:an}. A generalization of this weighted $L_1$ bound leads to the following pointwise estimate, valid under the same assumptions as
Proposition \ref{prop:an}, for $\xx_{v_0}=\ul y$:
\bea && \big|W_{\t,{\bf P},T}(\xx_{v_0})\big|\le C^{N_s}
\,(C\e)^{\max\{N_n,c|I_{v_0}^\psi|\}}\,
2^{h(2-m)}\,\Biggl[\prod_{v\ {\rm s.e.p.}}\sup_{r,\ul\o}
\Big|\frac{Y_{h_v-1,r,\ul\o}}{Z_{h_v-1}}\Big|\Biggr]\times\nonumber\\
&&\qquad \times \Big[\prod_{\substack{v\,{\rm not}\\ {\rm e.p.}}}\hskip-.02truecm\frac{C^{\sum_{i=1}^{s_v}|P_{v_i}|-|P_v|}}{s_v!} \;2^{ \frac\e2|P^\psi_v|}2^{2-\frac12|P_v^\psi|-|P_v^J|-z(P_v)}
\Big]\times\label{cor.eq}\\
&&\qquad \times  \Big[\prod_{v\in V_{nt}(\tau^*)} 2^{2(s^*_v-1)h_v}e^{-c\sqrt{2^{h_v} \delta_v}}\Big]
\;,\nonumber\eea
see \cite[Sect.7.2]{GMT17a} and in particular \cite[(7.12)]{GMT17a} and following discussion for a derivation of this estimate. 
For the definition of the `pruned tree' $\tau^*=\tau^*(\tau)$, we refer to the discussion after \cite[(7.5)]{GMT17a}; we recall that 
$V_{nt}(\tau^*)$ is the set of non trivial vertices of $\tau^*$, that 
$s^*_v$ is the number of vertices immediately following $v$ on $\tau^*$, and $\delta_v$ is the tree distance of the set $\cup_{f\in P_v^J}\{x(f)\}$. 
See also \cite[(3.11)]{BFM09} and \cite[(4.5)]{GGM} for analogous formulas and proofs in the context of the Thirring model and of non-planar 2D Ising model, respectively. By proceeding as in \cite[Sect.IV.B]{GGM}, see also \cite[Sect.3.1]{BFM09}, by plugging \eqref{cor.eq} in \eqref{cor.6.131} and summing 
over $N_n,N_s, \tau, {\bf P}, T$, we get the analogue of 
\cite[(4.31)]{GGM}, with the appropriate modifications\footnote{{Compared with \cite[(4.31)]{GGM}, in \eqref{Linftyn0.0} there are a few significant differences: 
(1) the sum over scales is over negative integers, rather than over scales smaller than $N+2$, due to the different choice of the ultraviolet scale, i.e., of the lattice spacing, equal to $1=2^0$ in this 
paper, and to $a\sim 2^{-N}$ in \cite{GGM}; (2) there is no analogue of the `short memory factor' $2^{-\theta N}$, due to the fact that here the model 
has a quartic marginal coupling, $\lambda_h$, flowing to a non trivial fixed point $\lambda_{-\infty}$, contrary to \cite{GGM}, where 
the quartic coupling is irrelevant, and the infrared theory tends exponentially fast, like $\sim 2^{\theta h}$, to a trivial fixed point; this fact also explains the reason why in the definition of 
$\alpha_v$ here there is no $\theta$, contrary to \cite[(4.32)]{GGM}; (3) there is a damping factor $2^{(h-h_0^*)(1-\epsilon)}$ associated with the 
branch of $\tau^{**}$ from $v_0^*$ to $v_0$; in \cite{GGM} its analogue did not appear explicitly, because it was summed over $h\le h_0^*$, 
see the comment right before \cite[(4.24)]{GGM}; (4) in the definition of $\alpha_v$, the constant in front of $s_v^{*,1}$ is $\epsilon'<\epsilon$ rather than $\epsilon$, 
see \cite[(4.32)]{GGM}, the reason being the presence of the factors $\frac{|Y_{h_v,\cdot}|}{|Z_{h_v}|}\le C 2^{-h_vC|\l|}$ associated with the special endpoints, which 
have the effect of changing $\epsilon s_v^{*,1}$ in the definition of $\alpha_v$ into $(\epsilon-C|\lambda|)s_v^{*,1}\ge \epsilon' s_v^{*,1}$.}}:
\begin{equation}\label{Linftyn0.0}\big| W^{(h),\infty}_{0,m;\ul r}(\ul y)\big|\le C^m \sum_{\substack{h< h_0^*<0\\ \tau^{**}\in {\mathcal T}^{**}_{0,h_0^*;m}}}
2^{(h-h_0^*)(1-\epsilon)}
\Big[
\prod_{v\in V_{nt}(\tau^{**})}2^{\alpha_v h_v}e^{-c\sqrt{2^{h_v}\delta_v}}\Big],\end{equation}
where $\alpha_v=(1-\epsilon)s_v^*+\epsilon' s_v^{*,1}$ for $v=v_0^*$, and $\alpha_v=(1-\epsilon)(s_v^*-1)+\epsilon' s_v^{*,1}$ for $v>v_0^*$; here 
$\epsilon,\epsilon'$ are two suitable, small, constants, such that $\epsilon>\epsilon'>0$. Once again, we refer to \cite[Sect.IV.B]{GGM} for the definitions, in particular 
of ${\mathcal T}^{**}_{0,h_0^*;m}$. By summing over $h_0^*$ and $\tau^{**}$, using in particular \cite[(4.14)-(4.15)]{GGM} and the fact that $\delta_v\ge (s_v^*-1)\delta$, with 
$\delta$ the minimal pairwise distance among the points in $\ul y$, see the lines after \cite[(4.15)]{GGM} and the following equation, we get the analogue of 
\cite[(4.33)]{GGM}, namely 
\begin{equation}\label{Linftyn0}\big| W^{(h),\infty}_{0,m;\ul r}(\ul y)\big|\le C^m \frac{2^{h(1-\epsilon)}e^{-\frac{c}2\sqrt{2^{h}\delta}}}{(\delta+1)^{(1-\epsilon)(m-1)}},
\end{equation}
where we also used the fact that $\sum_{v\in V_{nt}(\tau^{**})}\alpha_v=(1-\epsilon)m+\epsilon'm>(1-\epsilon)m$.}

The {\it finite size correction} $\mathcal S_\bt(J)$ is, by definition, 
the difference between the right side of \eqref{recallgenfun} and $S_L(J)$, see \eqref{definSL}, which we write as
\begin{equation}
\mathcal S_\bt(J) =\sum_{m\ge 1}\sum_{\ul r\in\{1,\ldots,4\}^m}\sum_{\ul y\in\L^m}J_{y_1,r_1}\cdots J_{y_m,r_m} w^{\bt,L}_{m;\ul r}(\ul y).\label{6.99}\end{equation}
Its kernels, $w^{\bt,L}_{m;\ul r}$, can be bounded via the same strategy discussed in App.\ref{dimr} for the bounds on the finite size corrections to the effective 
potential; the result is that $w^{\bt,L}_{m;\ul r}$ admits the same dimensional bounds \eqref{weightL1n0}, \eqref{Linftyn0} as $W^{(h_L),\infty}_{0,m;\ul r}$, 
in the case that the scale index $h$ equals $h_L$; that is, recalling that $2^{h_L}\propto L^{-1}$,
\be  \|w^{\bt,L}_{m}\|_{\kappa,h_L}\le C^m L^{m-2}L^{C|\l|m},\quad { \big|w^{\bt,L}_{m;\ul r}(\ul y)\big|\le C^mL^{-(1-\epsilon)} 
(\delta+1)^{-(1-\epsilon)(m-1)}},\label{6.100}\ee 
for some $C,\epsilon>0$ (in the second inequality we neglected the factor $e^{-\kappa_0\sqrt{2^{h} d(\ul y)}}$, because, for $h=h_L$, it is bounded from above 
and below by an $O(1)$ constant).

Let us go back to \eqref{yaaA}:
in the second line,  
$V^{(h_L)}(\Psi,J)$ is the effective potential on scale $h_L$, which admits the following representation:
\bea  &&   V^{(h_L)}(\Psi,J)= V^{(h_L)}(\Psi,0)+
\sum_{m\ge 1} \sum_{\substack{y_1,\ldots, y_m\\ r_1,\ldots, r_m}}\Big(\prod_{j=1}^m
J_{y_j,r_j}\Big)\times\label{reprrealspace}\\
&&\qquad  \times\ \Big[\sum_{\omega_1,\omega_2=\pm} Q^{\bt,L}_{m;(\o_1,\o_2),\ul r}(\ul y)\hat\Psi^+_{{\o_1}}\hat\Psi^-_{{\o_2}}+
R^{\bt, L}_{m;\ul r}(\ul y)\hat \Psi^+_{+}\hat \Psi^-_{+}\hat \Psi^+_{-}\hat \Psi^-_{-}\Big],\nonumber\eea
where $V^{(h_L)}(\Psi,0)$ is the same as \eqref{eq:692}, and the kernels in the second line satisfy the following (recall that $\ul y=(y_1,\ldots, y_m)$
and $\ul r=(r_1,\ldots, r_m)$):
\begin{equation} \|Q^{\bt,L}_{m}\|_{\kappa,h_L}\le  C^m L^{m-5} L^{C|\lambda| m},\quad  \|R^{\bt,L}_{m}\|_{\kappa,h_L}\le C^m |\lambda|\,L^{m-8}L^{C|\lambda| m},
\label{weightL1bound:YZ} \end{equation}
\begin{eqnarray} &&
{\big|Q^{\bt,L}_{m;(\omega_1,\omega_2),\ul r}(\ul y)\big|\le C^m L^{-(4-\epsilon)} 
(\delta+1)^{-(1-\epsilon)(m-1)}},\label{pointwisebound:YZuno}\\
&&  {\big|R^{\bt,L}_{m;\ul r}(\ul y)\big|\le
C^m L^{-(7-\epsilon)} 
(\delta+1)^{-(1-\epsilon)(m-1)}}, \label{pointwisebound:YZ}\end{eqnarray}
uniformly in $\bt$. In order to prove this representation and bounds, 
one starts from the general  representation of the effective 
potential, \eqref{eq:6.16ren}, in the case that $h=h_L$, 
focusing on the terms $m\ge 1$, $n\ge 2$; next, one uses the Fourier representation for the Grassmann field, see \eqref{eq:fourier} and \eqref{eq:sh}; 
recalling that at scale $h_L$ the Grassmann field 
contains only four modes, denoted by $\hat\Psi^\pm_{\o}$ with $\o=\pm$ (see \eqref{eq:apc}), one finds that the sum over $n$ 
is limited to $n=2,4$; therefore, one obtains  \eqref{reprrealspace}, with 
\begin{equation} Q^{\bt,L}_{m;\ul\o,\ul r}(\ul y)= L^{-4}\sum_{\ul x,\ul i,\ul q}\Big(\prod_{j=1}^2 e^{i\sigma_j k^{\omega_j}_\bt x_j} D_{i_j}^{q_j}(k^{\omega_j}_\bt-\bar p^{\omega_j})\Big)
W^{(h_L)}_{2,m, \ul i,\ul q;\ul\o,\ul r}(\ul x,\ul y),\label{W2mdef}\end{equation}
\begin{equation} R^{\bt,L}_{m;\ul r}(\ul y)= L^{-8}\sum_{\ul x,\ul\o,\ul i,\ul q}\alpha_{\ul\o}\Big(\prod_{j=1}^4 e^{i\sigma_j k^{\omega_j}_\bt x_j} D_{i_j}^{q_j}(k^{\omega_j}_\bt-\bar p^{\omega_j})\Big)
W^{(h_L)}_{4,m, \ul i,\ul q;\ul\o,\ul r}(\ul x,\ul y),\label{W4mdef}
\end{equation}
where: $\sigma_j=(-1)^{j-1}$, $D_i^q(k)$ is the Fourier symbol of $\hat\partial_i^q$, which satisfies $|D_i^q(k)|\le C |k|^q$
(the reason why the argument of $D_{i_j}^{q_j}$ is shifted by $\bar p^\o$ is that it originates from the action of $\hat\partial_i^q$
on $L^{-2}e^{\pm i (k^\o_\bt-\bar p^\o)x}\hat \Psi^\pm_\o$, see \eqref{eq:sh}), and,
in the second line, $\alpha_{\ul\o}$ is a symmetry factor that takes values in $\{0,-1,+1\}$ (it can be easily computed using the anticommutation properties of the Grassmann variables) and whose precise value is inessential for the
following bounds. 
{By using the bounds \eqref{bouW} on the weighted $L_1$ norm of $W^{(h_L)}_{n,m,\ul i,\ul q;\ul \o,\ul r}$, 
we obtain \eqref{weightL1bound:YZ}. Similarly, using the corresponding pointwise bound\footnote{{More precisely, the norm in the left side of 
\eqref{corpointwise} is a mixed norm, pointwise in $\ul y$ and $L_1$ in $\ul x$. In order to obtain \eqref{corpointwise}, we proceed as follows: we start from 
the appropriate analogue of \eqref{cor.eq}, which has a factor $2^{h(2-m-n/2)}$ instead of $2^{h(2-m)}$, and where $h=h_L$; next, we keep the factor 
$2^{h_L(-n/2)}$ on a side and manipulate the rest of the expression as discussed after \eqref{cor.eq}, thus obtaining the right side of \eqref{corpointwise}.}}, 
analogous to \eqref{Linftyn0}, namely 
\begin{eqnarray} &&
\sum_{\ul x,\ul i,\ul q}\Big(\prod_{j=1}^n \big| D_{i_j}^{q_j}(k^{\omega_j}_\bt-\bar p^{\omega_j})\big|\Big)
\big|W^{(h_L)}_{n,m, \ul i,\ul q;\ul\o,\ul r}(\ul x,\ul y)\big|\le \label{corpointwise}\\
&&\hskip3.truecm  \le C^m 2^{h_L(1-\epsilon-n/2)} (\delta+1)^{-(1-\epsilon)(m-1)},\nonumber\end{eqnarray}
we obtain \eqref{pointwisebound:YZuno}-\eqref{pointwisebound:YZ}.}

We are now in the position of computing the integral  in \eqref{yaaA}: by doing so, we get 
\begin{eqnarray}  && e^{\mathcal W_L^{(\bt)}(A,0)}=e^{L^2 \Delta+S_L(J)+\mathcal S_\bt(J)} (1+s_\bt)
\Big\{\, Z^0_{\bt}+\label{yeaAA}\\
&& +\, \tilde Z^0_{\bt}\,L^{-2}\cdot \Big[\s_{\bt}+\sum_{m\ge 1} \sum_{\ul y,\ul r}\Big(\prod_{j=1}^m J_{y_j,r_j}\Big) 
V^{\bt,L}_{m; \ul r}(\ul y)\nonumber\\
&& +\sum_{m,m'\ge 1} \sum_{\ul y,\ul y', \ul r, \ul r'}\Big(\prod_{j=1}^m J_{y_j,r_j}\Big) \Big(\prod_{j=1}^{m'} J_{y_j',r_j'}\Big) 
W^{\bt,L}_{m,m';\ul r,\ul r'}(\ul y, \ul y')\Big]\Big\},\nonumber\end{eqnarray}
where: 
\begin{eqnarray}  && V^{\bt,L}_{m; \ul r}(\ul y):=L^6 R^{\bt,L}_{m;\ul r}(\ul y)+
  \sum_\o L^{3} Q^{\bt,L}_{m;(\o,\o), \ul r}(\ul y)(-L\mu_{\bt,-\o}+u_{\bt,-\o}),\nonumber\\
&& W^{\bt,L}_{m,m';\ul r,\ul r'}(\ul y, \ul y'):= L^6 \sum_{\omega=\pm}\omega Q^{\bt,L}_{m;(+,\omega), \ul r}(\ul y)\, Q^{\bt,L}_{m';(-,-\omega), \ul r'}(\ul y').\nonumber\end{eqnarray}
We rewrite \eqref{yeaAA} in condensed notation as
\be e^{\mathcal W_L^{(\bt)}(A,0)}=e^{L^2 \Delta+S_L(J)+\mathcal S_\bt(J)} (1+s_\bt)
\Big\{Z^0_{\bt}+\frac{\tilde Z^0_{\bt}}{L^{2}}
\sum_{m\ge 0}\sum_{\ul r,\ul y} \big[\prod_{i=1}^m J_{y_i,r_i}\big]\tilde w^{\bt ,L}_{m;\ul r}(\ul y)\Big\},\nonumber\ee
where the kernels $\tilde w^{\bt ,L}_{m;\ul r}(\ul y)$ are translationally invariant, and the summand with $m=0$ should be interpreted as being equal to $\s_\bt$.
Using the definition of $\tilde w^{\bt ,L}_{m;\ul r}(\ul y)$ implicit in this rewriting, and the pointwise bounds on $Q^{\bt,L}_{m;\ul\o,\ul r}$, $R^{\bt,L}_{m;\ul r}$ 
discussed above, we find 
\begin{equation}\label{realspacetildewL}{\big|\tilde w^{\bt ,L}_{m;\ul r}(\ul y)\big|\le C^mL^{-(1-\epsilon)} 
(\delta+1)^{-(1-\epsilon)(m-1)}},\quad m\ge 1.\end{equation}
Finally, we take the appropriate linear combination of $e^{\mathcal W_L^{(\bt)}(A,0)}$ in order to obtain the generating function of the interacting dimer model: 
\be e^{\mathcal W_L(A,0)}=\frac12\sum_{\bt} c_\bt e^{\mathcal W_L^{(\bt)}(A,0)}=Z_L e^{S_L(J)+\tilde{\mathcal S}_L(J)},
\ee
where 
\be \tilde{\mathcal S}_L(J)=\log (1+F_L(J)),\label{log1+F}\ee
with (recall from \eqref{eq:gfp} that $Z_L=e^{L^2\Delta}Q_L^0(1+r_L)$)
\bea && F_L(J)= \frac{1}{2Q_L^0(1+r_L)}\sum_\bt c_\bt  (1+s_\bt)\Big\{\frac{\tilde Z^0_{\bt}}{L^{2}}
\sum_{m\ge 1}\sum_{\ul r,\ul y} \big[\prod_{i=1}^m J_{y_i,r_i}\big]\tilde w^{\bt ,L}_{m;\ul r}(\ul y)\nonumber\\
&&\qquad +\, (e^{\mathcal S_\bt(J)}-1) \Big(Z^0_{\bt}+\frac{\tilde Z^0_{\bt}}{L^{2}}
\sum_{m\ge 0}\sum_{\ul r,\ul y} \big[\prod_{i=1}^m J_{y_i,r_i}\big]\tilde w^{\bt ,L}_{m;\ul r}(\ul y)\Big)\Big\}.\eea
By using the representation \eqref{6.99} for $\mathcal S_\bt(J)$, the bounds \eqref{6.100} on its kernels, and the bounds \eqref{realspacetildewL} on 
$\tilde w^{\bt ,L}_{m;\ul r}(\ul y)$, the reader can check\footnote{Thanks to \eqref{6.99}  and  \eqref{6.100}, $\mathcal S_\bt(J)$ is small {for $J$ small and} $L\to\infty$, so that one can {Taylor expand}
    $e^{\mathcal S_\bt(J)}-1=\sum_{n\ge 1}(\mathcal S_\bt(J))^n/n!$. { By rearranging this series in a form analogous \eqref{6.99} one finds that the corresponding kernels satisfy qualitatively the same bounds as \eqref{realspacetildewL}. 
    Similarly, by using the Taylor series of $\log(1+x)$, one finds that $F_L(J)$ admits an analogous representation, with kernels satisfying analogous bounds as well.}\label{fotonotaroughly}} that $F_L(J)$ admits a representation 
analogous to \eqref{6.99} as well, with $w^{\bt,L}_{m;\ul r}$ replaced by a modified kernel $W^L_{m;\ul r}$, satisfying the same pointwise estimates as 
the second of \eqref{6.100} and \eqref{realspacetildewL}. Finally, using \eqref{log1+F}, one finds (see footnote \ref{fotonotaroughly}) that also $\tilde{\mathcal S}_L(J)$ admits an analogous representation, with associated kernel satisfying the same pointwise estimates as 
the second of \eqref{6.100} and \eqref{realspacetildewL}.

\medskip

Thanks to these estimates, and to the explicit form of $S_L(J)$, we conclude, as desired, that the thermodynamic limit of the correlations of the interacting dimer model exist and are
given by (we let $e_i$ be the edge of type $r_i$ with black vertex $x_i$, and we assume the $m$ edges $e_1,\ldots e_m$ to be all different from each other):
\be \mathbb E_\l(\mathds 1_{e_1};\cdots; \mathds 1_{e_m})=m!\,\sum_{h\le 0}W^{(h),\infty}_{0,m;(r_1,\ldots,r_m)}(x_1,\ldots ,x_m).\label{6.105}\ee
Note that the sum in the right side is absolutely convergent, thanks to the pointwise bounds \eqref{Linftyn0}. Note also that, since $\{\cup_{e\in\Lambda}\mathds 1_e\}_{\Lambda\subset E_{\mathbb Z^2}}$, with $E_{\mathbb Z^2}$ the edge set of $\mathbb Z^2$ and $\Lambda$ a finite subset of it, is a basis for the local functions of dimer configurations, the existence of the 
thermodynamic limit for all the multipoint dimer correlations implies the existence of the thermodynamic limit of the average of any local function 
of the dimer configuration; this proves Eq.\eqref{eq:exlim} of Theorem \ref{th:1}. 

A similar discussion can be repeated for mixed dimer/Grassmann field correlations, but we will not belabor further details here. 

\subsection{Asymptotic behavior of the dimer correlation functions}\label{sec:asympt}

In order to complete the proof of our main theorems,
we are left with proving that the large distance behaviour of
the (thermodynamic limit of the) interacting propagator, vertex function and dimer-dimer 
correlation can be expressed in terms of linear combinations
of the appropriate correlations of the reference model, as stated in
Section \ref{sec:proveth}. We limit ourselves to the discussion 
of the asymptotic behaviour of the two-point dimer-dimer correlation,
$\mathbb E_\l(\mathds 1_{e_1};\mathds 1_{e_2})\equiv G^{(0,2)}_{r_1,r_2}(x_1,x_2)$, 
i.e., to the proof of \eqref{hh110}, and we leave the analogous discussion for the
propagator and vertex function (leading to \eqref{h10ab}, \eqref{h10a}) to the reader. 

We use a strategy analogous to the one of \cite[Section 7.1 and 7.2]{GMT17a} and we refer the reader to those sections for further details. The starting point is \eqref{6.105} with $m=2$, which we write as
\be G^{{(0,2)}}_{r_1,r_2}(x_1,x_2)=2 \sum_{h\le 0}W^{(h),\infty}_{0,2;(r_1,r_2)}(x_1,x_2).\ee
This is the analogue of \cite[Eq. (7.4)]{GMT17a} with $q=m=2$. The multiscale construction implies, of course, that $W^{(h),\infty}_{0,2;(r_1,r_2)}(x_1,x_2)$ can be written as a sum over trees with root at 
scale $h$ and  two external $J$ fields, that is 
\be G^{{(0,2)}}_{r_1,r_2}(x_1,x_2)=\sum_{h\le 0}\sum_{N\ge 0}\sum_{n=1}^2 \sum_{\t\in \mathcal T^{(h)}_{N,n}}\sum_{\substack{{\bf P}\in\mathcal P_\t:\\ |P_{v_0}|=|P_{v_0}^J|=2}}
S_{\t,{\bf P},(r_1,r_2)}(x_1,x_2);\label{6.107}\ee
this is the analogue of \cite[Eq. (7.5)]{GMT17a}. We now decompose \eqref{6.107} as in \cite[Eq. (7.7)]{GMT17a}, namely, 
\be G^{{(0,2)}}_{r_1,r_2}(x_1,x_2)=\mathcal S^{(1)}_{r_1,r_2}(x_1,x_2)+\mathcal S^{(2)}_{r_1,r_2}(x_1,x_2)+\mathcal S^{(3)}_{r_1,r_2}(x_1,x_2), \label{6.108}\ee
where: $\mathcal S^{(1)}_{r_1,r_2}$ (resp. $\mathcal S^{(2)}_{r_1,r_2}$) is the sum \eqref{6.107} restricted to trees whose normal endpoints are all of type $\l$, whose special endpoints are both density endpoints (resp. 
mass endpoints), 
see the definition after \eqref{YZ}, and whose value is computed by replacing all the propagators by relativistic ones; $\mathcal S^{(3)}_{r_1,r_2}$ is the remainder, which is given by a sum over trees, which  
either have at least one endpoint of type $\n,a,b,\mathcal R V^{(-1)}$, or have at least one `remainder propagator' $g^{(h)}_\o-g^{(h)}_{R,\o}$. 

Not surprisingly, the easiest term to bound in \eqref{6.108} is the third one: by  a proof analogous to the one leading to \eqref{6.77},
we find that  $\mathcal S^{(3)}_{r_1,r_2}$ can be bounded as  
\be |\mathcal S^{(3)}_{r_1,r_2}(x_1,x_2)|\le C\sum_{h\le 0} 2^{h(2-2C|\l|)}2^{\th h} e^{-\k\sqrt{2^h|x_1-x_2|}}\le \frac{C'}{|x_1-x_2|^{2+\th-C|\l|}}.\label{eq:rem100}\ee
Note that, for $\l$ small enough, at large distances the r.h.s. of \eqref{eq:rem100} is negligible w.r.t.
both $S^{(1,1)}_{R,\o,\o}(x,y)$ and $S^{(2,2)}_{R,\o,-\o}(x,y)$ (recall the estimates
\eqref{googleee} and \eqref{instangraaam}) and therefore $\mathcal S^{(3)}_{r_1,r_2}(x_1,x_2)$ can be absorbed in the error term $R_{r_1,r_2}(x,y)$ in
\eqref{hh110}.
A couple of  comments about how the bound \eqref{eq:rem100} is obtained will be useful (see \cite[Sec. 7.1]{GMT17a} for more details on similar estimates). The factor 
$2^{h(2-2C|\l|)}e^{-\k\sqrt{2^h|x_1-x_2|}}$ is the `dimensional bound' on   trees with root on scale $h$ and external fields $J_{x_1,r_1},J_{x_2,r_2}$. The factor $2^{\th h}$
is the `dimensional gain' arising from the fact that 
all the trees contributing to $\mathcal S^{(3)}_{r_1,r_2}$ have at
least one endpoint of type $\n,a,b,\mathcal R V^{(-1)}$ or one remainder
propagator. In fact, recall that the value of an endpoint 
of type $\n,a,b,\mathcal R V^{(-1)}$, if located at scale $h\le k\le 0$, is 
of the order $O(2^{\th k}\l)$; by the short memory property, we get a factor $2^{\th'(h-k)},\th<\th'<1$ and a sum over $k\ge h$ finally produces the factor 
$2^{\th h}$ in the right side of \eqref{eq:rem100}. The contributions with one remainder propagator on scale $k\ge h$ are treated analogously. 

Let us now consider $\mathcal S^{(1)}_{r_1,r_2}(x_1,x_2)$ and $\mathcal S^{(2)}_{r_1,r_2}(x_1,x_2)$. First of all, note that they can be naturally 
rewritten as 
\bea && \mathcal S^{(1)}_{r_1,r_2}(x_1,x_2)=\sum_{\o=\pm}\mathcal S^{(1)}_{r_1,r_2;\o,\o}(x_1,x_2),\\
&& \mathcal S^{(2)}_{r_1,r_2}(x_1,x_2)=\sum_{\o=\pm}\mathcal S^{(2)}_{r_1,r_2;\o,-\o}(x_1,x_2),\eea
where $\mathcal S^{(1)}_{r_1,r_2;\o,\o}$  is the sum over the trees whose special endpoints 
have  labels $(\o,\o)$; similarly, $\mathcal S^{(2)}_{r_1,r_2;\o,-\o}$ is the sum over the trees whose special endpoint with coordinate label $x_1$ (resp. $x_2$) 
has  label $(\o,-\o)$ (resp. $(-\o,\o)$). 
In the tree expansion for $\mathcal S^{(j)}_{r_1,r_2;\o_1,\o_2}$, we further decompose the tree values in a dominant plus a subdominant part,
the dominant part being obtained via the following replacements: replace all the values $\l_h$ of the endpoints of type $\l$ by $\l_{-\infty}=\l_{-\infty}(\l)$; replace all the values $z_h$ of the rescaling factors 
by $z_{-\infty}=z_{-\infty}(\l)$; replace all the values $Y_{h,r,(\o,\o)}/Z_{h-1}$ (resp. $Y_{h,r,(\o,-\o)}/Z_{h-1}$) of the density (resp. mass) endpoints, by 
\begin{eqnarray}
  \label{eq:ang1}
2^{-\eta}B_{-\infty,r,\o}/{A_{-\infty}}  
\end{eqnarray}
and
\begin{eqnarray}
  \label{eq:ang2}
2^{(\h-\h_1)h}\,2^{-\eta}C_{-\infty,r,\o}/{A_{-\infty}}  
\end{eqnarray}
respectively, that is their asymptotic value as $h\to-\infty$ (recall Eqs. \eqref{eq:6.84}
and \eqref{eq:6.86}). The decomposition of the tree values into dominant and subdominant
parts induces a similar decomposition of $\mathcal S^{(j)}_{r_1,r_2;\o_1,\o_2}$:
$$\mathcal S^{(j)}_{r_1,r_2;\o_1,\o_2}(x_1,x_2)=\mathcal S^{(j),d}_{r_1,r_2;\o_1,\o_2}(x_1,x_2)+
\mathcal S^{(j),s}_{r_1,r_2;\o_1,\o_2}(x_1,x_2),\quad j=1,2,$$
with obvious notation. By using the fact that $\l_h,z_h,A_h,B_{h,r,\o},C_{h,r,\o}$
all converge exponentially fast to their limiting values as $h\to-\infty$, we find that 
$\mathcal S^{(j),s}_{r_1,r_2\;\o_1,\o_2}(x_1,x_2)$ can 
be bounded in the same way as \eqref{eq:rem100} and can be absorbed in the error term $R_{r_1,r_2}(x,y)$ in \eqref{hh110}.

We are left with the dominant parts, $\mathcal S^{(j),d}_{r_1,r_2;\o_1,\o_2}$, $j=1,2$, and to prove \eqref{hh110} we need to connect them to the correlation functions of the continuum model  of Section \ref{sec:IR}. {As discussed in Sections \ref{remmm} and \ref{bareZ}, we fix}
the coupling constant $\l_\infty$ {and the bare wave function renormalization $Z$} of the continuum model in such a way that 
$\l_{-\infty;R}(\l_\infty)=\l_{-\infty}(\l)$, so that the critical exponents 
$\h(\l),\h_1(\l)$ are the same as for the dimer model, {and that the flow of the wave function renormalization of the reference model satisfies 
\be Z_{h;R}=A_{-\infty} 2^{-\h h}(1+O(\l^2\,2^{\th h})),\label{eq:6.84bis}\ee
with the {\it same} $A_{-\infty}$ as in \eqref{eq:6.84}. That is, $Z_{h;R}=Z_h(1+O(\l^2 2^{\th h})$.}
The form of the special endpoints is different for the dimer and the continuum model, simply because  the external fields $J$ of the continuum model have the structure $J^{(j)}_{x,\o}$ instead of $J_{x,r}$.
In fact, when the multi-scale construction is applied to the continuum model, the value of a special endpoint of type $j=1$ (density) or $j=2$ (mass)  is of the form (to be compared with
\eqref{YZ})
\[
\frac{Y^{(j)}_{h;R}}{Z_{h-1;R}}F^R_{Y,j,\o}(\sqrt{Z_{h-1;R}}\psi^{(\le h)}), \quad F^R_{Y,j,\o}(\psi)=\int_\L dx J^{(j)}_{x,\o}\rho^{(j)}_{x,\o}, 
\]
with $\rho^{(j)}_{x,\o}$ as in \eqref{rhodef}. As $h\to-\infty$ one has, in analogy with \eqref{eq:6.86},
\bea Y^{(1)}_{h;R}&=&\tilde B_{-\infty}2^{-\h h} (1+O(\l\,2^{\th h})),\label{eq:6.86bis}\\
Y^{(2)}_{h;R}&=&\tilde C_{-\infty}2^{-\h_1 h}(1+O(\l\,2^{\th h})),\nonumber\eea
for suitable analytic functions  $\tilde B_{-\infty}$, $\tilde C_{-\infty}$ of $\lambda$, which equal $1$ for $\l=0$
.
Now call $\mathcal S^{(j),d;R}_{\o_1,\o_2}$ the analog of
$\mathcal S^{(j),d}_{r_1,r_2;\o_1,\o_2}$ for the continuum model. The two functions  differ only because the values associated with the special endpoints differ: in the dimer model, 
these are given as in \eqref{eq:ang1} (if $j=1$) or \eqref{eq:ang2} (if $j=2$); in the reference model, one needs to replace
$B_{-\infty,r,\o}\to \tilde B_{-\infty}$ and $C_{-\infty,r,\o}\to \tilde C_{-\infty}$. In conclusion, 
\bea && \mathcal S^{(1),d}_{r_1,r_2;\o,\o}(x_1,x_2)=\hat K_{\o,r_1}\hat K_{\o,r_2}\mathcal S^{(1),d;R}_{\o,\o}(x_1,x_2),\\
&& \mathcal S^{(2),d}_{r_1,r_2;\o,-\o}(x_1,x_2)=e^{i(\bar p^\o-\bar p^{-\o})(x_1-x_2)}\hat H_{-\o,r_1}\hat H_{\o,r_2}\mathcal S^{(2),d;R}_{\o,-\o}(x_1,x_2),\nonumber\eea
with 
$$\hat K_{\o,r}={\frac{B_{-\infty,r,\o}}{\tilde B_{-\infty}}}, \quad \hat H_{\o,r}={\frac{C_{-\infty,r,-\o}}{\tilde C_{-\infty}}}.$$
The oscillating prefactor $e^{i(\bar p^\o-\bar p^{-\o})(x_1-x_2)}$
appears because it is included in the  definition of $F_{Y;r,\ul \o}$. Finally, $\mathcal S^{(j),d;R}_{\o,\o'}(x_1,x_2)$ equals
$S^{(j,j)}_{R,\o,\o'}(x_1,x_2)$ (cf. \eqref{eq:5.6op}, \eqref{eq:5.6}), up to subdominant
corrections that can be absorbed in the error term $R_{r_1,r_2}(x_1,x_2)$ and we obtain \eqref{hh110}, as wished.
 A similar discussion leads to \eqref{h10ab},
\eqref{h10a}, and we leave the details to the reader. This concludes the proofs of Theorems \ref{th:1} and \ref{th:2}. 

\appendix
\section{Symmetries}\label{symm}



The propagator $g^{(h)}_{R,\o}(x,y)$ in \eqref{gRo} satisfies three symmetries, which are the real-space counterparts of the following:
\bea && \bar D_{-\o}(k)= -\bar D^*_\o(k),\nonumber\\
&& \bar D_\o(A^{-1}\s_1 Ak)=i\o \bar D^*_\o(k),\label{eq:nclfp}\\
&& \bar D_\o(A^{-1}\s_3 Ak)=\bar D^*_\o(k).\nonumber\eea
This means that the quadratic action associated to
the Grassmann integration with propagator $g^{(h)}_{R}$ is invariant under three transformations: for instance, the one associated to the first of \eqref{eq:nclfp} is $\hat \f^\pm_{k,\o}\mapsto i \hat \f^\pm_{k,-\o}$ and at the same time any constant appearing in the action is replaced by its complex conjugate. Then, one sees inductively that the effective potentials one obtains by setting  $\{\nu_{h',\o},a_{h',\o},b_{h',\o}\}_{h<h'\le -1}$ to zero (as is done in the definition of relativistic kernels) satisfy the same three symmetries. As a consequence,  $\hat W^{(h),R}_{2,0;(\o,\o)}(k)$ inherits the symmetries analogous to \eqref{eq:nclfp}, that are \eqref{eq:symm}. Note that biliner terms such as $\nu_{h',\o}F_{\nu;\o}(\f),a_{h',\o}F_{a;\o}(\f),b_{h',\o}F_{b;\o}(\f)$ in \eqref{lcwhV} would  break the above-mentioned symmetries, unless the coefficients $\nu_{h',\o},$ etc. are zero. On the other hand, terms such as $\l_{h'} F_{\l}(\f)$ are  invariant because we know by induction that  $\l_h',h'>h$ is real.

Next, let us show that \eqref{eq:symm} implies \eqref{eq:symm1}. Write for lightness of notation $\z_\o:=\bar D_\o(k),\z_+^*=-\z_-$. 
Since $k\cdot\partial_k\hat W^{(h),R}_{2;(+,+)}(0)$ is linear in $k$, we can write it as $f(\z_+,\z_-)= c \z_+ +c'\z_-$ for some complex constants $c,c'$. Note that the transformation $k\mapsto A^{-1}\sigma_3 A k$ (resp. $k\mapsto A^{-1}\sigma_1 A k$) implies $(\z_+,\z_-)\mapsto (-i \z_-,i \z_+)$ (resp. $(\z_+,\z_-)\mapsto (- \z_-,-\z_+)$).
By linearizing the second and third equation in \eqref{eq:symm} we get:
\be f(-i\z_-,i\z_+)=i[f(\z_+,\z_-)]^*, \qquad f(-\z_-,-\z_+)=[f(\z_+,\z_-)]^*, \ee
which readily imply that $c'=0,c\in\mathbb R$. This is the desired formula for $\o=+$. By using the first of \eqref{eq:symm}, we get the desired formula for $\o=-$. 

\section{Finite size corrections and bounds on   $\mathcal R V^{(h)}$} \label{dimr}

The bounds on the kernels of the effective potential arising in the multi-scale procedure, such as Proposition \ref{prop:an}, as well as the
reason why the action of $\mathcal R$, responsible for the factors $2^{-z(P_v)}$ in \eqref{5.62},
makes  the renormalized perturbation theory convergent,
have been discussed 
several times in the literature in similar models, see e.g. \cite[Section 6.1.4]{GMT17a}. 
In particular, finite-size details  have  been discussed in \cite{BM-XYZ}, but the definition of the $\mathcal L,\mathcal R$ operators 
given there is different from the one proposed in this paper: in \cite{BM-XYZ} the action of $\mathcal L$ on the kernels of $V^{(h)}$ explicitly depends on the size $L$ of the box, see \cite[eq.(2.74)]{BM-XYZ}, 
while in the present case it only depends on the $L\to\infty$ limit of the kernels, see \eqref{eq:6.21}. This new definition simplifies some technical aspects of the multi-scale construction: for instance, the flow of the running coupling constants is   independent of $L$ in the present work.
The goal of this appendix is to discuss the modifications induced by the new definition of $\mathcal L,\mathcal R$ on the proof  of the bounds on the kernels of 
$\mathcal R V^{(h)}$. Familiarity with  
\cite[Sec. 6]{GMT17a} is assumed.

For illustrative purposes, we restrict our attention to the part of $\mathcal R V^{(h)}$, denoted $\mathcal R V_4^{(h)}$, that is quartic in the Grassmann fields, has no derivative terms $\hat\partial \f$, and is independent of $J$. 
A similar discussion applies to the terms quadratic in the Grassmann fields (either independent of or linear in $J$), 
but we shall not belabor the details here. For the quartic term (using the same notation for the kernel as in \eqref{eq:6.16ren}), 
we have from \eqref{eq:6.21}:
\begin{eqnarray} && 
\mathcal R V_4^{(h)}(\f)=\sum_{\substack{x_1,\ldots,x_4\in\L\\ \o_1,\ldots,\o_4}} \f^+_{x_1,\o_1}\f^-_{x_2,\o_2}\f^+_{x_3,\o_3}\f^-_{x_4,\o_4} \Big[W^{(h)}_{4,0,\ul0;\ul\o}(x_1,x_2,x_3,x_4)\nonumber\\
&&-{\bf 1}_{x_1=x_2=x_3=x_4}\sum_{x_2',x_3',x_4'\in\mathbb Z^2}W^{(h),\infty}_{4,0,\ul0;\ul\o}(x_1,x_2',x_3',x_4')\Big].\label{R4V}\end{eqnarray}
The kernel $W^{(h)}_{4,0,\ul0;\ul\o}$ is given by a tree expansion: 
\be W^{(h)}_{4,0,\ul0;\ul\o}(x_1,x_2,x_3,x_4)=\sum_{N\ge 1}\sum_{\t\in \mathcal T^{(h)}_{N,0}}\sum_{\substack{
{\bf P}\in\PP_\t \\ \xx_{v_0}}}^*\sum_{T\in{\bf T}}W_{\t,{\bf P},{T}}(\xx_{v_0}), \label{RV4tree}\ee
where the $*$ indicates the constraint that the field and coordinate labels associated with the external fields must match with the prescribed values of $\o_1,\ldots,\o_4$, $x_1,\ldots, x_4$.
Among the various contributions to the right side of \eqref{RV4tree}, there are those from trees such that $|P_v|>4$,  for which 
the action of $\mathcal R$ on all its vertices $v>v_0$ (i.e. for all vertices of $v$ that are descendants of $v_0$ along the tree), is trivial; we let $\bar {\mathcal T}^{(h)}_{N,0}$ be the family of these trees, and 
$\bar W^{(h)}_{4,0,\ul0;\ul\o}$ be the analogue of \eqref{RV4tree}, with the sum over $\t$ in the right side restricted to $\bar {\mathcal T}^{(h)}_{N,0}$; in terms of these
kernels, we let 
\begin{eqnarray} && 
\mathcal R\bar V_4^{(h)}(\f)=\sum_{\substack{x_1,\ldots,x_4\in\L\\ \o_1,\ldots,\o_4}}\Big[ \f^+_{x_1,\o_1}\f^-_{x_2,\o_2}\f^+_{x_3,\o_3}\f^-_{x_4,\o_4} \bar W^{(h)}_{4,0,\ul0;\ul\o}(x_1,x_2,x_3,x_4)\nonumber\\
&&-\d_{x_1,x_2}\d_{x_1,x_3}\d_{x_1,x_4}\sum_{x_2',x_3',x_4'\in\mathbb Z^2}\bar W^{(h),\infty}_{4,0,\ul0;\ul\o}(x_1,x_2',x_3',x_4')\Big].\label{R4barV}\end{eqnarray}
In this appendix, we limit our discussion to $\mathcal R \bar V_4^{(h)}(\psi)$, the `complementary term', $\mathcal R (V_4^{(h)}(\psi)- \bar V_4^{(h)}(\psi))$, 
being treatable similarly\footnote{The minor technical complication arising in the `complementary' case is that, if we restrict our attention to one of the trees contributing to $\mathcal R (V_4^{(h)}(\psi)- \bar V_4^{(h)}(\psi))$, the action of $\mathcal R$ on 
  a vertex $v_1$, if non trivial, can interfere with the one on a vertex $v<v_1$ preceding it. Such interference does not cause any conceptual extra difficulty, but it complicates the explicit form of
  the corresponding tree values  must be expressed in an inductive form, rather than by a formula as explicit as \eqref{g5.50s}. 
  For a discussion of these issues, see e.g. \cite[Sections 3.3 and 3.4]{BM-XYZ}.}. A convenient fact is that, if $\t\in \bar{\mathcal T}^{(h)}_{N,0}$, then 
$W_{\t,{\bf P},{T}}(\xx_{v_0})$ has the following explicit expression: 
\bea && W_{\t,{\bf P},T}(\xx_{v_0})=\Big[\prod_{v\ {\rm not}\ {\rm e.p.}}(1+z_{h_v})^{-|P_v^\psi|/2}\Big]
\Big[\prod_{v\ {\rm e.p.}} K^{(h_v)}_{v}(\xx_{v})\Big]\times\nonumber\\
&&\ \times
\Big\{\prod_{v\ {\rm not}\ {\rm e.p.}}\frac1{s_v!} \int
dP_{T_v}({\bf t}_v) \det (M^{h_v,T_v}({\bf t}_v))
\Big[\prod_{\ell\in T_v} g^{(h_v)}_\ell\Big]\Big\}\;,\label{g5.50s}\eea
where the notations are analogous to \cite[eq.(6.63)]{GMT17a}, to which we refer  for details (in particular, $M^{h_v,T_v}({\bf t}_v)$ is a matrix whose elements are propagators on scale $h_v$, like the one defined in 
\cite[Lemma 3]{GMT17a}). The infinite volume limit of $\bar W^{(h)}_{4,0,\ul0;\ul\o}$, denoted by $\bar W^{(h),\infty}_{4,0,\ul0;\ul\o}$, admits the same explicit expression as $\bar W^{(h)}_{4,0,\ul0;\ul\o}$, modulo the following 
changes: the sum over the coordinates in ${\bf x}_{v_0}$ in (the analogue of) \eqref{RV4tree} runs over $\mathbb Z^2$, rather than over $\L$; 
all the propagators appearing in \eqref{g5.50s} (both those in the elements of $M^{h_v,T_v}$ and those in the last product) should be replaced by their infinite volume limits. 

We recall that, if $\t\in \bar {\mathcal T}^{(h)}_{N,0}$ and the RCC satisfy \eqref{ggg0}, by using \eqref{g5.50s}, the Gram-Hadamard bound on $\det (M^{h_v,T_v}({\bf t}_v))$ (see \cite[Eq.(6.60)]{GMT17a}) and  the dimensional bound \eqref{eq:bbpprrh} on the propagators, we find
\be \|W_{\t,{\bf P},T}\|_{\kappa,h}\le (C\e)^{\max\{N,c|I_{v_0}^\psi|\}}\prod_{\substack{v\,{\rm not}\\ {\rm e.p.}}}\hskip-.02truecm\frac{C^{\sum_{i=1}^{s_v}|P_{v_i}|-|P_v|}}{s_v!} 2^{2-\frac{1-\e}2|P_v^\psi|}\;,
\label{goodb}\ee
which is the analogue of \eqref{5.62} (the factors $z(P_v)$ are absent because  $\mathcal R$ acts on none of the   the vertices  $v>v_0$; this bound on ``non-renormalized trees'' has been discussed in several 
previous papers, see e.g. \cite[Section 6]{GMreview}). 
After summation over $\t,{\bf P},T$, this leads to the bound $\|\bar W^{(h)}_{4,0,\ul0;\ul\o}\|_{\kappa,h}\le C\e$, uniformly in $L$: in particular, the bound applies 
to the kernel of $\mathcal R\bar V_4^{(h)}$, simply because it applies separately to $\bar W^{(h)}_{4,0,\ul0;\ul\o}$ and to $\bar W^{(h),\infty}_{4,0,\ul0;\ul\o}$.

On the other hand, by suitably taking into account cancellations between the two terms in the right side of \eqref{R4barV}, one can find an improved bound on
$\mathcal R\bar V_4^{(h)}$, which we now discuss. 

\medskip

{\bf 1.} Let us first consider the terms in the right side of \eqref{R4barV} such that either the argument of $\bar W^{(h)}_{4,0,\ul0;\ul\o}$, $(x_1,x_2,x_3,x_4)$, or the argument of 
$\bar W^{(h),\infty}_{4,0,\ul0;\ul\o}$, $(x_1,x_2',$ $x_3',x_4')$, have tree-distance (i.e. length of the shortest tree on $\L$ including the four points) larger than $L/4$ (this is the first `finite size correction' that we intend to discuss in this appendix). 
Recall that each of the trees contributing to these kernels comes with a 
a product of propagators `along the spanning tree', see the factor $\prod_{v\,not\,e.p.}\prod_{\ell\in T_v} g^{(h_v)}_\ell$ in the right side of \eqref{g5.50s}. Therefore, 
by using the stretched exponential decay of the propagators 
in \eqref{eq:bbpprrh}, we find that each of these contributions can be bounded by the right side of \eqref{goodb} times an additional, exponentially small, 
factor $e^{-(\kappa/4)\sqrt{2^hL}}=e^{-(const.)2^{(h-h_L)/2}}$. {This factor is smaller than any power of the inverse of the exponent, in particular, smaller than 
(const.)$2^{h_L-h}$, which is enough to renormalize the quartic kernels under consideration.}

\medskip

{\bf 2.} After having estimated the terms in the previous item, we are left with the terms with tree distance smaller than $L/4$, which can be rewritten as 
\begin{eqnarray} &&
\sum_{\substack{x_1\in\L\\ \o_1,\ldots,\o_4}}\ \sum_{\substack{x_2,x_3,x_4:\\ d(x_1,\ldots,x_4)<L/4}}\hskip-.3truecm
\Big[\f^+_{x_1,\o_1}\f^-_{x_2,\o_2}\f^+_{x_3,\o_3}\f^-_{x_4,\o_4}\bar W^{(h)}_{4,0,\ul0;\ul\o}(x_1,x_2,x_3,x_4)\nonumber\\
&&\qquad-
\f^+_{x_1,\o_1}\f^-_{x_1,\o_2}\f^+_{x_1,\o_3}\f^-_{x_1,\o_4}\bar W^{(h),\infty}_{4,0,\ul0;\ul\o}(x_1,x_2,x_3,x_4)\Big].\label{R4Va}\end{eqnarray}
In the first line we rewrite 
\be \label{barw}\bar W^{(h)}_{4,0,\ul0;\ul\o}(x_1,x_2,x_3,x_4)= \bar w^{(h)}_{4,0,\ul0;\ul\o}(x_1,x_2,x_3,x_4)+\bar W^{(h),\infty}_{4,0,\ul0;\ul\o}(x_1,x_2,x_3,x_4),\ee
so that 
\begin{eqnarray} && \hskip-1.truecm\eqref{R4Va}=
\sum_{\substack{x_1\in\L\\ \o_1,\ldots,\o_4}}\ \sum_{\substack{x_2,x_3,x_4:\\ d(x_1,\ldots,x_4)<L/4}}\hskip-.3truecm
\Big[\f^+_{x_1,\o_1}\f^-_{x_2,\o_2}\f^+_{x_3,\o_3}\f^-_{x_4,\o_4}\bar w^{(h)}_{4,0;\ul\o}(x_1,x_2,x_3,x_4)\nonumber\\
&&\qquad +\big(\prod_{i=1}^4\f^{\e_i}_{x_i,\o_i}-\prod_{i=1}^4\f^{\e_i}_{x_1,\o_i}\big)
\bar W^{(h),\infty}_{4,0,\ul0;\ul\o}(x_1,x_2,x_3,x_4)\Big].\label{R4Vb}\end{eqnarray}
In the two products in the second line, $\e_i:=(-1)^{i-1}$; notice that the Grassmann variables in the first product are computed at $x_i$, while in the second product they are computed at $x_1$.

The term in the second line is the `usual', infinite volume, renormalized term, which can be treated as discussed in, e.g., \cite[Section 6.1.4]{GMT17a}; we refer to that section for a discussion of why these terms have the `usual' dimensional gains leading to the factors $2^{-z(P_v)}$ in \eqref{5.62}. 
The term in the first line is, instead, 
the second `finite size correction' that we intend to discuss in this appendix. By using the representation of 
$\bar W^{(h)}_{4,0,\ul0;\ul\o}$ in terms of a tree expansion, we find that $\bar w^{(h)}_{4,0,\ul0;\ul\o}$ itself can be written as a sum over trees. Each tree comes 
with a difference between a sum over $\xx_{v_0}$ (within $\L$) of the tree value in \eqref{g5.50s} and a sum over $\xx_{v_0}$ (extended to the whole $\mathbb Z^2$) of the infinite volume limit of 
\eqref{g5.50s}. We further split this difference in two parts: the first corresponds to the case where both the sums over $\xx_{v_0}$ involve at least one coordinate at a distance larger than $L/3$ from 
$(x_1,x_2,x_3,x_4)$; by proceeding as in item {\bf 1}, we find that this first part has a bound that is better than \eqref{goodb} by a factor  $e^{-(const.)\sqrt{2^hL}}$, as desired. 
The second part corresponds to the case where we sum the difference between the tree value in \eqref{g5.50s} and its infinite volume counterpart over 
coordinates $\xx_{v_0}$ that are all closer than $L/3$ to $(x_1,x_2,x_3,x_4)$. 
In the finite volume expression of the tree value, \eqref{g5.50s}, we replace every finite volume propagator $g^{(h)}_\o(x,y)$ appearing either in the matrices $M^{h_v,T_v}({\bf t}_v)$ 
or in the products over spanning trees $\prod_{\ell\in T_v} g^{(h_v)}_\ell$ by the following infinite linear combination of infinite volume propagators, namely  (``Poisson summation formula'', see e.g. \cite[Eq. (A.8)]{GMT17a}):
$$g^{(h)}_\o(x,y)=\sum_{n\in\mathbb Z^2}(-1)^{n\cdot \bt}g^{(h),\infty}_\o(x+nL,y)\equiv g^{(h),\infty}_\o(x,y)+\d g^{(h)}_\o(x, y)$$
where $g^{(h)}_\o$ is as in \eqref{eq:ghjy6}, while $g^{(h),\infty}_\o$ is the same
expression where $1/L^2$ times the sum over $k\in\mathcal P'_\o(\bt)$ is replaced by
$(2\pi)^{-2}\int_{[-\pi,\pi]^2}dk$.
By using this decomposition, the difference between the tree value in \eqref{g5.50s} and its infinite volume counterpart can be re-expressed as a sum of terms, each of which involves at least one 
`remainder propagator' $\d g^{(h)}_\o(x,y)$. Note that, by construction, any pair of sites $x,y$ involved in the expression under consideration is closer than $L/3$: therefore, using \eqref{eq:bbpprrh}, 
\be \label{eq:bbppddh}|\d g^{(h)}_\o(x,y)|\le C 2^h e^{-\kappa \sqrt{2^h L}}. \ee
Putting things together, we find that also this second part has a bound that is better than \eqref{goodb} by a factor  $e^{-(const.)\sqrt{2^hL}}$, as desired. 

\section{Finite size corrections to the partition function}
\label{app:grigliabis} 
In this section, we prove \eqref{yaa}, which is equivalent to the fact that
\be  E^{(h_L)}-E^{(0)}=\D(\l)+L^{-2}\log(1+s_\bt(\l))-2L^{-2}\log Z_{h_L},\label{eq:b1}\ee
with $\D(\l)$ independent of $L,\bt$ and such that $|\D(\l)|\le C|\l|$, $|s_\bt(\l)|\le C|\l|$ uniformly in $L,\bt$ and $Z_{h_L}$ as in \eqref{eq:6.84}, and that \eqref{eq:692} holds, 
with $|u_{\bt,\o}(\l)|, |v_\bt(\l)|\le C|\l|$, uniformly in $L,\bt$. The analogous estimates on the generating function, stated after \eqref{yaaA}, can be derived in a similar way, and are left to the reader.

We start by proving \eqref{eq:692}. One starts from the general  representation  of the effective potential, i.e. \eqref{eq:6.16mo} with the index $(-1)$ replaced by $(h_L)$ and $J\equiv 0$, so that $m=0$. On the other hand, the field $\Psi$ contains only the four modes $\hat\Psi^\pm_{\o}$, so that the sum is limited to $n=2,4$. Moreover, due to the Krokecker delta $\d_{\ul \o}(\ul k, 0)$,
$V^{(h_L)}$ reduces to the simple form
\be V^{(h_L)}(\Psi,0)= L^{-2}\sum_\o \tilde u_{2}\hat\Psi^+_{\o}\hat\Psi^-_{\o}+L^{-6}\tilde u_4\hat \Psi^+_{+}\hat \Psi^-_{+}\hat \Psi^+_{-}\hat \Psi^-_{-},\label{eq:692bis}\ee
for some  constants $u_2,u_4$ depending on $\l,L,\bt$. Using the dimensional estimates (see \eqref{bouW}), it is easy to deduce that     \be | u_n|\le C^n|\l| 2^{h_L(2-n/2)} \ee uniformly in $\bt$,
which implies the desired estimates on $u_{\bt,\o},v_\bt$, because $h_L\sim -\log_2 L$.
                                                           


Let us now prove \eqref{eq:b1}. From the multiscale computation of the effective potential, it follows that 
\be E^{(h_L)}-E^{(0)}=\sum_{h_L< h<0}(t_h+\tilde E_h), \label{eq:b6}\ee
where $t_h$ was defined in \eqref{eq:th}, and $\tilde E_h$ is the sum of the vacuum diagrams with smallest scale label equal to $h$, namely 
\be \tilde E_h=L^{-2}\sum_{n\ge 1}\frac1{n!} 
\mathcal E^T_h(\underbrace{\widehat V^{(h)}(\sqrt{Z_{h-1}}\psi',0);\cdots;
\widehat V^{(h)}(\sqrt{Z_{h-1}}\psi',0)}_{n\ {\rm times}},\label{eq:b7}\ee
which can be represented as a sum over trees, see \eqref{eq:6.63e}-\eqref{f5.49ab}. Let us start by discussing the contribution from $t_h$; using the definition \eqref{eq:th}, we rewrite
\bea t_h&=&L^{-2}\sum_\o\sum_{k\in\mathcal P_\o(\bt)}\log\Big(1+\frac{z_h\bar \chi_h(k)\bar D_\o(k)}{\bar D_\o(k)+r_\o(k)/Z_h}\Big)\label{eq:thappb}\\
&-&L^{-2}\sum_\o\log\Big(1+\frac{z_h\bar \chi_h(k^\o_\bt-\bar p^\o)\bar D_\o(k^\o_\bt-\bar p^\o)}{\bar D_\o(k^\o_\bt-\bar p^\o)+r_\o(k^\o_\bt-\bar p^\o)/Z_h}\Big).\nonumber\eea
Using Poisson summation formula (see e.g. \cite[Eq. (A.8)]{GMT17a}), the first sum in the right side can be rewritten as 
\be \sum_\o\sum_{m\in \mathbb Z^2}(-1)^{\bt\cdot m}\int_{\mathbb R^2}\frac{dk}{(2\pi)^2}\log\Big(1+\frac{z_h\bar \chi_h(k)\bar D_\o(k)}{\bar D_\o(k)+r_\o(k)/Z_h}\Big) e^{iL(k+\bar p^\o)\cdot m}.\ee
The term with $m=0$, which we denote by $t_{0,h}$, is $L,\bt$ independent and satisfies 
\be \label{app:td} |t_{0,h}|\le C|\l|2^{2h}.\ee
To see this, observe that the area of the support of $\bar\chi_h$ is $O(2^{2h})$ and 
recall that  $r_\o(k)=O(k^2)$, that $|z_h|\le C|\l|$ uniformly in $h$ and that $Z_h=O(2^{-\h h})$ (see \eqref{eq:6.84}), with $\eta(\l)$ that tends to zero for $\l\to0$.
The sum of the terms with $m\neq 0$, which we denote by $t_{1,h}$, is bounded from above as
\be \label{app:ts} |t_{1,h}|\le C|\l|2^{2h}e^{-c\sqrt{L2^h}},\ee
the stretched-exponential decay coming from the fact that  the integrand  is a function in the Gevrey class of order $2$, by assumption on $\bar \chi_h$.
Finally, recalling that $\bar\chi_h(k^\o_\bt-\bar p^\o)=1$ for all $h>h_L$ and that  $1+z_h=Z_{h-1}/Z_h$, 
we find that, if $h>h_L$, the sum in the second line of \eqref{eq:thappb} can be rewritten as
\be  \label{app:ts1} -2L^{-2}\log(Z_{h-1}/Z_h)+t_{2,h},\quad |t_{2,h}|\le CL^{-2}|\l|2^{h(1-|\eta|)}.\ee  
Putting things together we write:
\be \sum_{h_L<h<0}t_h=-2L^{-2}\log Z_{h_L}+\sum_{h<0}t_{0,h}+\Big[\sum_{h_L< h<0} (t_{1,h}+t_{2,h})-\sum_{h\le h_L}t_{0,h}\Big].\ee
The second term in the right side contributes to $\D(\l)$: it is $L,\bt$ independent and, thanks to \eqref{app:td}, it is bounded by $C|\l|$. The term in brackets 
contributes to $L^{-2}\log(1+s_\bt(\l))$: thanks to \eqref{app:td} and \eqref{app:ts1}, it is bounded by $CL^{-2}|\l|$, as we wanted. 

\medskip

We are left with the sum over scales of $\tilde E_h$, see \eqref{eq:b6}-\eqref{eq:b7}. As mentioned after \eqref{eq:b7}, $\tilde E_h$ can be written as a sum over trees,
\be \tilde E_h= \sum_{N\ge 1}\sum_{\t\in \mathcal T_{N,0}^{(h)}} E(\t),\ee
where $E(\t)$, $\t\in\mathcal T^{(h)}_{N,0}$, is bounded as in \eqref{5.62}, with $|P_{v_0}^\psi|=|P_{v_0}^J|=|{\bf q}|=0$, namely 
\be |E(\t)|\le (C|\l|)^{\max\{1,cN\}}2^{2h}\prod_{\substack{v\,{\rm not}\\ {\rm e.p.}}}\hskip-.02truecm\frac{C^{\sum_{i=1}^{s_v}|P_{v_i}|-|P_v|}}{s_v!} \;2^{C|\l||P^\psi_v|}2^{2-\frac12|P_v^\psi|-|P_v^J|-z(P_v)}.
\label{app:bound}\ee
We now rewrite $\tilde E_h$ as a sum of two terms: the first,  which we denote by $\tilde E_{0,h}$,  is the sum over trees of the thermodynamic limit of the tree values (where sums over lattice points in $\L$ are replaced by sums on $\mathbb Z^2$ and single-scale propagators $g^{(h')}_\o$ are replaced by their infinite-volume counterparts $g^{(h'),\infty}$). The second is the finite-size remainder,
which we denote by $\tilde E_{1,h}$. By construction, $\tilde E_{0,h}$ is $L,\bt$ independent, and it is bounded by the sum over trees of the right side of \eqref{app:bound}, which gives
\be |\tilde E_{0,h}|\le C|\l|2^{2h}\label{eq:e0}.\ee
The finite size remainder admits an improved dimensional bound of the form 
\be |\tilde E_{1,h}|\le C|\l|2^{2h}e^{-c\sqrt{L2^h}},\label{eq:e1}\ee
which can be proved via discussion analogous to the one after \eqref{R4Vb} on the bound on the finite size contribution to the local quartic kernel $w_{4,0;\ul\o}^{(h)}=W^{(h)}_{4,0,\ul0;\ul\o}-W^{(h),\infty}_{4,0,\ul0;\ul\o}$; details are left to the reader.
By using the decomposition $\tilde E_h=\tilde E_{0,h}+\tilde E_{1,h}$, we rewrite
\be \sum_{h_L\le h<0}\tilde E_h=\sum_{h<0}\tilde E_{0,h}+\Big[\sum_{h_L\le h<0}\tilde E_{1,h}-\sum_{h<h_L} E_{0,h}\Big].\ee
The first term in the right side contributes to $\D(\l)$: it is $L,\bt$ independent and, thanks to \eqref{eq:e0}, it is bounded by $C|\l|$. The term in brackets 
contributes to $L^{-2}\log(1+s_\bt(\l))$: thanks to \eqref{eq:e1}, it is bounded by $CL^{-2}|\l|$, as desired. This concludes the proof of \eqref{eq:b1}, with the desired bounds on $\D(\l)$, $s_\bt(\l)$. 

  \section{Two technical results on the non-interacting model}
\subsection{Proof of \eqref{griglia}}
\label{app:griglia}

It is sufficient to prove the claim when
$\bt-\bt'$ equals either $(1,0)$ or $(0,1)$ and,
for definiteness, assume we are in the former case. Also, without loss
of generality, assume that
$|k^\pm_{\bt}- \bar p^\pm|\le |k^\pm_{\bt'}-
\bar p^\pm|$. From the definition \eqref{eq:Dtt} of $\mathcal P(\bt)$ we see that
\begin{eqnarray}
  \label{eq:fatica2}
\frac\pi {2L}\le |k^\pm_{\bt'}-\bar p^\pm|\le \frac{\sqrt 2\pi}L ,  
\end{eqnarray}
while $|k^\pm_{\bt}- \bar p^\pm|$ can be much smaller, possibly zero.
Write 
\begin{eqnarray}
\label{ftb}
  \frac{\tilde Z^0_\bt}{\tilde Z^0_{\bt'}}= \frac{\mu_0(k^+_{\bt'})\mu_0(k^-_{\bt'})}{\mu_0(k^+_{\bt})\mu_0(k^-_{\bt})} 
  e^{
\sum_{k\in \mathcal P({\bt})}\left(\log \mu_0(k)-\frac12\log \mu_0(k^{\leftarrow})-\frac12\log \mu_0(k^{\rightarrow})\right)}
\end{eqnarray}
with $k^\rightarrow=k+(\pi/L,0)\in \mathcal P(\bt')$ and
$k^\leftarrow=k-(\pi/L,0)\in \mathcal P(\bt')$. Decompose
$\mathcal P(\bt)$ as the disjoint union $A\cup B$, with $A$
containing the values of $k$ at distance at most, say, $10/L$ from either
$ \bar p^+$ or $ \bar p^-$, and $B$ all the others. The cardinality of $A$ is uniformly
bounded as a function of $L$.

Note that for all $k\in A$, $|\mu_0(k^{\leftarrow})|$ and
$|\mu_0(k^{\rightarrow})|$ are upper and lower bounded by positive
constants times $1/L$, because $\mu_0$ vanishes linearly at $\bar p^\pm$ and
the values of $k^{\rightarrow},k^{\leftarrow}$ are at distance of order
$1/L$ from $\bar p^\pm$ (cf. \eqref{eq:fatica2}).  The same holds for
$|\mu_0(k)|, k\in A$, except possibly for $k=k^\pm_{\bt}$.
One has then
\begin{equation}
  c_1\le \left|\frac{\mu_0(k^+_{\bt'})\mu_0(k^-_{\bt'})}{\mu_0(k^+_{\bt})\mu_0(k^-_{\bt})} 
    e^{
\sum_{k\in A}\left(\log \mu_0(k)-\frac12\log \mu_0(k^{\leftarrow})-\frac12\log \mu_0(k^{\rightarrow})\right)}
\right|\le c_2.
\end{equation}
It remains to prove that the sum in \eqref{ftb}, with $k$ restricted to $B$, is upper and lower bounded (in absolute value) by $L$-independent positive constants.
Write
\begin{eqnarray}
  \log \mu_0(k)-\frac12\log \mu_0(k^{\leftarrow})-\frac12\log \mu_0(k^{\rightarrow})\\=-\frac{\pi^2}{L^2}\partial^2_{k_1}\log \mu_0(k)-\frac {\pi^3}{6 L^3}\partial^3_{k_1}\log \mu_0(k)|_{k=k'}
\end{eqnarray}
where $k'$ is a point
in the segment joining $k^\leftarrow$ and $k^\rightarrow$.  Since
$\mu_0(\cdot)$ vanishes linearly at $\bar p^\pm$,
  \[|\partial^3_{k_1}\log \mu_0(k')|=O((\min(|k-\bar p^+|,|k-\bar p^-|)^{-3}).\]
  Here it is important that $k\in B$, since this means that
  $\partial^3_{k_1}\log \mu_0(k')$, computed in the unknown point $k'$, can be safely
  replaced by the derivative computed at $k$.  Therefore,
\begin{eqnarray}
\frac1{L^3}  \sum_{k\in B}\partial^3_{k_1}\log \mu_0(k')=O(1).
\end{eqnarray}
The sum of the term involving $\partial^2_{k_1}\log \mu_0(k)$ requires more care since at first sight it diverges like $\log L$.
However, write
\begin{eqnarray}
\frac{\pi^2}{L^2}\partial^2_{k_1}\log \mu_0(k)=\frac14\int_{Q_k}\partial^2_{q_1}\log \mu_0(q)dq+O(L^{-3}|\partial^3_{k_1}\log \mu_0(k)|),
\end{eqnarray}
with $Q_k$ the square of side $2\pi/L$ centered at $k$. Therefore,
the sum in \eqref{ftb}, with $k$ restricted to $B$,
plus the integral
\begin{eqnarray}
  \label{eq:intdiv}
\frac14  \int_{[-\pi,\pi]^2\setminus(N^+\cup N^-)}dk\; \partial^2_{k_1}\log \mu_0(k),
\end{eqnarray}
 with $N^\pm$ the neighborhood of radius $10/L$ around $\bar p^\pm$, 
 is upper and lower bounded in absolute value by  positive constants.
 
 The integral \eqref{eq:intdiv} has a finite limit as
 $L\to\infty$. Indeed, since (cf. \eqref{eq:6.4}-\eqref{eq:6.5})
 $\mu_0( \bar p^\o+k')=\bar \alpha_\o k'_1+\bar \beta_\o k'_2+O(|k'|^2)$, the
 possibly singular part of the integral is proportional to
\begin{eqnarray}
  \int\frac{dk}{(\bar \alpha_\o k_1+\bar\beta_\o k_2)^2}{\bf 1}_{\{(10/L)\le |k|\le 1\}}.
\end{eqnarray}
 
We make the change of variables $q_1=\o (\bar \alpha^1 k_1+\bar \beta^1 k_2),q_2=(\bar \alpha^2 k_1+\bar \beta^2 k_2)$,
where $\bar \alpha^j,\bar \beta^j$
were defined in \eqref{eq:a1b1}. The Jacobian matrix $A_\o$ has  non-zero determinant (this is because, as observed in
Remark \ref{rem:platonico}, the ratio $\alpha_\omega/\beta_\omega$ is not real so that the same holds for
$\bar \alpha_\o/\bar\beta_\o$ if $\lambda$ is small enough). Then, the integral becomes
\begin{eqnarray}
  \det(A_\o)  \int \frac{dq}{(q_1+i q_2)^2}{\bf 1}_{\{(10/L)\le |A_\o q|\le 1\}}\\=\det(A_\o)  \int \frac{dq}{(q_1+i q_2)^2}{\bf 1}_{\{(10/L)\le | q|\le 1\}}
  +O(1)=O(1).
\end{eqnarray}
In the first equality we used the fact that the symmetric difference
between the balls of radius $10/L$  for $q$ and for $A_\o q$ has area of order
$L^{-2}$, while the integrand is $O(L^2)$ there; in the second step, we
noted that the integral is zero, using the symmetry
$(q_1,q_2)\leftrightarrow (q_2,-q_1)$.

\subsection{Proof of \eqref{eq:infradito}}
\label{app:infradito}
Recall that the values of $c_\bt$ are given in \eqref{eq:ctt}.
Further, note that if  $k\in \mathcal P(\bt)$, then also $(\pi,\pi)-k\in\mathcal P(\bt)$; if these two momenta are distinct, then they contribute $\mu_0(k)\mu_0((\pi,\pi)-k)=|\mu_0(k)|^2\ge0$ to the product $Z^0_\bt$. Here, we used the symmetry \eqref{eq:simmmu0}.
Also, unless
\begin{eqnarray}
  \label{eq:4k}
k=(\epsilon_1\pi/2,\epsilon_2\pi/2), \quad \epsilon_1=\pm1, \epsilon_2=\pm1,
\end{eqnarray}
  one has that $(\pi,\pi)-k\ne k\mod (2\pi,2\pi)$. To determine the sign
  of $Z^0_\bt$, it is therefore sufficient to determine whether the
  momenta \eqref{eq:4k} belong to $\mathcal P(\bt)$.  The four momenta \eqref{eq:4k} belong to
  $\mathcal P((0,0))$ if $L=0 \mod 4$ and to $\mathcal P((1,1))$ if
  $L=2\mod 4$.  Also, note that
\begin{multline}
  \label{eq:prod4}
\prod_{\epsilon_1=\pm1}\prod_{\epsilon_2=\pm1}  \mu_0(\epsilon_1\pi/2,\epsilon_2\pi/2)
=
\prod_{\epsilon_1=\pm1}\prod_{\epsilon_2=\pm1}  \mu(\epsilon_1\pi/2,\epsilon_2\pi/2)\\
  =(t_1-t_2+t_3+1)(t_1-t_2-t_3-1)(t_1+t_2-t_3+1)(t_1+t_2+t_3-1).
\end{multline}
To get the first equality, observe first that $p^\o$ cannot equal any
of the four momenta \eqref{eq:4k}, otherwise one would have
$p^+=p^-\mod (2\pi,2\pi)$, which is excluded by Assumption
\ref{ass:liquid} on the edge weights.  The same is true for
$\bar p^\o$ provided $\lambda$ sufficiently small, as
$\bar p^\o=p^\o+O(\lambda)$. Then, the first equality  in \eqref{eq:prod4} follows by
assuming that the support of the cut-off function $\bar \chi(\cdot)$
in \eqref{mu0} is sufficiently small (this can be guaranteed by choosing the constant $c_0$, that enters the definition of
$\bar\chi(\cdot)$, to be small enough). Finally, the last product in
\eqref{eq:prod4} is strictly negative, as follows from Remark \ref{rem:liquid}.  Wrapping up, one
has that
\begin{eqnarray}
  \sign(Z_\bt^0)=\left\{
  \begin{array}{lll}
    +1&\text{if}& \bt=(0,1) \text{ or } \bt=(1,0)\\
    (-1)^{{\bf 1}_{L=0\!\!\!\!\mod 4}} &\text{if}& \bt=(0,0) \\
    (-1)^{{\bf 1}_{L=0\!\!\!\!\mod 2}} &\text{if}& \bt=(1,1)
  \end{array}
                                                   \right..
\end{eqnarray}
In other words, $\sign(Z_\bt^0)= c_\bt$ and the claim follows.

\bigskip

{\bf Acknowledgements} We would like to thank Ron Peled and Jean-Marie St\'ephan for fruitful discussions on the 6-vertex model and the corresponding scaling relations. {We gratefully thank Rafael Greenblatt for carefully reading the manuscript and for several constructive suggestions on how to improve it.} 
This work has been supported by the European Research Council (ERC) under the European Union's Horizon 2020 research and innovation programme (ERC CoG UniCoSM, grant agreement n.724939). F.T.  was  partially  supported  by
the  CNRS  PICS  grant  151933, by ANR-15-CE40-0020-03 Grant LSD, 
ANR-18-CE40-0033 Grant DIMERS and  by Labex MiLyon (ANR-10-LABX-0070). This work was started during a long-term stay of A.G. at Univ. Lyon-1, co-funded by Amidex and CNRS, which are gratefully acknowledged.


\begin{thebibliography}{999999}

  
\bibitem{Ahn} A. Ahn: \emph{Global Universality of Macdonald Plane Partitions}, arXiv:1809.02698.
  
\bibitem{A} F. Alet, Y. Ikhlef, J. Jacobsen, G. Misguich, V. Pasquier:
  \emph{Classical dimers with aligning interactions on the square lattice},
Phys. Rev. E {\bf 74} (2006), 041124.

\bibitem{AL}  F. Alet, J. Jacobsen, G. Misguich, V. Pasquier, F. Mila, M. Troyer: {\it Interacting Classical Dimers on the Square Lattice}, Phys. Rev. Lett. {\bf 94}, 235702 (2005).

\bibitem{Ba} R. J. Baxter: Exactly solved models in statistical mechanics, Academic Press,
Inc. London (1989).

\bibitem{Ba1} R. J. Baxter: \emph{Partition function of the Eight-Vertex lattice model}, Ann. Phys. {\bf 70}  (1972), 193-228.

\bibitem{BFM1} 
G. Benfatto, P. Falco, V. Mastropietro: \emph{Extended Scaling Relations for Planar Lattice Models},  Comm.  
Math.  Phys. {\bf 292} (2009),  569-605.

\bibitem{BM04} G. Benfatto,  V. Mastropietro: \emph{Ward identities and vanishing of the Beta function for $d= 1$ interacting Fermi systems}, J. Stat. Phys. {\bf 115} (2004), 143-184.

{\bibitem{BFM09} G. Benfatto, P. Falco, V. Mastropietro: \emph{Massless Sine-Gordon and Massive Thirring Models:
Proof of ColemanÕs Equivalence}, Comm. Math. Phys. {\bf 285}, 713-762 (2009).}

\bibitem{BFMprl}
G. Benfatto, P. Falco, V. Mastropietro: \emph{Universal Relations for Nonsolvable Statistical Models}, Phys. Rev. Lett. {\bf 104} 
(2010), 075701.

\bibitem{BM02} G. Benfatto, V. Mastropietro: {\it On the density-density critical indices in interacting 
Fermi systems}, Comm. Math. Phys. {\bf 231}, 97-134 (2002).

\bibitem{BM-XYZ} G. Benfatto, V. Mastropietro: {\it Renormalization group, hidden symmetries and approximate Ward identities in the XYZ model}, Rev. Math. Phys. {\bf 13} (2001), 1323-1435.
  
\bibitem{BMdrude} G. Benfatto,  V. Mastropietro: \emph{Drude weight in non solvable quantum spin chains},  Journal of Statistical Physics {\bf 143} (2011), 251-260.

\bibitem{BMun} G. Benfatto, V. Mastropietro: {\it Universality relations in non-solvable quantum spin chains}, J. Stat. Phys. {\bf 138}, 1084-1108 (2010).

\bibitem{BGL} N. Berestycki, B. Laslier, G. Ray: \emph{Dimers and imaginary geometry}, arXiv:1603.09740.
  
\bibitem{CoSpo} F. Colomo, A. Sportiello: \emph{Arctic curves of the six-vertex model on generic domains: the Tangent Method}, J. Stat. Phys. {\bf 164} (2016), 1488-1523.

\bibitem{CoSpe} J. G. Conlon, T. Spencer: \emph{A strong central limit theorem for a class of random surfaces}, Commun. Math. Phys. {\bf  325} (2014), 1-15.
  
\bibitem{5vertex} J. de Gier, R. Kenyon, S. S. Watson: \emph{Limit shapes for the asymmetric five vertex model}, arXiv:1812.11934.
  
\bibitem{Du}  J. Dub\'edat: \emph{Exact bosonization of the Ising model}, arXiv:1112.4399.

\bibitem{EKLP} N. Elkies, G. Kuperberg, M. Larsen, and J. Propp: \emph{Alternating-sign matrices and domino
tilings}, J. Algebraic Combin. {\bf 1} (1992),  219-234.
  
\bibitem{F} P. Falco: {\it Arrow-arrow correlations for the six-vertex model}, Phys. Rev. E {\bf 88}, 030103(R) (2013).

\bibitem{GMreview}  G. Gentile, V. Mastropietro: {\it Renormalization group for one-dimensional fermions. A review on mathematical results},
Phys. Rep. {\bf 352}  (2001), 273-438.

\bibitem{GGM} A. Giuliani, R. L. Greenblatt, V. Mastropietro: \emph{The scaling limit of the energy correlations in non integrable Ising models }, Jour. Math. Phys. {\bf 53}, 095214 (2012).
  
\bibitem{GM} A. Giuliani, V. Mastropietro: {\it Anomalous universality in the anisotropic Ashkin-Teller model}, Comm. Math. Phys. {\bf 256}, 681-735 (2005); 
{\it Anomalous critical exponents in the anisotropic Ashkin-Teller model}, Phys. Rev. Lett. {\bf 93}, 190603 (2004).

\bibitem{GMT17a} A. Giuliani, V. Mastropietro, F. Toninelli: \emph{Height fluctuations in interacting dimers}, Ann. Inst. Henri Poincar\'e (Prob. Stat) {\bf 53} (2017), 98-168.

\bibitem{GMT17b} A. Giuliani, V. Mastropietro, F. Toninelli:  \emph{ Haldane relation for interacting dimers}, J. Stat. Mech. (2017) 034002.

  
\bibitem{Inhomo}E.  Granet, L. Budzynski, J. Dubail, J. L.  Jacobsen: \emph{Inhomogeneous Gaussian free field inside the interacting arctic curve}, J. Stat. Mech. (2019) 013102.

\bibitem{HP} O. J. Hellmann, E. Praestgaard: {\it Crystalline ordering in lattice models of hard rods with
nearest neighbour attraction}, Chem. Phys. {\bf 24}, 119-123 (1977).

\bibitem{Ha} F. D. M. Haldane, \emph{General Relation of Correlation Exponents and Spectral
Properties of One-Dimensional Fermi Systems: Application to the Anisotropic
$S = 1/2$ Heisenberg Chain}, Phys. Rev. Lett. {\bf 45}  (1980), 1358-1362
  
\bibitem{K}L. P. Kadanoff, \emph{Connections between the Critical Behavior of the Planar Model
and that of the Eight-Vertex Model}, Phys. Rev. Lett. {\bf 39} (1977), 903-905

\bibitem{Ka}
P. Kasteleyn, \emph{Graph theory and crystal physics}, in:  ``Graph Theory and Theoretical
Physics'',  pp. 43-110 Academic Press, London (1972)

\bibitem{Kenyonnotes} R. Kenyon: {\it Lectures on dimers}, Park City Math Institute Lectures, available at  arXiv:0910.3129.

\bibitem{KOS}
R. Kenyon, A. Okounkov, S. Sheffield: {\it Dimers and amoebae}, Ann. Math. {\bf 163}, 1019-1056 (2006).

\bibitem{M} V. Mastropietro, \emph{Ising models with four spin interaction at criticality}, Comm.
  Math. Phys. {\bf 244}  (2004), 595-642

\bibitem{MTassy} G. Menz, M. Tassy, \emph{A variational principle for a non-integrable model}, arXiv:1610.08103
  
\bibitem{Okou} A. Okounkov, \emph{Limit shapes, real and imagined}, Bull. of the Am. Math. Soc. {\bf 53} (2016), 187-216

\bibitem{PLF} S. Papanikolaou, E. Luijten, E. Fradkin: {\it Quantum criticality, lines of fixed points, and phase separation in doped two-dimensional
quantum dimer models}, Phys. Rev. B {\bf 76}, 134514 (2007). 

\bibitem{Rodino} L. Rodino, \emph{Linear Partial Differential Operators in Gevrey Spaces}, World Scientific, Singapore, 1993.

\bibitem{vB} H. van Beijeren, \emph{Exactly solvable model for the roughening transition
    of a crystal surface}, Phys. Rev. Lett. {\bf 38} (1977), 993-996

\end{thebibliography}
\end{document}